\DeclareSymbolFont{CMSY}{OMS}{cmsy}{m}{n}
\DeclareSymbolFont{MSA}{U}{msa}{m}{n}
\DeclareSymbolFont{MSB}{U}{msb}{m}{n}
\DeclareMathSymbol{\exists}{\mathord}{CMSY}{"39}
\DeclareMathSymbol{\vdash}{\mathrel}{CMSY}{"60}
\DeclareMathSymbol{\nvdash}{\mathrel}{MSB}{"30}
\DeclareMathSymbol{\vDash}{\mathrel}{MSA}{"0F}
\DeclareMathSymbol{\nvDash}{\mathrel}{MSB}{"32}
\DeclareMathSymbol{\Vdash}{\mathrel}{MSA}{"0D}
\DeclareMathSymbol{\nVdash}{\mathrel}{MSB}{"31}
\DeclareMathSymbol{\dashv}{\mathrel}{CMSY}{"61}
\definecolor{darkgreen}{rgb}{0.2,0.8,0.2}
\newcommand{\Sets}{\mathbf{Sets}}
\newcommand{\DD}{\mathbf{S}}
\newcommand{\2}{\mathbf{2}}
\newcommand{\bool}{\2}
\newcommand{\C}{\mathcal{C}}
\newcommand{\pw}{\mathcal{P}}
\newcommand{\pwfin}{\mathcal{P}_\mathrm{fin}}
\newcommand{\Match}[2]{\mathrm{Match}_{{#1}, {#2}}}
\newcommand{\A}{\mathcal{A}}
\newcommand{\true}{1}
\newcommand{\cmp}{\mathbin{\circ}}
\newcommand{\rest}[2]{{#1}{\mid}_{#2}}
\newcommand{\natjoin}{\mathop{\bowtie}}
\newcommand{\Simp}{\mathbf{Simp}}
\newcommand{\ndSimp}{\mathbf{ndSimp}}
\newcommand{\op}{{\mathrm{op}}}
\renewcommand{\L}{\mathcal{L}}
\newcommand{\TT}{\mathbb{T}}
\newcommand{\sPsh}{\mathbf{sPsh}}
\newcommand{\incto}{\hookrightarrow}
\newcommand{\Sub}{\mathrm{Sub}}
\newcommand{\Scott}[1]{\llbracket{#1}\rrbracket}
\newcommand{\monoto}{\rightarrowtail}
\newcommand{\ldot}{\mathpunct{.}}
\newcommand{\MM}[2][]{\mathbb{M}_{#1}(#2)}
\newcommand{\ZZ}{\mathbb{Z}}
\newcommand{\NN}{\mathbb{N}}
\newcommand{\Filt}{\mathrm{Filt}}
\newcommand{\upset}[1]{\mathop{\uparrow}{#1}}
\newcommand{\FiltMM}[2][]{\mathbb{FM}_{#1}(#2)}
\newcommand{\NewTheorem}[3]{%
	\newaliascnt{#1}{#2}
	\newtheorem{#1}[#1]{#3}
	\aliascntresetthe{#1}
	\expandafter\def\csname #1autorefname\endcsname{#3}
}
\newtheorem{theorem}{Theorem}
\theoremstyle{definition}
\title{%
Logic of Local Inference for\\
Contextuality in Quantum Physics and Beyond%
}
\author{%
Kohei Kishida%
}
\thanks{This work has been supported by the grant FA9550-12-1-0136 of the U.S. AFOSR}
\address{University of Oxford}
\email{Kohei.Kishida@cs.ox.ac.uk}
\date{}
\begin{document}

\maketitle

\begin{abstract}
Contextuality in quantum physics provides a key resource for quantum information and computation.
The topological approach in \cite{AB11,ABKLM15} characterizes contextuality as ``global inconsistency'' coupled with ``local consistency'', revealing it to be a phenomenon also found in many other fields.
This has yielded a logical method of detecting and proving the ``global inconsistency'' part of contextuality.
Our goal is to capture the other, ``local consistency'' part, which requires a novel approach to logic that is sensitive to the topology of contexts.
To achieve this, we formulate a logic of local inference by using context-sensitive theories and models in regular categories.
This provides a uniform framework for local consistency, and lays a foundation for high-level methods of detecting, proving, and moreover using contextuality as computational resource.
\end{abstract}

\section{Introduction}\label{sec:intro}

Quantum physics provides quantum computing with immense advantage over classical computing.
Among its non-classical properties, non-locality is known to be a basis for quantum communication (see \cite{NC11}).
It is in fact a special case of a more general property called \emph{contextuality}, which recent studies \cite{rau13,HWVE14} suggest is an essential source of the computational power of quantum computers.
This motivates the search for structural, higher-level expressions of contextuality that are independent of the formalism of quantum mechanics.

One conception of contextuality that originated in \cite{IB98} exploits the structure of \emph{presheaf}:
As was shown in \cite{IB98}, a certain type of contextuality of a quantum system amounts to the absence of global sections from presheaves modelling the behaviors of the system.
The recent, ``sheaf-theoretic'' approach \cite{AB11} expands this insight by viewing contextuality in more general terms, as a matter of \emph{topology} in data of measurements and outcomes:
A wider range of contextuality is then characterized as the ``global inconsistency'' of the ``locally consistent''.
This has on the one hand
shown that contextual phenomena can be found in various other fields such as relational database theory (see \cite{abr13}),
and on the other hand
made it possible to apply various tools---%
sheaf theory, cohomology, linear algebra, for instance---%
to contextuality.
One idea that has emerged is to formulate contextuality argument in logical terms \cite{ABKLM15}:
One describes a presheaf model using logical formulas, and proves its contextuality by deriving contradiction from the formulas.

This method, however, only shows the global inconsistency of a given set of formulas;
we know them to be locally consistent only because they describe a locally consistent model that is given.
Nonetheless, when designing ways of exploiting contextuality, we may well first obtain a set of formulas or a specification, and then check if there is a model satisfying it.
This requires a logic in which consistency means local consistency.
The chief goal of this paper is to deliver such a new logic of local inference.
The two logics---%
one for global inconsistency and the other for local consistency---%
together lay a foundation for high-level logical methods of not only showing but also using contextuality as resource.

\autoref{sec:model} reviews the sheaf approach to contextuality, which takes presheaves valued in $\Sets$.
Then \autoref{sec:logic} defines what we call ``inchworm logic'', a novel logic of local inference for contextual models.
We formulate this on the basis of regular logic, since its vocabulary captures the essence of local inference.
Semantics is provided for this logic in \autoref{sec:general}, where we generalize $\Sets$-valued presheaf models to ones valued in any regular category $\DD$.
This encompasses cases that prove useful and powerful in applications:
E.g., presheaves of abelian groups, $R$-modules, etc.\ serve the purpose of cohomology;
indeed, \v{C}ech cohomology is used to detect the contextuality of $\Sets$-valued presheaves \cite{ABKLM15}.
This paper gives a uniform way of using $\DD$-valued presheaves directly as models of contextual logic.

\section{Contextual Models}\label{sec:model}

We first review the idea behind the formalism of \cite{ABKLM15}, stressing that it applies to more settings than just quantum ones.
The idea captures contextuality as a matter of topological nature, which we illustrate with a simplicial formulation equivalent to the presheaf formulation of \cite{ABKLM15}.
We also present a modification of the latter that can readily be generalized in \autoref{sec:general}.

\subsection{Topological Models for Contextuality}\label{sec:model.presheaf}

The formalism of \cite{AB11} concerns variables and values in general.
Their bare-bones structure consists of a set $X$ of variables and, for each $x \in X$, a set $A_x$ of possible values of $x$.
So we have an $X$-indexed family of sets $A_x$.
This can model various settings in which we make queries against a system and it answers, as observed in \cite{abr13,ABKLM15};
e.g.,
\begin{itemize}
\item
We measure properties $x \in X$ of a physical system and it gives back outcomes $a \in A_x$.
\item
A relational database has attributes $x \in X$, and $a \in A_x$ are possible data values for $x$.
\item
$x \in X$ are sentences of propositional logic and a set of models assign to them boolean values $a \in A_x = \bool$.
Or $x \in X$ may simply be boolean variables.
\end{itemize}
We often make a query regarding several variables in combination;
a set $U \subseteq X$ of variables the query concerns forms a \emph{context} in which the system gives back a result.
Contexts play essential r\^oles in the following two kinds of constraints, (\hyperref[item:topology.total]{a}) on answers and (\hyperref[item:topology.base]{b}) on queries.

(a)
\phantomsection%
\label{item:topology.total}%
When we make a query in a context $U$, the system returns (one or a set of) tuples $s \in \prod_{x \in U} A_x$ of values.
It then has the subset $A_U \subseteq \prod_{x \in U} A_x$ of ``admissible'' tuples that can be part of query results, and it is often the information on $A_U$ that we want.
E.g.,
\begin{itemize}
\item
From a relational database we retrieve data with an attribute list $U$, and the database returns the relation $A_U$ on sets $A_x$ ($x \in U$) as a table.
\item
Given a set of models and a set $U$ of sentences, $A_U$ is the set of combinations of values that $U$ can take;
e.g., a pair $\varphi, \lnot \varphi \in U$ only take values $(1, 0)$ or $(0, 1)$.
\item
We may measure a physical system in various states and find that some set $U$ of quantities always satisfies a certain equation that characterizes $A_U$.
\end{itemize}
(A tuple $s \in \prod_{x \in U} A_x$ is a dependent function, so it is formally a set of the form $\{\, (x, s(x)) \mid x \in U$ and $s(x) \in A_x \,\}$;
but we may refer to it as ``$(s(x), s(y), \ldots\,)$ over $(x, y, \ldots\,)$''.)

(b)
\phantomsection%
\label{item:topology.base}%
We have the family $\C \subseteq \pw X$ of contexts in which queries can be made and answered.
We may not be able to make a query in a context $V \subseteq X$ (i.e.\ $V \notin \C$) for reasons such as:
\begin{itemize}
\item
$V$ may have too many variables to deal with feasibly.
\item
A database schema may have no table encompassing all the attributes in $V$.
\item
Quantum mechanics may deem it impossible to measure all the properties in $V$ at once.
\end{itemize}

In these examples, if queries can be made in a context $U$, they can be in any $V \subseteq U$;
we also assume that queries can be made in $\{ x \}$ for any $x \in X$, but only in finite $U$.
So $\C$ is an (abstract) \emph{simplicial complex} on $X$, i.e.\ a $\subseteq$-downward closed subfamily of $\pwfin X$ with $\bigcup_{U \in \C} U = X$.
Also, if a tuple $s$ of values is admissible, so is any $t \subseteq s$.
Hence, whenever $V \subseteq U \in \C$, the projection of tuples
\begin{gather*}
\rest{-}{V} : \prod_{x \in U} A_x \to \prod_{x \in V} A_x :: s \mapsto \rest{s}{V}
\end{gather*}
restricts to $A_{V \subseteq U} : A_U \to A_V$.
Thus $A : \C^\op \to \Sets$ forms a presheaf on the poset $\C$.

In fact, $A$ is a \emph{separated} presheaf.
Generally, for any subfamily $\C$ of $\pw X$ closed under binary intersection, whenever $\bigcup_i U_i = U \in \C$ for $U_i \in \C$, a presheaf $P$ on $\C$ has the map
\begin{gather*}
\langle P_{U_i \subseteq U} \rangle_i : P_U \to \prod_i P_{U_i} :: s \mapsto (\rest{s}{U_i})_i
\end{gather*}
land in the set of \emph{matching families} for $(U_i)_i$,
\begin{gather*}
\Match{(U_i)_i}{P} = \{\, (t_i)_i \in \prod_i P_{U_i} \mid \rest{t_j}{U_j \cap U_k} = \rest{t_k}{U_j \cap U_k} \text{ for every pair } j, k \,\} .
\end{gather*}
Then $P$ is called separated if each of these $\langle P_{U_i \subseteq U} \rangle_i$ is injective, and a \emph{sheaf} if each of those $\langle P_{U_i \subseteq U} \rangle_i : P_U \to \Match{(U_i)_i}{P}$ is bijective (see \cite{MM92}).
Yet, on a simplicial complex $\C$, separated presheaves and sheaves have simpler descriptions:

\begin{fact}\label{thm:separated.presheaf}
A presheaf $P$ on a simplicial complex $\C$ is a sheaf iff $P_U = \prod_{x \in U} P_x$ for all $U \in \C$.
And $P$ is separated iff it is a subpresheaf of a sheaf, i.e.\ iff $P_U \subseteq \prod_{x \in U} P_x$ for all $U \in \C$.
\end{fact}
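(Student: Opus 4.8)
The plan is to reduce every instance of the sheaf and separatedness conditions to a single canonical covering family. Since $\C$ is nonempty and $\subseteq$-downward closed, it contains $\emptyset$ and all vertices $\{x\}$ with $x \in U \in \C$; in particular each $U \in \C$ carries the cover $(\{x\})_{x \in U}$ by its vertices. The crucial feature of this cover is that distinct vertices meet trivially, $\{x\} \cap \{y\} = \emptyset$ for $x \neq y$, so that---once $P_\emptyset$ is known to be terminal---the matching condition across distinct vertices is vacuous and $\Match{(\{x\})_{x \in U}}{P}$ collapses to the full product $\prod_{x \in U} P_{\{x\}} = \prod_{x \in U} P_x$. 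Under this collapse the comparison map $\langle P_{\{x\} \subseteq U}\rangle_x$ is exactly $P_U \to \prod_{x \in U} P_x$, $s \mapsto (\rest{s}{\{x\}})_x$. This identification drives both halves of the statement.

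For the sheaf characterization, I would first record that $P_\emptyset$ is terminal whenever $P$ is a sheaf: the empty family covers $\emptyset \in \C$, its set of matching families is the empty product (a singleton), and the sheaf condition then forces $P_\emptyset$ to be a singleton. The forward direction is then immediate, since applying the sheaf condition to the vertex cover of each $U$ yields a bijection $P_U \to \Match{(\{x\})_{x \in U}}{P} = \prod_{x \in U} P_x$. For the converse I would assume the product description, under which functoriality makes every restriction $P_{V \subseteq U}$ the projection $\prod_{x \in U} P_x \to \prod_{x \in V} P_x$, and then check gluing against an arbitrary cover $(U_i)_i$ of $U$: a matching family $(t_i)_i$ has components that agree on overlaps, so $s(x) := t_i(x)$ for any $i$ with $x \in U_i$ is well defined, belongs to $P_U = \prod_{x \in U} P_x$ because $\bigcup_i U_i = U$, and is the unique element restricting to every $t_i$.

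For the separatedness part I would run the cycle: $P$ separated $\Rightarrow$ the comparison map $P_U \to \prod_{x \in U} P_x$ is injective for every $U$ (that is, $P_U \subseteq \prod_{x \in U} P_x$) $\Rightarrow$ $P$ is a subpresheaf of a sheaf $\Rightarrow$ $P$ is separated. The first implication is the vertex-cover instance of separatedness, since the comparison map is the map $\langle P_{\{x\} \subseteq U}\rangle_x$ whose injectivity separatedness asserts. For the second, I would form the presheaf $S$ with $S_U = \prod_{x \in U} P_{\{x\}}$ and projections as restrictions, which is a sheaf by the first half; the comparison maps assemble into a presheaf morphism $P \to S$ (naturality is again functoriality), and their injectivity exhibits $P$ as a subpresheaf of $S$. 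The last implication is formal: a sheaf is separated, and a subpresheaf inherits separatedness because its comparison maps are restrictions of injective ones.

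The computations throughout are routine, and the only real content is the vertex-cover observation of the first paragraph. The step demanding the most care is the converse of the sheaf characterization, where gluing must be verified against \emph{every} cover rather than just the vertex cover; but the product description reduces this to the evident well-definedness of the componentwise glue. The point easiest to overlook is the status of $P_\emptyset$: without knowing it terminal, the matching families for the vertex cover would be a proper subset of the product, which is exactly why the empty-cover argument is placed first.
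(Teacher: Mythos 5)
Your proof is correct, and its skeleton coincides with the paper's. The paper disposes of this Fact in one line, as the $\DD = \Sets$ instance of the general \autoref{thm:separated.presheaf.general}, and the appendix proof of that general fact runs exactly along your lines: the empty cover of $\varnothing$ forces $P_\varnothing = 1$; the vertex cover $\bigcup_{x \in U} \{ x \} = U$ then identifies $P_U$ with $\prod_{x \in U} P_x$; the converse checks gluing for an arbitrary cover using the product description; and the separatedness claims are settled by the same cycle (separated $\Rightarrow$ vertex comparison injective $\Rightarrow$ subpresheaf of the sheaf $U \mapsto \prod_{x \in U} P_x$ $\Rightarrow$ separated, the last step because a subpresheaf's comparison maps are restrictions of monic ones). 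The only genuine difference is presentation: you argue element-wise in $\Sets$, where your observation that $\Match{(\{x\})_{x \in U}}{P}$ collapses to the full product once $P_\varnothing$ is a singleton appears in the paper as the fact that the two parallel arrows $\prod_{x \in U} P_x \rightrightarrows \prod_{y, z \in U} P_{\{y\} \cap \{z\}}$ coincide (their off-diagonal components being maps into $P_\varnothing = 1$), so that the equalizer is an iso; likewise your componentwise glue $s(x) := t_i(x)$ is the set-level form of the paper's unique factorization of any equalizing arrow through $\langle \langle p_x \rangle_{x \in U_i} \rangle_i$. What the categorical phrasing buys is the $\DD$-valued generality the paper needs in \autoref{sec:general}; what your version buys is a self-contained elementary argument that makes fully explicit the point you rightly flag as easiest to overlook, namely that $P_\varnothing$ must be known terminal before the matching condition across disjoint vertices becomes vacuous.
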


This shows that our $A$ above is a separated presheaf, but not generally a sheaf.
So let us write $\sPsh(\C)$ for the full subcategory of $\Sets^{\C^\op}$ of separated presheaves.
Note that every sheaf $F$ has $F_\varnothing = 1$, a singleton.
A separated presheaf $P$ has $P_\varnothing = 1$, too, unless it is the empty presheaf $U \mapsto \varnothing$, i.e.\ the model is inconsistent in every context (hence modelling, e.g., a physical system that never produces outcomes in any context of measurements).

\subsection{Presheaves and Bundles}\label{sec:model.equivalence}

The constraints (\hyperref[item:topology.total]{a}) and (\hyperref[item:topology.base]{b}) above are, indeed, matters of topology;
this idea will be useful in \autoref{sec:model.contextuality}.
Given a separated presheaf $A$ as in \autoref{sec:model.presheaf}, its underlying family of $X$-indexed sets $(A_x)_{x \in X}$ is equivalent to a set over $X$, viz.\ $\pi : \sum_{x \in X} A_x \to X :: (x, a) \mapsto x$, by $\Sets^X \simeq \Sets / X$.
The base $X$ comes with a simplicial complex $\C$, but so does $\sum_{x \in X} A_x$, taking tuples $s \in A_U$ as simplices, i.e.\ $\A = \bigcup_{U \in \C} A_U$.
And $\pi$ is a \emph{simplicial map}, or a \emph{bundle} of simplicial complexes, since $s \in A_U \subseteq \A$ implies $\pi[s] = U \in \C$.
On the other hand, any given bundle $\pi : \A \to \C$ has a family of $A_U = \{\, s \in \A \mid \pi[s] = U \,\}$ and $A_{V \subseteq U} : s \mapsto \rest{s}{V}$.

A simplicial map $\pi : \A \to \C$ is called \emph{non-degenerate} if $\rest{\pi}{s}$ is injective for every $s \in \A$.
Our $\pi$ above is non-degenerate, because every $s \in A_U$ is a \emph{local section} of the bundle $\pi$, meaning $s : U \to \sum_{x \in X} A_x$ such that $\pi \cmp s = 1_U$.
Let us write $\Simp$ and $\ndSimp$ for the categories of simplicial maps and of non-degenerate ones, respectively.
It is easy to check
that
%
%
for every simplicial complex $\C$, the slice category $\ndSimp / \C$ is a full subcategory of $\Simp / \C$;
i.e., it is the category of non-degenerate bundles and simplicial maps over $\C$.
%
%
Then, extending $\Sets^X \simeq \Sets / X$, the correspondence described above gives

\begin{fact}\label{thm:bundle.persheaf.basic}
$\sPsh(\C) \simeq \ndSimp / C$ for any simplicial complex $\C$.
\end{fact}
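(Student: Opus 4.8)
The plan is to produce an explicit pair of mutually inverse (up to natural isomorphism) functors realising the object correspondence already described. Define $\Phi : \sPsh(\C) \to \ndSimp / \C$ by sending a separated presheaf $P$ to the bundle $\pi_P : \A_P \to \C$, where $\A_P = \bigcup_{U \in \C} P_U$ is the simplicial complex on the vertex set $\sum_{x \in X} P_x$ whose simplices are the local sections $s \in P_U$, and $\pi_P$ sends $s \in P_U$ to $\pi_P[s] = U$; on a natural transformation $\phi : P \to Q$, let $\Phi\phi$ be the simplicial map acting on vertices by $(x, a) \mapsto (x, \phi_{\{x\}}(a))$. In the other direction, define $\Psi : \ndSimp / \C \to \sPsh(\C)$ by sending a bundle $\pi : \A \to \C$ to the presheaf with $(\Psi\A)_U = \{\, s \in \A \mid \pi[s] = U \,\}$ and restriction maps $s \mapsto \rest{s}{V}$ for $V \subseteq U$, and a simplicial map $f$ over $\C$ to the natural transformation with components $s \mapsto f[s]$. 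Since $\ndSimp / \C$ is a full subcategory of $\Simp / \C$, I may treat its morphisms as arbitrary simplicial maps over $\C$.

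First I would check that each functor lands in the asserted category and is functorial. For $\Phi$, non-degeneracy of $\pi_P$ is immediate, since every $s \in P_U$ is a local section and so $\rest{\pi_P}{s}$ is injective. For $\Psi$, the hypothesis of non-degeneracy is exactly what makes the restriction maps well-defined in the first place: as $\rest{\pi}{s} : s \to U$ is then a bijection, each $V \subseteq U$ picks out the unique face $(\rest{\pi}{s})^{-1}[V]$ of $s$ lying over $V$. The same hypothesis, via \autoref{thm:separated.presheaf}, shows that $\Psi\A$ is separated, because the map $s \mapsto (\rest{s}{\{x\}})_{x \in U}$ embeds $(\Psi\A)_U$ into $\prod_{x \in U}(\Psi\A)_x$. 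Functoriality on morphisms is then routine: naturality of $\Psi f$ is precisely that $f$ preserves faces, and the fact that $\Phi\phi$ sends simplices to simplices is precisely the naturality of $\phi$ at the inclusions $\{x\} \subseteq U$.

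Finally I would verify the two natural isomorphisms. The composite $\Psi\Phi$ takes $P$ to the presheaf with value $\{\, s \in \A_P \mid \pi_P[s] = U \,\} = P_U$ at $U$, recovering $P$ up to the evident relabelling of vertices by their first coordinate. For $\Phi\Psi$, starting from $\pi : \A \to \C$ one reconstructs the complex whose simplices are the vertex-tuples $(\rest{s}{\{x\}})_{x \in U}$ of the original simplices $s$; this is where the hypothesis does the real work, since non-degeneracy is exactly what makes $s \mapsto (\rest{s}{\{x\}})_{x \in U}$ a bijection and hence identifies the reconstructed bundle with the original one over $\C$. I expect this round-trip --- confirming that a non-degenerate complex is faithfully recoverable from its families of local sections --- to be the main point, the remainder reducing to the vertex-level equivalence $\Sets^X \simeq \Sets / X$ together with the characterisation of separatedness in \autoref{thm:separated.presheaf}.
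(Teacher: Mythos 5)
Your proposal is correct and takes essentially the same route as the paper's own proof: the same pair of functors ($P \mapsto \bigcup_{U \in \C} P_U$ over $\sum_{x \in X} P_x$ one way, $\pi \mapsto (\{\, s \in \A \mid \pi[s] = U \,\})_{U \in \C}$ the other), the same vertex-level action on morphisms, and the same key observations that non-degeneracy makes simplices into local sections (giving separatedness via \autoref{thm:separated.presheaf}) and that naturality corresponds to preservation of faces. The paper likewise treats the round-trip isomorphisms as evident once these checks are done, so your appeal to the fullness of $\ndSimp/\C$ in $\Simp/\C$ and to $\Sets^X \simeq \Sets/X$ is consistent with its argument.
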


So here is a topological reading of (\hyperref[item:topology.total]{a}) and (\hyperref[item:topology.base]{b}).
Each $U \in \C$ is a local, small enough region of the space $X$ of variables.
The topology on the space $\A$ of values then distinguishes those tuples $s \in \prod_{x \in U} A_x$ in $A_U$ from the others and deems the former to be continuous sections.
We refer to objects of $\sPsh(\C)$ and $\ndSimp / C$ interchangeably as \emph{topological models}.

\subsection{Contextuality in Physics, Databases, and More}\label{sec:model.contextuality}

Since queries can only be made locally, i.e.\ in contexts $U \in \C$, answers to queries can only be observed locally in those contexts.
One might think this is just a matter of convenience or efficiency, since we can simply make multiple queries in multiple contexts $U_i$ to cover $X = \bigcup_i U_i$ and to recover the global information.
In certain topologies $\C$, however, this may be prevented due to contextuality.

Given a non-degenerate bundle $\pi : \A \to \C$ over a simplicial complex $\C$ on $X$, consider a \emph{global section} of it, i.e.\ $g \in \prod_{x \in X} A_x$ such that $\rest{g}{U} \in A_U$ for all $U \in \C$;
it corresponds to a matching family $g \in \Match{\C}{P}$, since $U \mapsto \prod_{x \in U} A_x$ ($U \subseteq X$) is a sheaf.
It is an assignment of values to all the variables that satisfies every constraint on combinations of values.
E.g.,
in classical logic, the models are exactly the global sections;
so the consistency of a sentence $x$ means that $(x, \true)$ is part of a global section.
Then, in the physical setting, it may seem natural to similarly think of global sections $g$ as states of the system, assigning values to all the quantities---%
so, although we can only make a query locally in a context $U \in \C$, the system in a state $g$ actually has a value $g(x)$ assigned to every quantity $x$, and the answer we receive in the context $U$ is simply $\rest{g}{U}$.
This assumption, that any section we observe is part of a context-independent global section, holds not just in classical logic but also in classical physics---%
but breaks down in quantum physics, precisely when contextuality arises.

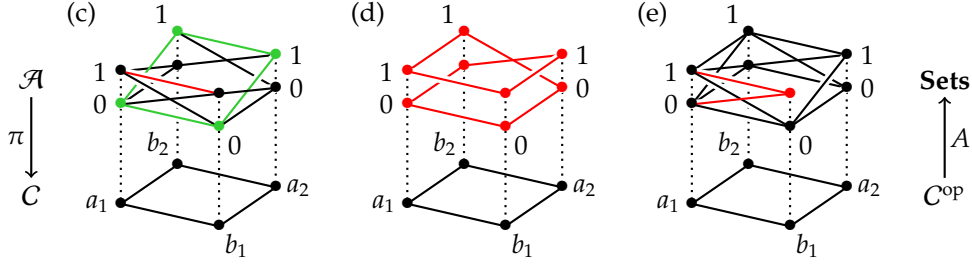
\begin{figure}
\centering
\begin{tikzpicture}[x=25pt,y=25pt,thick,label distance=-0.25em,baseline=(O.base)]
\coordinate [overlay] (O) at (0,0);
\coordinate [overlay] (T) at (0,1.5);
\coordinate [overlay] (u) at (0,0.5);
\coordinate (v0) at ($ ({-cos(1*pi/12 r)*1.2},{-sin(1*pi/12 r)*0.48}) $);
\coordinate (v1) at ($ ({-cos(7*pi/12 r)*1.2},{-sin(7*pi/12 r)*0.48}) $);
\coordinate (v2) at ($ ({-cos(13*pi/12 r)*1.2},{-sin(13*pi/12 r)*0.48}) $);
\coordinate (v3) at ($ ({-cos(19*pi/12 r)*1.2},{-sin(19*pi/12 r)*0.48}) $);
\coordinate (v0-1) at ($ (v0) + (T) $);
\coordinate (v0-0) at ($ (v0-1) + (u) $);
\coordinate (v1-1) at ($ (v1) + (T) $);
\coordinate (v1-0) at ($ (v1-1) + (u) $);
\coordinate (v2-1) at ($ (v2) + (T) $);
\coordinate (v2-0) at ($ (v2-1) + (u) $);
\coordinate (v3-1) at ($ (v3) + (T) $);
\coordinate (v3-0) at ($ (v3-1) + (u) $);
\draw (v0) -- (v1) -- (v2) -- (v3) -- (v0);
\draw [dotted] (v0-0) -- (v0);
\draw [dotted] (v1-0) -- (v1);
\draw [dotted] (v2-0) -- (v2);
\draw [dotted] (v3-0) -- (v3);
\node [inner sep=0.1em] (v0') at (v0) {$\bullet$};
\node [anchor=east,inner sep=0em] at (v0'.west) {$a_1$};
\node [inner sep=0.1em,label={[label distance=-0.625em]330:{$b_1$}}] at (v1) {$\bullet$};
\node [inner sep=0.1em] (v2') at (v2) {$\bullet$};
\node [anchor=west,inner sep=0em] at (v2'.east) {$a_2$};
\node [inner sep=0.1em,label={[label distance=-0.5em]175:{$b_2$}}] at (v3) {$\bullet$};
\draw [line width=3.2pt,white] (v0-0) -- (v3-1);
\draw [line width=3.2pt,white] (v0-1) -- (v3-0);
\draw [line width=3.2pt,white] (v0-1) -- (v3-1);
\draw (v0-0) -- (v3-1);
\draw [darkgreen] (v0-1) -- (v3-0);
\draw (v0-1) -- (v3-1);
\draw [line width=3.2pt,white] (v2-0) -- (v3-0);
\draw [line width=3.2pt,white] (v2-0) -- (v3-1);
\draw [line width=3.2pt,white] (v2-1) -- (v3-0);
\draw [darkgreen] (v2-0) -- (v3-0);
\draw (v2-0) -- (v3-1);
\draw (v2-1) -- (v3-0);
\draw [line width=3.2pt,white] (v0-0) -- (v1-0);
\draw [line width=3.2pt,white] (v0-0) -- (v1-1);
\draw [line width=3.2pt,white] (v0-1) -- (v1-0);
\draw [line width=3.2pt,white] (v0-1) -- (v1-1);
\draw [red] (v0-0) -- (v1-0);
\draw (v0-0) -- (v1-1);
\draw (v0-1) -- (v1-0);
\draw [darkgreen] (v0-1) -- (v1-1);
\draw [line width=3.2pt,white] (v2-0) -- (v1-1);
\draw [line width=3.2pt,white] (v2-1) -- (v1-0);
\draw [line width=3.2pt,white] (v2-1) -- (v1-1);
\draw [darkgreen] (v2-0) -- (v1-1);
\draw (v2-1) -- (v1-0);
\draw (v2-1) -- (v1-1);
\node [inner sep=0.1em,label=left:{$1$}] at (v0-0) {$\bullet$};
\node [inner sep=0.1em,label=left:{$0$},darkgreen] at (v0-1) {$\bullet$};
\node [inner sep=0.1em] at (v1-0) {$\bullet$};
\node [inner sep=0.1em,label={[label distance=-0.5em]330:{$0$}},darkgreen] at (v1-1) {$\bullet$};
\node [inner sep=0.1em,label=right:{$1$},darkgreen] at (v2-0) {$\bullet$};
\node [inner sep=0.1em,label=right:{$0$}] at (v2-1) {$\bullet$};
\node [inner sep=0.1em,label={[label distance=-0.5em]150:{$1$}},darkgreen] at (v3-0) {$\bullet$};
\node [inner sep=0.1em] at (v3-1) {$\bullet$};
\node [inner sep=0em] at (-1.75,2.75) {\strut(c)};
\node (C) [inner sep=0.25em] at (-2.5,0) {$\C$};
\node (A) [inner sep=0.25em] at (-2.5,1.75) {$\A$};
\draw [->] (A) -- (C) node [pos=0.5,inner sep=2pt,left] {$\pi$};
\end{tikzpicture}
\hspace{12pt}%
\begin{tikzpicture}[x=25pt,y=25pt,thick,label distance=-0.25em,baseline=(O.base)]
\coordinate [overlay] (O) at (0,0);
\coordinate [overlay] (T) at (0,1.5);
\coordinate [overlay] (u) at (0,0.5);
\coordinate (v0) at ($ ({-cos(1*pi/12 r)*1.2},{-sin(1*pi/12 r)*0.48}) $);
\coordinate (v1) at ($ ({-cos(7*pi/12 r)*1.2},{-sin(7*pi/12 r)*0.48}) $);
\coordinate (v2) at ($ ({-cos(13*pi/12 r)*1.2},{-sin(13*pi/12 r)*0.48}) $);
\coordinate (v3) at ($ ({-cos(19*pi/12 r)*1.2},{-sin(19*pi/12 r)*0.48}) $);
\coordinate (v0-1) at ($ (v0) + (T) $);
\coordinate (v0-0) at ($ (v0-1) + (u) $);
\coordinate (v1-1) at ($ (v1) + (T) $);
\coordinate (v1-0) at ($ (v1-1) + (u) $);
\coordinate (v2-1) at ($ (v2) + (T) $);
\coordinate (v2-0) at ($ (v2-1) + (u) $);
\coordinate (v3-1) at ($ (v3) + (T) $);
\coordinate (v3-0) at ($ (v3-1) + (u) $);
\draw (v0) -- (v1) -- (v2) -- (v3) -- (v0);
\draw [dotted] (v0-0) -- (v0);
\draw [dotted] (v1-0) -- (v1);
\draw [dotted] (v2-0) -- (v2);
\draw [dotted] (v3-0) -- (v3);
\node [inner sep=0.1em] (v0') at (v0) {$\bullet$};
\node [anchor=east,inner sep=0em] at (v0'.west) {$a_1$};
\node [inner sep=0.1em,label={[label distance=-0.625em]330:{$b_1$}}] at (v1) {$\bullet$};
\node [inner sep=0.1em] (v2') at (v2) {$\bullet$};
\node [anchor=west,inner sep=0em] at (v2'.east) {$a_2$};
\node [inner sep=0.1em,label={[label distance=-0.5em]175:{$b_2$}}] at (v3) {$\bullet$};
\draw [line width=3.2pt,white] (v0-0) -- (v3-0);
\draw [line width=3.2pt,white] (v0-1) -- (v3-1);
\draw [red] (v0-0) -- (v3-0);
\draw [red] (v0-1) -- (v3-1);
\draw [line width=3.2pt,white] (v2-0) -- (v3-1);
\draw [line width=3.2pt,white] (v2-1) -- (v3-0);
\draw [red] (v2-0) -- (v3-1);
\draw [red] (v2-1) -- (v3-0);
\draw [line width=3.2pt,white] (v0-0) -- (v1-0);
\draw [line width=3.2pt,white] (v0-1) -- (v1-1);
\draw [red] (v0-0) -- (v1-0);
\draw [red] (v0-1) -- (v1-1);
\draw [line width=3.2pt,white] (v2-0) -- (v1-0);
\draw [line width=3.2pt,white] (v2-1) -- (v1-1);
\draw [red] (v2-0) -- (v1-0);
\draw [red] (v2-1) -- (v1-1);
\node [inner sep=0.1em,label=left:{$1$},red] at (v0-0) {$\bullet$};
\node [inner sep=0.1em,label=left:{$0$},red] at (v0-1) {$\bullet$};
\node [inner sep=0.1em,red] at (v1-0) {$\bullet$};
\node [inner sep=0.1em,label={[label distance=-0.5em]330:{$0$}},red] at (v1-1) {$\bullet$};
\node [inner sep=0.1em,label=right:{$1$},red] at (v2-0) {$\bullet$};
\node [inner sep=0.1em,label=right:{$0$},red] at (v2-1) {$\bullet$};
\node [inner sep=0.1em,label={[label distance=-0.5em]150:{$1$}},red] at (v3-0) {$\bullet$};
\node [inner sep=0.1em,red] at (v3-1) {$\bullet$};
\node [inner sep=0em] at (-1.75,2.75) {\strut(d)};
\end{tikzpicture}
\hspace{12pt}%
\begin{tikzpicture}[x=25pt,y=25pt,thick,label distance=-0.25em,baseline=(O.base)]
\coordinate [overlay] (O) at (0,0);
\coordinate [overlay] (T) at (0,1.5);
\coordinate [overlay] (u) at (0,0.5);
\coordinate (v0) at ($ ({-cos(1*pi/12 r)*1.2},{-sin(1*pi/12 r)*0.48}) $);
\coordinate (v1) at ($ ({-cos(7*pi/12 r)*1.2},{-sin(7*pi/12 r)*0.48}) $);
\coordinate (v2) at ($ ({-cos(13*pi/12 r)*1.2},{-sin(13*pi/12 r)*0.48}) $);
\coordinate (v3) at ($ ({-cos(19*pi/12 r)*1.2},{-sin(19*pi/12 r)*0.48}) $);
\coordinate (v0-1) at ($ (v0) + (T) $);
\coordinate (v0-0) at ($ (v0-1) + (u) $);
\coordinate (v1-1) at ($ (v1) + (T) $);
\coordinate (v1-0) at ($ (v1-1) + (u) $);
\coordinate (v2-1) at ($ (v2) + (T) $);
\coordinate (v2-0) at ($ (v2-1) + (u) $);
\coordinate (v3-1) at ($ (v3) + (T) $);
\coordinate (v3-0) at ($ (v3-1) + (u) $);
\draw (v0) -- (v1) -- (v2) -- (v3) -- (v0);
\draw [dotted] (v0-0) -- (v0);
\draw [dotted] (v1-0) -- (v1);
\draw [dotted] (v2-0) -- (v2);
\draw [dotted] (v3-0) -- (v3);
\node [inner sep=0.1em] (v0') at (v0) {$\bullet$};
\node [anchor=east,inner sep=0em] at (v0'.west) {$a_1$};
\node [inner sep=0.1em,label={[label distance=-0.625em]330:{$b_1$}}] at (v1) {$\bullet$};
\node [inner sep=0.1em] (v2') at (v2) {$\bullet$};
\node [anchor=west,inner sep=0em] at (v2'.east) {$a_2$};
\node [inner sep=0.1em,label={[label distance=-0.5em]175:{$b_2$}}] at (v3) {$\bullet$};
\draw [line width=3.2pt,white] (v0-0) -- (v3-0);
\draw [line width=3.2pt,white] (v0-0) -- (v3-1);
\draw [line width=3.2pt,white] (v0-1) -- (v3-0);
\draw [line width=3.2pt,white] (v0-1) -- (v3-1);
\draw (v0-0) -- (v3-0);
\draw (v0-0) -- (v3-1);
\draw (v0-1) -- (v3-0);
\draw (v0-1) -- (v3-1);
\draw [line width=3.2pt,white] (v2-0) -- (v3-0);
\draw [line width=3.2pt,white] (v2-0) -- (v3-1);
\draw [line width=3.2pt,white] (v2-1) -- (v3-0);
\draw [line width=3.2pt,white] (v2-1) -- (v3-1);
\draw (v2-0) -- (v3-0);
\draw (v2-0) -- (v3-1);
\draw (v2-1) -- (v3-0);
\draw (v2-1) -- (v3-1);
\draw [line width=3.2pt,white] (v0-0) -- (v1-0);
\draw [line width=3.2pt,white] (v0-0) -- (v1-1);
\draw [line width=3.2pt,white] (v0-1) -- (v1-0);
\draw [line width=3.2pt,white] (v0-1) -- (v1-1);
\draw [red] (v0-0) -- (v1-0);
\draw (v0-0) -- (v1-1);
\draw [red] (v0-1) -- (v1-0);
\draw (v0-1) -- (v1-1);
\draw [line width=3.2pt,white] (v2-0) -- (v1-1);
\draw [line width=3.2pt,white] (v2-1) -- (v1-1);
\draw (v2-0) -- (v1-1);
\draw (v2-1) -- (v1-1);
\node [inner sep=0.1em,label=left:{$1$}] at (v0-0) {$\bullet$};
\node [inner sep=0.1em,label=left:{$0$}] at (v0-1) {$\bullet$};
\node [inner sep=0.1em,red] at (v1-0) {$\bullet$};
\node [inner sep=0.1em,label={[label distance=-0.5em]330:{$0$}}] at (v1-1) {$\bullet$};
\node [inner sep=0.1em,label=right:{$1$}] at (v2-0) {$\bullet$};
\node [inner sep=0.1em,label=right:{$0$}] at (v2-1) {$\bullet$};
\node [inner sep=0.1em,label={[label distance=-0.5em]150:{$1$}}] at (v3-0) {$\bullet$};
\node [inner sep=0.1em] at (v3-1) {$\bullet$};
\node [inner sep=0em] at (-1.75,2.75) {\strut(e)};
\node (C) [inner sep=0.25em] at (2.625,0) {$\C^\op$};
\node (A) [inner sep=0.25em] at (2.625,1.75) {$\Sets$};
\draw [->] (C) -- (A) node [pos=0.5,inner sep=2pt,right] {$A$};
\end{tikzpicture}
\caption{Bundles for (c) the Hardy model and (d) the PR-box}
\label{fig:Hardy.PR}%
\label{fig:no-signalling}%
\label{pic:Hardy}%
\label{pic:PR}%
\label{pic:signal}%
\end{figure}
\autoref{fig:Hardy.PR} shows ``Bell-type'' scenarios in which Alice and Bob measure properties of a system, perhaps a quantum one.
The base $\C$ expresses constraints of type (\hyperref[item:topology.base]{b}) above:
Alice can make at most one of two measurements $a_1$ and $a_2$ at a time, so she chooses one;
similarly Bob chooses from $b_1$ and $b_2$---%
so there are four possible combinations of measurements, indicated by the four edges of $\C$.
Alice and Bob repeat measurements in different contexts, and learn that each $x \in X = \{ a_1, a_2, b_1, b_2 \}$ has two possible outcomes $0$ and $1$, but that some combinations of outcomes are never obtained.
$\A$ expresses these constraints, of type (\hyperref[item:topology.total]{a}), with edges indicating possible combinations.
E.g., $\A$ of (\hyperref[pic:Hardy]{c}) deems every joint outcome of $(a_1, b_1)$ possible, with $A_{\{ a_1, b_1 \}} = \2 \times \2$;
but $(0, 0)$ is not a possible joint outcome of $(a_2, b_2)$.

The models in \autoref{fig:Hardy.PR} all violate the classical assumption above, and are examples of

\begin{definition}\label{def:contextuality.sets}
A topological model is said to be \emph{logically contextual} if not all of its local sections extend to global ones, and \emph{strongly contextual} if it has no global section at all (see \cite{ABKLM15}).
\end{definition}

(\hyperref[pic:Hardy]{c}) of \autoref{fig:Hardy.PR} represents an example of logical contextuality due to \cite{har93} that is realizable in quantum physics.
It has several global sections, e.g.\ the one marked in green;
call it $g$.
So, when Alice and Bob measure $(a_1, b_1)$ and observe $(0, 0)$, the classical explanation is possible
that the system was in the state $g$ and had outcomes $g(x)$ assigned to all the measurements $x \in X$,
and
that Alice and Bob have simply retrieved that information on $U$.
On the other hand, the local section in red, $(1, 1)$ over $(a_1, b_1)$, does not extend to any global section.
This means that the classical explanation is simply impossible for this joint outcome.
Furthermore, the classical explanation is never possible in the strongly contextual (\hyperref[pic:PR]{d}).
This model, called the \emph{PR box} \cite{PR94}, is not quantum-realizable (though it plays an important r\^ole in the quantum information literature), but quantum physics exhibits many other instances of strong contextuality.

The upshot is that contextuality consists in \emph{global inconsistency coupled with local consistency}:
A section $s \in A_U$ is consistent locally, in the sense of satisfying the constraint on query results in the context $U$, but it may be inconsistent globally, in the sense of contradicting all the other constraints and thereby failing to extend to a global section.

The general definition of contextuality in terms of global sections can also be applied to relational databases:
Contextuality then corresponds exactly to the absence of a universal relation \cite{abr13}.
In fact, the natural join
\begin{gather*}
\natjoin_{U \in \C} A_U = \{\, g \in \prod_{x \in X} A_x \mid \rest{g}{U} \in A_U \text{ for all } U \in \C \,\}
\end{gather*}
of relations $A_U$ (which is the largest of universal relations if there are any) is, simply by definition, the set of global sections.
Hence, e.g.\ in \autoref{fig:Hardy.PR}, the sections in red are lost from $\natjoin_{U \in \C} A_U$.

\subsection{No-Signalling Principle}\label{sec:model.no-signalling}

Local consistency means, partly, that a local section may exist without extending to global.
Yet it involves more---%
viz.\ a constraint that is called the \emph{no-signalling} principle in the physical setting \cite{GRW80}.
For a topological model $A$, it amounts to the condition that every $A_{U \subseteq V} : A_V \to A_U :: s \mapsto \rest{s}{U}$ (i.e.\ the projection of admitted tuples, or the restriction of sections) is a surjection.

An example violating no-signalling is (\hyperref[pic:signal]{e}) of \autoref{fig:no-signalling}:
$A_{\{ b_1 \} \subseteq \{ a_2, b_1 \}} : A_{\{ a_2, b_1 \}} \to A_{\{ b_1 \}}$ is not surjective.
Suppose Alice and Bob make measurements, Bob chooses to measure $b_1$, and he observes $1$, which is not in the image of $A_{\{ b_1 \} \subseteq \{ a_2, b_1 \}}$.
This means that Bob has received the signal from Alice (no matter how far away she may be!)\ that she has chosen $a_1$ and not $a_2$.

To see why no-signalling should be part of local consistency, regard $\A$ in (\hyperref[pic:signal]{e}) as representing a relational database.
It has tables $A_{\{ a_1, b_1 \}}$ and $A_{\{ a_2, b_1 \}}$;
but, when queried about the attribute $b_1$, they yield different results of projection, differing in whether $1$ is in or not.
Thus, no-signalling means the consistency of projections (see \cite{abr13}).
Indeed, as we will see in \autoref{sec:general.interpretation}, no-signalling means a sort of coherence of $A$ as a semantic model of logic.

\begin{definition}\label{def:no-signalling.sets}
We say that
a separated presheaf $A : \C^\op \to \Sets$ is no-signalling if it satisfies \eqref{item:no-signalling.presheaf},
and that
a non-degenerate bundle $\pi : \A \to \C$ is no-signalling if it satisfies \eqref{item:no-signalling.bundle}:
\begin{enumerate}
\setcounter{enumi}{\value{equation}}
\item\label{item:no-signalling.presheaf}
Every $A_{U \subseteq V} : A_V \to A_U$ is a surjection.
\item\label{item:no-signalling.bundle}
If $\pi[s] \subseteq U$ for $s \in \A$ and $U \in \C$, then there is some $t \in \A$ such that $s \subseteq t$ and $\pi[t] = U$.
\setcounter{equation}{\value{enumi}}
\end{enumerate}
\end{definition}

Clearly, \eqref{item:no-signalling.presheaf} and \eqref{item:no-signalling.bundle} coincide via $\sPsh(\C) \simeq \ndSimp / \C$ (\autoref{thm:bundle.persheaf.basic}).
Hence their full subcategories of no-signalling models are equivalent.
Note that \eqref{item:no-signalling.presheaf} or \eqref{item:no-signalling.bundle} implies $A_U \neq \varnothing$ for all $U \in \C$, if $A_\varnothing \neq \varnothing$.
So, while the empty model is no-signalling, all the other, nonempty no-signalling models (which are, essentially, the ``empirical models'' of \cite{ABKLM15}) are locally consistent.

\section{Contextual Logics}\label{sec:logic}

After reviewing a kind of contextuality argument for topological models, we explain why the logic of such argument is \emph{not} supposed to be sound with respect to those models, and why we need another logic, viz.\ the logic of topological models.
We then introduce our candidate for such a logic.

\subsection{Contextuality Argument:\ Logic of Global Inconsistency}\label{sec:logic.global}

Viewing $A_U$ as representing a constraint on assignments of values to variables $x \in U$, we can describe a topological model $A$ using formulas in contexts $U \in \C$ of variables.
E.g., the assignments of $(0, 0)$ and $(1, 1)$ to $(x, y)$ satisfy the equation $x \oplus y = 0$, where $\oplus$ is for XOR, i.e.\ addition modulo $2$;
the assignments $(0, 1)$ and $(1, 0)$ satisfy $x \oplus y = 1$.
Therefore the PR box, (\hyperref[pic:PR]{d}) of \autoref{fig:Hardy.PR}, satisfies the following set of equations:
\begin{gather}\label{eq:AvN.PR}
a_1 \oplus b_1 = 0 , \mspace{18mu}
a_1 \oplus b_2 = 0 , \mspace{18mu}
a_2 \oplus b_1 = 0 , \mspace{18mu}
a_2 \oplus b_2 = 1
\end{gather}
These are in fact inconsistent:
Their right-hand sides sum to $1$, but the left to $0$ regardless of the values of variables (since each variable occurs twice).
This is to say that no global assignment of values satisfies all the constraints of $A_U$, i.e., that $A$ is strongly contextual.

A family of arguments of this sort, using XOR (or parity) equations, has been given to show the strong contextuality of a range of quantum examples;
the first instance in literature was in \cite{mer90} for the GHZ state \cite{GHZ89}.
This sort of so-called ``all-vs-nothing argument'' was formalized and generalized in \cite{ABKLM15}.
On the other hand, one may also adopt more expressive languages, such as Boolean formulas, to express a wider range of constraints.

Formulas can also be used to show logical (and not strong) contextuality.
E.g., the Hardy model, (\hyperref[pic:Hardy]{c}) of \autoref{fig:Hardy.PR}, satisfies the antecedents of
\begin{gather}\label{eq:AvN.Hardy}
\lnot a_1 \vee \lnot b_2 , \mspace{18mu}
\lnot a_2 \vee \lnot b_1 , \mspace{18mu}
a_2 \vee b_2 \mspace{18mu}
\vdash \mspace{18mu}
\lnot a_1 \vee \lnot b_1
\end{gather}
but not the consequent, due to the contextual section $(1, 1)$ over $(a_1, b_1)$.
This shows that this local section can be part of no global assignment satisfying all the constraints.

Yet this kind of contextuality argument needs some reflection.
The inconsistency of a set $\Gamma$ of formulas, $\Gamma \vdash \bot$, does not mean that $\Gamma$ has no model;
in fact, the PR box, (\hyperref[pic:PR]{d}) of \autoref{fig:Hardy.PR}, satisfies all the equations in \eqref{eq:AvN.PR}.
In the same vein, the derivability $\Gamma \vdash \varphi$ does not mean that every model of $\Gamma$ satisfies $\varphi$;
the Hardy model (\hyperref[pic:Hardy]{c}) satisfies $\Gamma$ but not $\varphi$ of \eqref{eq:AvN.Hardy}.
So the logic of $\vdash$ here is \emph{not} sound with respect to contextual models---%
indeed, that is the whole point of the argument.
Invalidating $\vdash$ precisely means contextuality:
$\Gamma \vdash \bot$ really means that no global section satisfies $\Gamma$;
it is why any model of $\Gamma$ has no global section.
$\Gamma \vdash \varphi$ means that every global section satisfying $\Gamma$ satisfies $\varphi$;
it is why any model satisfying $\Gamma$ but not $\varphi$ must have local sections (viz.\ ones not satisfying $\varphi$) that fail to extend to global sections.

In this sense, the logic of $\vdash$ here is a ``global logic'' of global sections.
We should then note that this logic, by itself, says very little about local consistency.
To see this, consider:
\begin{gather}\label{eq:AvN.inconsistent}
a_1 \oplus b_1 = 0 , \mspace{18mu}
a_1 \oplus b_1 = 1
\end{gather}
This set of equations is, like \eqref{eq:AvN.PR}, inconsistent.
It is, however, inconsistent not just globally but also locally:
Not only does no global section satisfy both equations, no local section over the context $\{ a_1, b_1 \}$ does;
a model $A$ satisfies \eqref{eq:AvN.inconsistent} only if $A_{\{ a_1, b_1 \}} = \varnothing$
(the physical system can give no outcomes to the measurements $a_1$, $b_1$;
the sentences $a_1$, $b_1$ are not just inconsistent but can have no truth values).
Yet the global logic does not tell us why \eqref{eq:AvN.PR} is locally consistent whereas \eqref{eq:AvN.inconsistent} is not.
Thus the kind of argument above is really a ``global-inconsistency argument'':
It shows contextuality only because we already know the formulas to be locally consistent, having obtained them as descriptions of some model.

\subsection{``Inchworm Logic'' of Local Inference}\label{sec:logic.inchworm}

In contrast to \eqref{eq:AvN.PR}, $\Gamma \vdash \bot$ of \eqref{eq:AvN.inconsistent} means local inconsistency over $\{ a_1, b_1 \}$ since both equations in $\Gamma$ are in the context $\{ a_1, b_1 \}$.
Turning $\Gamma, \varphi \vdash \bot$ into the form of inference, if $\Gamma, \varphi$ are in a context $U$, then $\Gamma \vdash \lnot \varphi$ gives local entailment over $U$.
E.g., the antecedents of
\begin{gather*}
a_1 = 0 , \mspace{18mu}
b_1 = 0 \mspace{18mu}
\vdash \mspace{18mu}
a_1 \oplus b_1 = 0
\end{gather*}
rule out all the sections over $(a_1, b_1)$ except $(0, 0)$, which satisfies the consequent.

Indeed, local inference can be carried out across different contexts, validly in no-signalling models, subject to one constraint.
To see this, expand the base $\C$ in \autoref{fig:Hardy.PR} from (\hyperref[pic:Alice.Bob]{f}) of \autoref{fig:Charlie.inchworm} to (\hyperref[pic:Alice.Bob.Charlie]{g}), where the four triangles are in $\C$---%
so a new experimenter, Charlie, can make his measurement $c$ along with Alice and Bob.
\begin{figure}
\centering
\begin{tikzpicture}[x=25pt,y=25pt,thick,label distance=-0.25em,baseline=(O.base)]
\coordinate [overlay] (O) at (0,0);
\coordinate (v0) at ($ ({-cos(1*pi/12 r)*1.2},{-sin(1*pi/12 r)*0.48}) $);
\coordinate (v1) at ($ ({-cos(7*pi/12 r)*1.2},{-sin(7*pi/12 r)*0.48}) $);
\coordinate (v2) at ($ ({-cos(13*pi/12 r)*1.2},{-sin(13*pi/12 r)*0.48}) $);
\coordinate (v3) at ($ ({-cos(19*pi/12 r)*1.2},{-sin(19*pi/12 r)*0.48}) $);
\draw (v0) -- (v1) -- (v2) -- (v3) -- (v0);
\node [inner sep=0.1em] (v0') at (v0) {$\bullet$};
\node [anchor=east,inner sep=0em] at (v0'.west) {$a_1$};
\node [inner sep=0.1em,label={[label distance=-0.625em]330:{$b_1$}}] at (v1) {$\bullet$};
\node [inner sep=0.1em] (v2') at (v2) {$\bullet$};
\node [anchor=west,inner sep=0em] at (v2'.east) {$a_2$};
\node [inner sep=0.1em,label={[label distance=-0.5em]175:{$b_2$}}] at (v3) {$\bullet$};
\node [inner sep=0em] at (-1.5,1.5) {\strut(f)};
\node (C) [inner sep=0.25em] at (-2.5,0) {$\C$};
\end{tikzpicture}
\hspace{12pt}%
\begin{tikzpicture}[x=25pt,y=25pt,thick,label distance=-0.25em,baseline=(O.base)]
\coordinate [overlay] (O) at (0,0);
\coordinate (v0) at ($ ({-cos(1*pi/12 r)*1.2},{-sin(1*pi/12 r)*0.48}) $);
\coordinate (v1) at ($ ({-cos(7*pi/12 r)*1.2},{-sin(7*pi/12 r)*0.48}) $);
\coordinate (v2) at ($ ({-cos(13*pi/12 r)*1.2},{-sin(13*pi/12 r)*0.48}) $);
\coordinate (v3) at ($ ({-cos(19*pi/12 r)*1.2},{-sin(19*pi/12 r)*0.48}) $);
\coordinate (v4) at (0,1.2);
\draw [opacity=0.5] (v2) -- (v3) -- (v0);
\draw [opacity=0.5] (v3) -- (v4);
\node [inner sep=0.1em,opacity=0.5] at (v3) {$\bullet$};
\node [inner sep=0.5em,below,opacity=0.5] at (v3.south) {$b_2$};
\fill [darkgreen!50,opacity=0.5] (v0) -- (v4) -- (v3) -- (v0);
\fill [darkgreen!50,opacity=0.5] (v2) -- (v4) -- (v3) -- (v2);
\fill [darkgreen!50,opacity=0.5] (v0) -- (v4) -- (v1) -- (v0);
\draw (v1) -- (v0) -- (v4);
\fill [darkgreen!50,opacity=0.5] (v2) -- (v4) -- (v1) -- (v2);
\draw (v1) -- (v2) -- (v4);
\draw (v1) -- (v4);
\node [inner sep=0.1em,label=left:{$a_1$}] at (v0) {$\bullet$};
\node [inner sep=0.1em,label={[label distance=-0.5em]330:{$b_1$}}] at (v1) {$\bullet$};
\node [inner sep=0.1em,label=right:{$a_2$}] at (v2) {$\bullet$};
\node [inner sep=0.1em,label=above:{$c$}] at (v4) {$\bullet$};
\node [inner sep=0em] at (-1.5,1.5) {\strut(g)};
\end{tikzpicture}
\hspace{12pt}%
\begin{tikzpicture}[x=25pt,y=25pt,thick,label distance=-0.25em,baseline=(O.base)]
\coordinate [overlay] (O) at (0,0);
\coordinate [overlay] (r) at (1,0);
\coordinate (v0) at ($ ({-cos(1*pi/12 r)*1.2},{-sin(1*pi/12 r)*0.48}) $);
\coordinate (v1) at ($ ({-cos(7*pi/12 r)*1.2},{-sin(7*pi/12 r)*0.48}) $);
\coordinate (v1') at ($ (v1) + (r) $);
\coordinate (v1'') at ($ (v1') + (r) $);
\coordinate (v2) at ($ ({-cos(13*pi/12 r)*1.2},{-sin(13*pi/12 r)*0.48}) + (r) + (r) $);
\coordinate (v4) at (0,1.2);
\coordinate (v4') at ($ (v4) + (r) $);
\coordinate (v4'') at ($ (v4') + (r) $);
\fill [darkgreen!50,opacity=0.5] (v0) -- (v4) -- (v1) -- (v0);
\draw (v0) -- (v4) node [pos=0.66,inner sep=0.75em,left] {$U$};
\draw (v4) -- (v1) -- (v0);
\draw (v1') -- (v4');
\fill [darkgreen!50,opacity=0.5] (v2) -- (v4'') -- (v1'') -- (v2);
\draw (v2) -- (v4'') node [pos=0.66,inner sep=0.75em,right] {$V$};
\draw (v4'') -- (v1'') -- (v2);
\node [inner sep=0.1em,label=left:{$a_1$}] at (v0) {$\bullet$};
\node [inner sep=0.1em,label={[label distance=-0.5em]330:{$b_1$}}] at (v1) {$\bullet$};
\node [inner sep=0.1em,label={[label distance=-0.5em]330:{$b_1$}}] at (v1') {$\bullet$};
\node [inner sep=0.1em,label={[label distance=-0.5em]330:{$b_1$}}] at (v1'') {$\bullet$};
\node [inner sep=0.1em,label=right:{$a_2$}] at (v2) {$\bullet$};
\node [inner sep=0.1em,label=above:{$c$}] at (v4) {$\bullet$};
\node [inner sep=0.1em,label=above:{$c$}] at (v4') {$\bullet$};
\node [inner sep=0.1em,label=above:{$c$}] at (v4'') {$\bullet$};
\coordinate (C) at ($ (v1')!0.5!(v4') $);
\coordinate (A) at ($ (v1)!0.5!(v4)!0.35!(v0) $);
\coordinate (B) at ($ (C) + (-0.5,0) $);
\coordinate (D) at ($ (C) + (0.5,0) $);
\coordinate (E) at ($ (v1'')!0.5!(v4'')!0.35!(v2) $);
\draw [->,red] (A) .. controls ($ (A)!0.5!(B) + (0,0.25) $) .. (B);
\draw [->,rounded corners=2pt,red] ($ (C) + (-0.35,0) $) -- ($ (C) + (-0.075,0) $) .. controls ($ (C) + (-0.4,0.5) $) and ($ (C) + (0.4,0.5) $) .. ($ (C) + (0.075,0) $) -- ($ (C) + (0.35,0) $);
\draw [->,red] (D) .. controls ($ (D)!0.5!(E) + (0,0.25) $) .. (E);
\node [inner sep=0em] at (-1.5,1.5) {\strut(h)};
\end{tikzpicture}
\caption{Charlie and an inchworm}
\label{fig:Charlie.inchworm}
\label{pic:Alice.Bob}
\label{pic:Alice.Bob.Charlie}
\label{pic:Alice.Bob.Charlie.inchworm}
\end{figure}
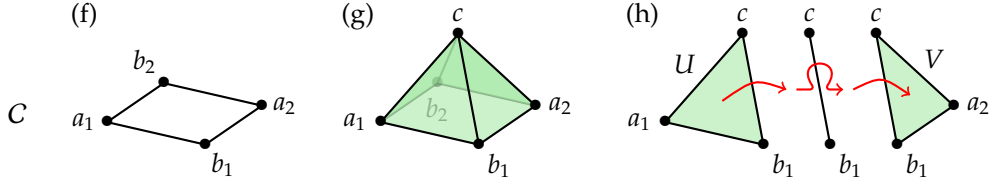
Now rewrite the locally consistent \eqref{eq:AvN.PR} in the inference form \eqref{eq:AvN.PR.inference} (we replace $x \oplus y = 0$ with simpler $x = y$) and compare it to \eqref{eq:AvN.PR.Charlie}:
\begin{align}
\label{eq:AvN.PR.inference}
&&&&&&&&&&&&&&&&
a_1 & = b_1 , &
a_1 & = b_2 , &
a_2 & = b_1 &
& \vdash &
a_2 & = b_2
&&&&&&&&&&&&&&&&
\\
\label{eq:AvN.PR.Charlie}
&&&&&&&&&&&&&&&&
a_1 & = b_1 , &
a_1 & = c , &
a_2 & = b_1 &
& \vdash &
a_2 & = c
&&&&&&&&&&&&&&&&
\end{align}
\eqref{eq:AvN.PR.Charlie} is valid in no-signalling models, whereas \eqref{eq:AvN.PR.inference} is not (the PR box is a countermodel, as it is a model for \eqref{eq:AvN.PR}).
The only difference is $c$ replacing $b_2$---%
this tiny difference, however, enables us to obtain \eqref{eq:AvN.PR.Charlie} in the following two steps:
\begin{gather}\label{eq:AvN.PR.Charlie.steps}
\AXC{$a_1 = b_1$}
\AXC{$a_1 = c$}
\RL{$U$}
\BIC{$b_1 = c$}
\AXC{$a_2 = b_1$}
\RL{$V$}
\BIC{$a_2 = c$}
\DP
\end{gather}
The first step is within the context $U = \{ a_1, b_1, c \}$, hence valid locally:
Every section over $U$ satisfying the antecedents satisfies the consequent.
Similarly, the second step is valid within $V = \{ a_2, b_1, c \}$.
The key aspect is that the formula in the middle, $b_1 = c$, can be in the context $U \cap V$ and so in $U$ or in $V$.
The upshot is that information gets passed on from a larger context $U$ to a smaller $U \cap V$ and then to another larger $V$---%
just like the locomotion of an inchworm, if (\hyperref[pic:Alice.Bob.Charlie.inchworm]{h}) of \autoref{fig:Charlie.inchworm} helps to visualize it.
Crucially, the no-signalling property is essential when the inchworm moves from a larger context to a smaller:
E.g., the first step of \eqref{eq:AvN.PR.Charlie.steps} concludes that every section over $\{ a_1, b_1, c \}$ satisfies $b_1 = c$;
but then, in the absence of no-signalling, there may be a section over $\{ b_1, c \}$ violating $b_1 = c$ without extending to $\{ a_1, b_1, c \}$.
We will discuss the semantic r\^ole of no-signalling further in \autoref{sec:general.interpretation}.

\subsection{Formalizing the Inchworm}\label{sec:logic.formalize}

We formalize and generalize the idea of ``inchworm inference''.
As in the example in \autoref{sec:logic.inchworm}, an inchworm logic is obtained by constraining a global logic.
We assume this logic to be (at least) regular, i.e.\ to have $\top$, $\wedge$, and $\exists$, for the reasons explained shortly.

\begin{definition}\label{def:inchworm.logic}
Let $\L$ be a language of regular logic (or richer) whose variables include $X$.
For each $x \in X$, write $T_x$ for the type of $x$, and then, for each $\bar{x} \subseteq X$, write $\Phi_{\bar{x}}$ for the set of formulas in the context $\bar{x} : T_{\bar{x}}$.
Given a simplicial complex $\C$ on $X$, the \emph{$\C$-contextual fragment} of $\L$ is $\Phi_\C = \bigcup_{U \in \C} \Phi_U$.
By a \emph{$\C$-contextual language} $\L_\C$, we simply mean a pair of such $\L$ and $\Phi_\C$.
Now let $\TT$ be a regular theory in $\L$ given by an entailment relation $\vdash$ (which is \emph{not} required to be binary).
Then the inchworm fragment of $\vdash$ in $\L_\C$ is the entailment relation $\vdash_\C$ on $\Phi_\C$ defined inductively by the following.
We write $\Gamma_U = \Gamma \cap \Phi_U$.
\begin{enumerate}
\setcounter{enumi}{\value{equation}}
\item\label{item:inchworm.base}
$\Gamma \vdash_\C \varphi$ if there is $U \in \C$ such that $\varphi \in \Phi_U$ and $\Gamma_U \vdash \varphi$.
\item\label{item:inchworm.inductive}
If $\Gamma \vdash_\C \varphi$ and $\Delta, \varphi \vdash_\C \psi$ then $\Gamma, \Delta \vdash_\C \psi$.
\setcounter{equation}{\value{enumi}}
\end{enumerate}
\end{definition}

\eqref{item:inchworm.base} expresses the idea that $\vdash$ within a single context is valid locally as well.
In \eqref{item:inchworm.inductive}, note that the first two instances of $\vdash_\C$ may be witnessed by different contexts.
Observe also that $\Gamma \vdash_\C \varphi$ entails $\Gamma \vdash \varphi$;
thus $\vdash_\C$ is a fragment of $\vdash$.

\begin{example}\label{eg:AvN.syntax}
Let $\L$ have $T$ as a basic type;
$0$, $1$ be constants of type $T$;
and $\oplus$ be a function symbol of type $\oplus : T \times T \to T$.
Let $T_x = T$ for all $x \in X$.
So, e.g., $x : T, y : T \mid x \oplus y = 1$ makes sense, and is in $\Phi_{\{ x, y \}}$.
This gives equations of the kind relevant to the examples in \autoref{sec:logic.global}.
Note that $\Phi_\C$ is a union.
E.g., for (\hyperref[pic:Alice.Bob]{f}) of \autoref{fig:Charlie.inchworm}, $a_i = 0$ is in $\Phi_{\{ a_i \}} \subseteq \Phi_\C$ for both $i = 1, 2$, but $a_1 = 0 \wedge a_2 = 0$ is \emph{not} in $\Phi_\C$ since $\{ a_1, a_2 \} \notin \C$.
\end{example}

We assume $\L$ to have $\top$ and $\wedge$, so that pieces of information can be combined within the same context.
The inchworm moves from a smaller context $U$ to a larger $V$ via the order embedding $i : (\Phi_U, \vdash_U) \incto (\Phi_V, \vdash_V)$, and from $V$ to $U$ via the left adjoint $\exists_{V \setminus U}$ of $i$.
\begin{gather*}
\begin{tikzpicture}[x=25pt,y=25pt,baseline=(C.base)]
\node (C) [anchor=east,inner sep=0.25em] at (0,0) {$\Phi_V$};
\node (D) [anchor=west,inner sep=0.25em] at (2,0) {$\Phi_U$};
\draw [transform canvas={yshift=5pt},->] (C) -- (D) node [pos=0.5,inner sep=2pt,above] {$\exists_{V \setminus U}$};
\draw [transform canvas={yshift=-5pt},right hook->] (D) -- (C) node [pos=0.5,inner sep=2pt,below] {$i$};
\path (C) -- (D) node [pos=0.5,rotate=-90] {$\dashv$};
\end{tikzpicture}
\end{gather*}
Then $\exists_{V \setminus U} \dashv i$ means that, for any $\varphi \in \Phi_V$, $\exists_{V \setminus U} \ldot \varphi \in \Phi_U$
encapsulates all and only the information that $\varphi$ entails on $U$.
We also have $\exists_{V \setminus U} \cmp i \cong 1$, so a piece of information that can be both about $U$ and about $V$ undergoes no change when carried across $U$ and $V$.

\section{Contextual Semantics in Regular Categories}\label{sec:general}

Our definition of model using a sheaf generalizes by replacing $\Sets$ with any category $\DD$ with finite limits, since the base $\C$ is a simplicial complex.
Yet, for the sake of no-signalling, we moreover need $\DD$ to be regular.
References on regular categories and their categorical logic include \cite{oos95,but98,joh02}.
We then lay out how to model the inchworm logic in $\DD$.

\subsection{Topological Models in Regular Categories}\label{sec:general.presheaf}

Let $\C$ be a simplicial complex on a set $X$, and $\DD$ be a category with finite limits.
By a presheaf on $\C$ valued in $\DD$, we mean any contravariant functor $P : \C^\op \to \DD$.
Then the definitions of separated presheaf and sheaf generalize straightforwardly to

\begin{definition}\label{def:separated.presheaf.general}
We say that an $\DD$-valued presheaf $P$ on $\C$ is separated if the arrow
\begin{gather*}
\langle P_{U_i \subseteq U} \rangle_i : P_U \to \prod_i P_{U_i}
\end{gather*}
is monic whenever $\bigcup_i U_i = U$, and a sheaf if, whenever $\bigcup_i U_i = U$, $\langle P_{U_i \subseteq U} \rangle_i$ is an equalizer as follows, where $p_j : \prod_i P_{U_i} \to P_{U_j}$ and $p_k : \prod_i P_{U_i} \to P_{U_k}$ are the projections.
\begin{gather*}
\begin{tikzpicture}[x=25pt,y=25pt,baseline=(A.base)]
\node (A) [anchor=east,inner sep=0.25em] at (0,0) {$P_U$};
\node (B) [anchor=west,inner sep=0.25em] at ($ (A.east) + (3,0) $) {$\prod_i P_{U_i}$};
\node (C) [anchor=west,inner sep=0.25em] at ($ (B.east) + (5,0) $) {$\prod_{j, k} P_{U_j \cap U_k}$};
\draw [>->] (A) -- (B) node [pos=0.5,inner sep=2pt,above] {$\langle P_{U_i \subseteq U} \rangle_i$};
\draw [transform canvas={yshift=3.2pt},->] (B) -- (C) node [pos=0.5,inner sep=2pt,above] {$\langle P_{U_j \cap U_k \subseteq U_j} \cmp p_j \rangle_{j, k}$};
\draw [transform canvas={yshift=-3.2pt},->] (B) -- (C) node [pos=0.5,inner sep=2pt,below] {$\langle P_{U_j \cap U_k \subseteq U_k} \cmp p_k \rangle_{j, k}$};
\end{tikzpicture}
\end{gather*}
\end{definition}

Again, every sheaf $F$ has $F_\varnothing = 1$, the terminal object of $\DD$, and every separated presheaf $P$ has $P_\varnothing \monoto 1$.
Also, for simpler descriptions, \autoref{thm:separated.presheaf} generalizes to

\begin{fact}\label{thm:separated.presheaf.general}
An $\DD$-valued presheaf $P$ on a simplicial complex $\C$ is a sheaf iff $P_U = \prod_{x \in U} P_x$ for all $U \in \C$.
And $P$ is separated iff it is a subpresheaf of a sheaf, i.e.\ iff each $\langle P_{\{ x \} \subseteq U} \rangle_{x \in U} : P_U \to \prod_{x \in U} P_x$ is monic, i.e.\ iff each $P_U$ is a relation in $\DD$ on $(P_x)_{x \in U}$.
\end{fact}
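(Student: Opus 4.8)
The plan is to reduce every instance of the sheaf and separatedness conditions of \autoref{def:separated.presheaf.general} to one distinguished cover of a simplex $U \in \C$, namely the cover by its vertices, $U = \bigcup_{x \in U} \{x\}$. This cover is the crux, because its pairwise intersections are as simple as possible: $\{x\} \cap \{y\}$ equals $\{x\}$ when $x = y$ and $\varnothing$ otherwise. Under this cover the equalizer diagram therefore collapses into a statement about $\prod_{x \in U} P_x$ alone, and the whole Fact becomes a matter of comparing an arbitrary cover against the vertex cover.

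For the first biconditional I would argue both directions. Suppose $P$ is a sheaf and apply the sheaf condition to the vertex cover $(\{x\})_{x \in U}$. The two parallel arrows into $\prod_{x,y} P_{\{x\} \cap \{y\}}$ then coincide: on the diagonal components ($x = y$) both equal the projection, since the restriction along $\{x\} \subseteq \{x\}$ is the identity; on the off-diagonal components ($x \neq y$) both land in $P_\varnothing$, which is terminal because, as recorded after \autoref{def:separated.presheaf.general}, every sheaf has $P_\varnothing = 1$, and any two maps into $1$ agree. An equalizer of a pair of equal arrows is an isomorphism, so $\langle P_{\{x\} \subseteq U} \rangle_{x \in U}$ is iso, i.e.\ $P_U = \prod_{x \in U} P_x$. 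Conversely, assuming $P_U = \prod_{x \in U} P_x$ for all $U$, I would verify the equalizer condition for an arbitrary cover $\bigcup_i U_i = U$ directly by its universal property; since $U$ is finite we may take the cover to consist of finitely many distinct faces, so all the displayed products exist in $\DD$. A cone over the diagram is a map $h$ with components $h_{i,x} \colon T \to P_x$ for $x \in U_i$, and the equalizing condition says exactly that $h_{j,x} = h_{k,x}$ whenever $x \in U_j \cap U_k$, i.e.\ that $h_{i,x}$ depends only on $x$. Such an $h$ therefore glues to the unique $\langle h_x \rangle_{x \in U} \colon T \to \prod_{x \in U} P_x$ with $h_x = h_{i,x}$ for any $i$ with $x \in U_i$, exhibiting $\langle P_{U_i \subseteq U} \rangle_i$ as the equalizer; hence $P$ is a sheaf.

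For the separatedness claim, the easy direction again uses the vertex cover, which shows that separatedness forces each $m_U = \langle P_{\{x\} \subseteq U} \rangle_{x \in U}$ to be monic. For the converse I would factor $m_U$ through the comparison $e = \langle P_{U_i \subseteq U} \rangle_i$ of any cover: choosing for each $x \in U$ some $i$ with $x \in U_i$ and using the functoriality identity $P_{\{x\} \subseteq U_i} \cmp P_{U_i \subseteq U} = P_{\{x\} \subseteq U}$, one assembles a map $r$ with $r \cmp e = m_U$; since $m_U$ is monic, so is $e$. It then remains to match the three formulations. The family $(m_U)_U$ is a transformation of presheaves into the sheaf $F$ with $F_U = \prod_{x \in U} P_x$ (naturality is precisely the functoriality identity just used), and a transformation with monic components is monic, so $P$ is a subpresheaf of $F$; a mono into $\prod_{x \in U} P_x$ is by definition a relation on $(P_x)_{x \in U}$, giving the last reformulation. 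For the remaining inclusion, a subpresheaf of a sheaf $G$ has each $P_U \monoto G_U = \prod_{x \in U} G_x$ by the first part, and by naturality this factors vertex-wise through $\prod_{x \in U} P_x$, so $m_U$ is monic and $P$ is separated.

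I expect the main obstacle to be the bookkeeping in the forward sheaf direction: correctly separating the diagonal from the off-diagonal components of the equalizer diagram for the vertex cover and invoking $P_\varnothing = 1$ so that the two parallel arrows genuinely coincide. The only other point requiring care is the reduction, in the converse direction, from an arbitrary and possibly infinitely indexed cover to a finite one, which is needed so that the products exist in a category with merely finite limits.
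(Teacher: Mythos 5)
Your proof is correct, and its skeleton is the same as the paper's: reduce everything to the vertex cover $U = \bigcup_{x \in U} \{x\}$, prove the converse sheaf direction by verifying the universal property of the equalizer directly for an arbitrary cover, and handle separatedness through the comparison map $m_U = \langle P_{\{x\} \subseteq U} \rangle_{x \in U}$. There are, however, two local differences worth recording. First, for separatedness the paper runs the cycle (separated) $\Rightarrow$ ($m_U$ monic) $\Rightarrow$ (subpresheaf of a sheaf) $\Rightarrow$ (separated), where the last implication uses the commuting square $\langle m_{U_i} \rangle_i \cmp \langle P_{U_i \subseteq U} \rangle_i = \langle F_{U_i \subseteq U} \rangle_i \cmp m_U$ and the fact that $\langle F_{U_i \subseteq U} \rangle_i$, being an equalizer, is monic; you instead prove ($m_U$ monic) $\Rightarrow$ (separated) directly, choosing for each $x \in U$ an index $i(x)$ with $x \in U_{i(x)}$ and assembling $r$ with $r \cmp \langle P_{U_i \subseteq U} \rangle_i = m_U$, so that monicity of $m_U$ forces monicity of the cover comparison. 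This is slightly more elementary: it uses only the functoriality identity $P_{\{x\} \subseteq U_i} \cmp P_{U_i \subseteq U} = P_{\{x\} \subseteq U}$ and never the equalizer property of the ambient sheaf. Second, in the forward sheaf direction you are more careful than the paper at one point: the paper says ``the parallel arrows are both isos, and therefore the equalizer is an iso'', which as literally stated is too quick (two isos need not have an iso equalizer unless they are \emph{equal}); your diagonal/off-diagonal split, with the off-diagonal components agreeing because they land in the terminal $P_\varnothing$, supplies exactly the missing observation that the two arrows coincide. One pedantic point: you cite the remark after \autoref{def:separated.presheaf.general} for $P_\varnothing = 1$, whereas the paper derives it inside the proof from the empty cover $\bigcup_{i \in \varnothing} U_i = \varnothing$, whose equalizer $P_\varnothing \monoto 1 \rightrightarrows 1$ is an iso; for a self-contained argument you should include that one-line derivation rather than lean on an unproved remark. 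Finally, your explicit reduction of an arbitrary, possibly infinitely indexed cover to finitely many distinct faces, so that the displayed products exist in a merely finitely complete $\DD$, is a genuine care point that the paper leaves implicit.
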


Next we define the no-signalling property for $\DD$-valued separated presheaves.
In doing so, we need to choose from several generalizations of the notion of surjection in \eqref{item:no-signalling.presheaf} of \autoref{def:no-signalling.sets};
the one that serves our purpose is the one that provides semantics for $\exists : \Phi_V \rightleftarrows \Phi_U : i$ in the inchworm logic.
This is the principal reason we need $\DD$ to be regular;
then, in $\DD$, every arrow $f : C \to D$ gives rise to the adjoint pair
\begin{gather*}
\begin{tikzpicture}[x=25pt,y=25pt,baseline=(C.base)]
\node (C) [anchor=east,inner sep=0.25em] at (0,0) {$\Sub_\DD(C)$};
\node (D) [anchor=west,inner sep=0.25em] at (2,0) {$\Sub_\DD(D)$};
\draw [transform canvas={yshift=5pt},->] (C) -- (D) node [pos=0.5,inner sep=2pt,above] {$\exists_f$};
\draw [transform canvas={yshift=-5pt},->] (D) -- (C) node [pos=0.5,inner sep=2pt,below] {$f^{-1}$};
\path (C) -- (D) node [pos=0.5,rotate=-90] {$\dashv$};
\end{tikzpicture}
\end{gather*}
(e.g.\ \cite[Lemma 2.5]{but98}), and moreover $\exists_f \cmp f^{-1} = 1_{\Sub(D)}$ (and so $f^{-1}$ is an order embedding) if $f$ is a regular epi (essentially, \cite[Corollary D1.2.8]{joh02}).
Therefore the right generalization of \autoref{def:no-signalling.sets} is the following \autoref{def:no-signalling.general}, with an alternative description in \autoref{thm:no-signalling.subpresheaf}.

\begin{definition}\label{def:no-signalling.general}
A separated presheaf $A$ on a simplicial complex $\C$ valued in a regular category $\DD$ is said to be no-signalling if every $A_{U \subseteq V} : A_V \to A_U$ is a regular epi.
\end{definition}

\begin{fact}\label{thm:no-signalling.subpresheaf}
Let $F$ be a sheaf on a simplicial complex $\C$.
Then a family $(i_U : A_U \monoto F_U)_{U \in \C}$ of subobjects forms a subpresheaf of $F$, and hence a separated presheaf, iff $A_V \leqslant {F_{U \subseteq V}}^{-1}(A_U)$, or equivalently $\exists_{F_{U \subseteq V}}(A_V) \leqslant A_U$, whenever $U \subseteq V \in \C$.
Moreover, a separated presheaf $i : A \monoto F$ is no-signalling iff $\exists_{F_{U \subseteq V}}(A_V) = A_U$ whenever $U \subseteq V \in \C$.
\end{fact}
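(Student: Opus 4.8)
The plan is to treat the two assertions separately, reducing each to the behaviour of the adjoint pair $\exists_{F_{U\subseteq V}} \dashv {F_{U\subseteq V}}^{-1}$ together with image factorizations in $\DD$. For the first assertion, fix $U\subseteq V$ in $\C$. Since each $i_U$ is monic, equipping the family $(A_U)_U$ with the structure of a subpresheaf of $F$ amounts to choosing, for every $U\subseteq V$, a (necessarily unique) restriction arrow $A_{U\subseteq V}:A_V\to A_U$ with $i_U\cmp A_{U\subseteq V}=F_{U\subseteq V}\cmp i_V$. Such an arrow exists iff $F_{U\subseteq V}\cmp i_V$ factors through $i_U$, and since $\exists_{F_{U\subseteq V}}(A_V)$ is by definition the image of that composite, this factorization is exactly the condition $\exists_{F_{U\subseteq V}}(A_V)\leqslant A_U$; the equivalent form $A_V\leqslant{F_{U\subseteq V}}^{-1}(A_U)$ is then read off from the adjunction. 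Functoriality of the $A_{U\subseteq V}$ is automatic: for $U\subseteq V\subseteq W$, both $A_{U\subseteq V}\cmp A_{V\subseteq W}$ and $A_{U\subseteq W}$ agree after composition with the monic $i_U$, hence agree, and likewise for identities. That such a subpresheaf is separated is \autoref{thm:separated.presheaf.general}.

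For the no-signalling assertion I would take $A\monoto F$ to be a separated subpresheaf and evaluate $\exists_{F_{U\subseteq V}}(A_V)$ for $U\subseteq V$ in $\C$. By the commuting square, $F_{U\subseteq V}\cmp i_V=i_U\cmp A_{U\subseteq V}$, so this subobject is the image of $i_U\cmp A_{U\subseteq V}$. Writing the image factorization $A_{U\subseteq V}=j\cmp e$ with $e$ a regular epi and $j:J\monoto A_U$ monic, the composite $(i_U\cmp j)\cmp e$ is again a regular epi followed by a monic and is therefore the image factorization of $i_U\cmp A_{U\subseteq V}$; hence $\exists_{F_{U\subseteq V}}(A_V)$ is the subobject of $F_U$ represented by $i_U\cmp j$.

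It remains to compare this with the subobject $A_U$ represented by $i_U$. These coincide iff there is an isomorphism $\theta:J\to A_U$ with $i_U\cmp\theta=i_U\cmp j$; since $i_U$ is monic this forces $\theta=j$, so the two subobjects agree precisely when $j$ is invertible. As $j$ is the monic part of the image factorization of $A_{U\subseteq V}$, it is an isomorphism exactly when $A_{U\subseteq V}$ is a regular epi, so $\exists_{F_{U\subseteq V}}(A_V)=A_U$ iff $A_{U\subseteq V}$ is a regular epi; quantifying over all $U\subseteq V$ in $\C$ yields the equivalence with no-signalling (\autoref{def:no-signalling.general}). The main obstacle is precisely this last computation: one must check that post-composing the image factorization of $A_{U\subseteq V}$ with the monic $i_U$ leaves it an image factorization, and invoke the characterization of regular epis in a regular category as the maps whose monic factor is invertible. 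Everything else is formal use of the $\exists\dashv{(-)}^{-1}$ adjunction and of the monicity of the $i_U$.
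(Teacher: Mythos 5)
Your proposal is correct and follows essentially the same route as the paper: the first equivalence via the (unique, by monicity of $i_U$) restriction arrow existing precisely when $F_{U \subseteq V} \cmp i_V$ factors appropriately, and the second via the regular epi--mono factorization $A_{U \subseteq V} = j \cmp e$, observing that $(i_U \cmp j) \cmp e$ is the image factorization of $F_{U \subseteq V} \cmp i_V$ and that uniqueness of such factorizations makes $A_{U \subseteq V}$ a regular epi iff $j$ is invertible. The only cosmetic difference is that you enter the first equivalence through the image condition $\exists_{F_{U \subseteq V}}(A_V) \leqslant A_U$ while the paper enters through the pullback condition $A_V \leqslant {F_{U \subseteq V}}^{-1}(A_U)$; these are the two adjoint forms already listed in the statement itself.
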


\subsection{Global Inconsistency in Regular Categories}\label{sec:general.inconsistency}

\autoref{def:contextuality.sets} of contextuality for $\Sets$-valued presheaves can now extend to ones valued in any regular category $\DD$.
Let $A$ be an $\DD$-valued separated presheaf on a simplicial complex $\C$ on $X$.
It is a subpresheaf of a sheaf $F$ on $\C$.
In fact, let us assume, just in this subsection, that $X$ is finite (or that $\DD$ is complete);
then, by \autoref{thm:separated.presheaf.general} (or a straightforward generalization), $F$ extends uniquely to a sheaf on $\pw X$, viz.\ $F : U \mapsto \prod_{x \in U} F_x$.
Then the set of global sections of $A$---%
i.e.\ the natural join
\begin{gather*}
\natjoin_{U \in \C} A_U = \{\, g \in \prod_{x \in X} A_x \mid \rest{g}{U} \in A_U \text{ for all } U \in \C \,\}
\end{gather*}
of the relations $A_U \subseteq F_U$---%
generalizes to the $\DD$-valued case:

\begin{fact}
Given any $\DD$-valued separated presheaf $A$, let $F$ be a sheaf such that $i : A \monoto F$ and, using $A_U$ as predicates in the internal language of $\DD$, define
\begin{gather*}
\natjoin A = \Scott{\, \bar{x} : F_X \mid \bigwedge_{U \in \C} A_U(F_{U \subseteq X} \bar{x}) \,} = \bigwedge_{U \in \C} {F_{U \subseteq X}}^{-1}(A_U) \monoto F_X .
\end{gather*}
Then $\natjoin A$ is the limit of $A$ as a diagram in $\DD$.
\end{fact}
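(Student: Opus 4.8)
The plan is to exhibit $\natjoin A \monoto F_X$, equipped with suitable projections, as a limiting cone for the diagram $A : \C^\op \to \DD$. Throughout I use that $F$, being a sheaf, satisfies $F_U = \prod_{x \in U} F_x$ by \autoref{thm:separated.presheaf.general}, with the restriction maps $F_{\{x\} \subseteq U} : F_U \to F_x$ being exactly the product projections; I also use that every singleton $\{x\}$ lies in $\C$, since $\C$ is downward closed and covers $X$. Write $m : \natjoin A \monoto F_X$ for the canonical mono and $i_U : A_U \monoto F_U$ for the components of $i$.

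First I would build the projections. Since $\natjoin A = \bigwedge_{U \in \C} {F_{U \subseteq X}}^{-1}(A_U)$ lies below each ${F_{U \subseteq X}}^{-1}(A_U)$, and the latter is by definition the pullback of $i_U$ along $F_{U \subseteq X}$, the composite $F_{U \subseteq X} \cmp m$ factors through $i_U$; I define $\pi_U : \natjoin A \to A_U$ to be the resulting map, so that $i_U \cmp \pi_U = F_{U \subseteq X} \cmp m$. The cone condition is then a short chase: for $U \subseteq V$ in $\C$, using the naturality $i_U \cmp A_{U \subseteq V} = F_{U \subseteq V} \cmp i_V$ and functoriality $F_{U \subseteq X} = F_{U \subseteq V} \cmp F_{V \subseteq X}$, I obtain $i_U \cmp A_{U \subseteq V} \cmp \pi_V = F_{U \subseteq X} \cmp m = i_U \cmp \pi_U$, whence $A_{U \subseteq V} \cmp \pi_V = \pi_U$ because $i_U$ is monic.

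The crux is universality. Given a cone $(f_U : T \to A_U)_{U \in \C}$ over $A$, I first assemble its singleton components into a map $\hat{f} = \langle i_{\{x\}} \cmp f_{\{x\}} \rangle_{x \in X} : T \to \prod_{x \in X} F_x = F_X$. The key computation is that for \emph{every} $U \in \C$ one has $F_{U \subseteq X} \cmp \hat{f} = i_U \cmp f_U$: since $F_U = \prod_{x \in U} F_x$, it suffices to check this after each projection $F_{\{x\} \subseteq U}$, and there the left side equals $F_{\{x\} \subseteq X} \cmp \hat{f} = i_{\{x\}} \cmp f_{\{x\}}$ while the right side equals $i_{\{x\}} \cmp A_{\{x\} \subseteq U} \cmp f_U = i_{\{x\}} \cmp f_{\{x\}}$ by naturality of $i$ and the cone condition for $\{x\} \subseteq U$. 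Consequently $F_{U \subseteq X} \cmp \hat{f}$ factors through each $i_U$, so $\hat{f}$ factors through every pullback ${F_{U \subseteq X}}^{-1}(A_U)$ and hence through their meet $m$; the resulting $h : T \to \natjoin A$ satisfies $\pi_U \cmp h = f_U$ (again cancel the mono $i_U$), and it is the unique such map, since the product projections $F_{\{x\} \subseteq X}$ are jointly monic and $m$ is monic, so $m \cmp h$ is already pinned down by the components $f_{\{x\}}$.

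I expect the main obstacle to be the identity $F_{U \subseteq X} \cmp \hat{f} = i_U \cmp f_U$ in the universality step: it is precisely the point where the sheaf structure of $F$ (the product description of $F_U$ and $F_X$) and the presence of all singletons in $\C$ must combine with the cone equations to let a map specified only on singletons be recognized, slotwise, as the transpose of $f_U$. The (finite, or under completeness arbitrary) meet defining $\natjoin A$ exists by the standing assumption of this subsection, so existence of the intersection is not at issue; everything else reduces to cancelling monos and invoking the universal properties of products and pullbacks.
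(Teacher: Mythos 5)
Your proposal is correct and follows essentially the same route as the paper's proof: you define the cone maps by factoring $F_{U \subseteq X} \cmp m$ through the pullbacks ${F_{U \subseteq X}}^{-1}(A_U)$, verify the cone equations by cancelling the monos $i_U$, and obtain the mediating map by assembling $\langle i_{\{x\}} \cmp f_{\{x\}} \rangle_{x \in X} : T \to F_X$ from the singleton components (using the product description of the sheaf $F$ extended to $\pw X$) and factoring it through each pullback and then through the meet. The only difference is presentational: you spell out componentwise the identity $F_{U \subseteq X} \cmp \hat{f} = i_U \cmp f_U$ and derive uniqueness from joint monicity of the singleton projections, both of which the paper leaves implicit in its diagram chases and in the universal property of the meet.
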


For each $U \in \C$, write $\rho_U : \natjoin A \to A_U$ for the restriction of $F_{U \subseteq X}$ to $\natjoin_A$;
it generalizes the restriction of global sections to local sections over $U$.
\autoref{def:contextuality.sets} then extends to

\begin{definition}\label{def:contextuality.general}
An $\DD$-valued separated presheaf $A$ is said to be logically contextual if not every $\rho_U : \natjoin A \to A_U$ is a regular epi.
$A$ is moreover said to be strongly contextual if $\natjoin A$ is not well-supported, i.e.\ if the unique arrow ${!}_{\natjoin A} : \natjoin A \to 1$ is not a regular epi.
\end{definition}

Rewriting this in the internal language of $\DD$, the strong contextuality of $A$ means that
$\DD$ fails $\exists \bar{x} : F_X \ldot \natjoin A(\bar{x})$, i.e., that no global section $\bar{x}$ satisfies all the constraints $A_U$.
The logical contextuality means that
\begin{gather*}
\bar{x} : F_V \mid A_V(\bar{x}) \vdash \exists \bar{y} : F_{X \setminus V} \ldot \natjoin A(\langle \bar{x}, \bar{y} \rangle)
\end{gather*}
fails in $\DD$ for some $V \in \C$,
i.e., that not every local section $\bar{x}$ over $V$ satisfying $A_V$ extends to a global section $\langle \bar{x}, \bar{y} \rangle$ satisfying all $A_U$.
(We should stress that the definition makes sense for any separated presheaves $A$ and not just no-signalling ones.)

\subsection{Contextual Interpretation}\label{sec:general.interpretation}

In \autoref{def:inchworm.logic} we defined a contextual language $\L_\C$ and logic $\vdash_\C$ simply as a global language $\L$ and logic $\vdash$ paired with their contextual fragments.
Our definition of an interpretation of them in regular categories goes in parallel.

\begin{definition}\label{def:inchworm.interpretation}
Given a contextual language $\L_\C = (\L, \Phi_\C)$, an interpretation of it in a regular category $\DD$ is simply an interpretation $\Scott{-}$ of $\L$ in $\DD$.
The images of $T_x$ and $\Phi_U$ then play special r\^oles:
For each $\bar{x} \in \C$, we have
\begin{itemize}
\item
$\Scott{T_{\bar{x}}} = \prod_{x \in \bar{x}} \Scott{T_x}$;
therefore $\Scott{T_{-}} : \C^\op \to \DD$ forms a sheaf by \autoref{thm:separated.presheaf.general}.
\item
Moreover, $\Scott{\, \bar{x} : T_{\bar{x}} \mid \varphi \,} \monoto \Scott{T_{\bar{x}}}$ for each $\varphi \in \Phi_{\bar{x}}$.
\end{itemize}
So we may write $(\Scott{-}, F)$ for the interpretation $\Scott{-}$, where $F$ is the sheaf $F : \bar{x} \mapsto \Scott{T_{\bar{x}}}$.
We may also write $\Scott{\varphi}_{\bar{x}} \monoto F_{\bar{x}}$ for $\Scott{\, \bar{x} : T_{\bar{x}} \mid \varphi \,}$.
\end{definition}

\begin{example}\label{eg:AvN.interpretation}
Expanding \autoref{eg:AvN.syntax}, take $\Scott{-}$ in $\Sets$ with $\Scott{T} = \2$ and the obvious $\Scott{0}$, $\Scott{1}$, and $\Scott{\oplus}$.
Then we have a sheaf $\Scott{T_{-}} : U \mapsto \2^U$ and, e.g.,
\begin{gather*}
\begin{tikzpicture}[x=25pt,y=25pt,baseline=(A.base)]
\node (A) [anchor=east,inner sep=0.25em] at (0,0) {$\Scott{\, x : T, y : T \mid x \oplus y = 0 \,}$};
\node (B) [anchor=west,inner sep=0.25em] at ($ (A.east) + (1.5,0) $) {$\Scott{T_{\{ x, y \}}} = \2 \times \2$};
\node (D) [anchor=west,inner sep=0.25em] at ($ (B.east) + (3,0) $) {$\2$};
\node (C) [transform canvas={yshift=-3.2pt},inner sep=0.25em] at ($ (B.east) + (1.5,0) $) {$1$};
\draw [>->] (A) -- (B);
\draw [transform canvas={yshift=3.2pt},->] (B) -- (D) node [pos=0.5,inner sep=2pt,above] {$\Scott{\oplus}$};
\draw [transform canvas={yshift=-3.2pt},->] (B) -- (C) node [pos=0.5,inner sep=2pt,below] {${!}$};
\draw [transform canvas={yshift=-3.2pt},->] (C) -- (D) node [pos=0.5,inner sep=2pt,below] {$\Scott{0}$};
\end{tikzpicture}
\end{gather*}
is an equalizer of $\Scott{\oplus}$ and $\Scott{0} \cmp {!}$.
\end{example}

An interpretation $\Scott{-}$ of $\L$ is said to model a sequent $\Gamma \vdash \varphi$ if some finite $\Delta \subseteq \Gamma$ has $\bigwedge_{\psi \in \Delta} \Scott{\psi}_U \leqslant \Scott{\varphi}_U$.
This makes sense whether $U \in \C$ or not.
Nevertheless, if $U \notin \C$, then $\bigwedge_{\psi \in \Delta} \Scott{\psi}_U \leqslant \Scott{\varphi}_U$ only means the global entailment and not the local one.
Take, e.g.,

\begin{example}\label{eg:AvN.model}
Expanding \autoref{eg:AvN.interpretation}, the presheaf model $A$ of the PR box, (\hyperref[pic:PR]{d}) of \autoref{fig:Hardy.PR}, is a subpresheaf of $\Scott{T_{-}}$ described by \eqref{eq:AvN.PR}:
E.g.\ $A_{\{ a_i, b_1 \}} = \Scott{\, a_1 \oplus b_1 = 0 \,}_{\{ a_1, b_1 \}} \monoto \2 \times \2$.
Then the global inconsistency, and strong contextuality $\Gamma \vdash \bot$ in particular, of the equations $\Gamma$ in \eqref{eq:AvN.PR} means
$\natjoin A = \bigcap_{\varphi \in \Gamma} \Scott{\varphi}_X \subseteq \Scott{\bot}_X = \varnothing$.
Yet $\Gamma$ is locally consistent, modelled by the PR box.
\end{example}

This is why, to model the inchworm logic of local inference, we need a presheaf on different contexts $U \in \C$, as opposed to an intersection in a single context $V \notin \C$, to the left of $\leqslant$.

\begin{definition}\label{def:inchworm.pre-model}
Suppose $(\Scott{-}, F)$ is an interpretation of a contextual language $\L_\C$.
Then let us say that a subpresheaf $A \monoto F$ is a \emph{pre-model} in $(\Scott{-}, F)$ of a formula $\varphi \in \Phi_U$ in a context $U \in \C$, and write $A \vDash_U \varphi$, to mean that $A_U \leqslant \Scott{\varphi}_U$.
\end{definition}

\begin{fact}\label{thm:subpresheaf.pre-model}
If $A \leqslant B$ for subpresheaves $A$ and $B$ of $F$, then $B \vDash_U \varphi$ implies $A \vDash_U \varphi$.
\end{fact}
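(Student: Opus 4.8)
The plan is to reduce the statement to transitivity of the subobject order on $F_U$. First I would unwind the hypothesis $A \leqslant B$: since $A$ and $B$ are subpresheaves of the same sheaf $F$, the relation $A \leqslant B$ is understood componentwise, meaning $A_U \leqslant B_U$ in the poset $\Sub_\DD(F_U)$ for every $U \in \C$ (here both $A_U$ and $B_U$ are regarded as subobjects of the fixed object $F_U$ via their respective monos $A_U \monoto F_U$ and $B_U \monoto F_U$). Next I would unwind the hypothesis $B \vDash_U \varphi$: by \autoref{def:inchworm.pre-model} this says exactly that $B_U \leqslant \Scott{\varphi}_U$ in $\Sub_\DD(F_U)$, where $\Scott{\varphi}_U \monoto F_U$ is the interpretation of $\varphi$ in the context $U$.

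With both inequalities living in the single poset $\Sub_\DD(F_U)$, I would then chain them as $A_U \leqslant B_U \leqslant \Scott{\varphi}_U$ and invoke transitivity to obtain $A_U \leqslant \Scott{\varphi}_U$, which is precisely $A \vDash_U \varphi$ by \autoref{def:inchworm.pre-model}. Note that neither naturality of the restriction maps nor the sheaf/separated condition is needed: everything is compared inside the one lattice $\Sub_\DD(F_U)$ attached to the single context $U$, and all three subobjects $A_U$, $B_U$, $\Scott{\varphi}_U$ are presented as subobjects of the same $F_U$ from the outset, so the comparisons are consistent. I expect no genuine obstacle here beyond making the componentwise meaning of $\leqslant$ on subpresheaves explicit; the entire content is carried by the fact that $\Sub_\DD(F_U)$ is a partial order and hence transitive.
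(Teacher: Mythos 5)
Your proposal is correct and is exactly the argument the paper leaves implicit: the paper's proof consists of the single line ``Immediate by \autoref{def:inchworm.pre-model} of $\vDash_U$'', and your unwinding---$A_U \leqslant B_U \leqslant \Scott{\varphi}_U$ in $\Sub_\DD(F_U)$, concluded by transitivity---is precisely what that one-liner abbreviates. You have simply made the componentwise meaning of $\leqslant$ on subpresheaves explicit, which is a faithful expansion of the same approach rather than a different route.
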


Note, however, that this notion of pre-model is context-dependent and concerns formulas in contexts as opposed to formulas \textit{per se}.
When $U \subseteq V \in \C$ and $\varphi \in \Phi_U$, \autoref{thm:no-signalling.subpresheaf} yields
\begin{enumerate}
\setcounter{enumi}{\value{equation}}
\item\label{item:pre-model.up}
$A \vDash_U \varphi$ entails $A \vDash_V \varphi$, because
\begin{gather*}
\def\fCenter{\leqslant}
\AX$\exists_{F_{U \subseteq V}}(A_V) \fCenter A_U \leqslant \Scott{\varphi}_U$
\UI$A_V \fCenter {F_{U \subseteq V}}^{-1} \Scott{\varphi}_U = \Scott{\varphi}_V$
\DP
\end{gather*}
\item\label{item:pre-model.down}
Suppose $A$ is no-signalling.
Then $A \vDash_V \varphi$ entails $A \vDash_U \varphi$.
This is because
\begin{gather*}
\def\fCenter{\leqslant}
\AX$A_V \fCenter \Scott{\varphi}_V = {F_{U \subseteq V}}^{-1} \Scott{\varphi}_U$
\UI$A_U = \exists_{F_{U \subseteq V}}(A_V) \fCenter \Scott{\varphi}_U$
\DP
\end{gather*}
\setcounter{equation}{\value{enumi}}
\end{enumerate}
If $A$ is not no-signalling, \eqref{item:pre-model.down} may fail, and then inchworm inference fails.
E.g., in \eqref{eq:AvN.PR.Charlie.steps}, the first step purports to show that, if $A \vDash_U a_1 = b_1$ and $A \vDash_U a_1 = c$, then $A \vDash_U b_1 = c$ and so $A \vDash_{U \cap V} b_1 = c$;
but the ``and so'' step here requires \eqref{item:pre-model.down}.
In this sense, as mentioned in \autoref{sec:model.no-signalling}, no-signalling means the context-independent coherence of a presheaf as a model of formulas.
Therefore

\begin{definition}\label{def:inchworm.model}
A pre-model $A$ is called a (no-signalling) \emph{model} if it is no-signalling.
Then we say that $A$ is a model of a formula $\varphi \in \Phi_\C$, and write $A \vDash \varphi$, to mean that $A$ is a pre-model of $\varphi$ in any suitable context, i.e., that $A_U \leqslant \Scott{\varphi}_U$ for every $U \in \C$ such that $\varphi \in \Phi_U$.
\end{definition}

\begin{theorem}\label{thm:soundness.model}
Let $\Scott{-}$ be an interpretation of $\L_\C$ that models a theory $\vdash$ in $\L$.
Then the inchworm logic $\vdash_\C$ of $\vdash$ is sound with respect to the no-signalling models in $\Scott{-}$:
If $\Gamma \vdash_\C \varphi$, then $A \vDash \varphi$ for every no-signalling model $A$ of $\Gamma$ in $\Scott{-}$.
\end{theorem}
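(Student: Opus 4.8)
The plan is to induct on the derivation of $\Gamma \vdash_\C \varphi$ following the two clauses \eqref{item:inchworm.base} and \eqref{item:inchworm.inductive} of \autoref{def:inchworm.logic}, but first I would isolate the one lemma that drives the whole argument: for a no-signalling model $A$, the pre-model relation $A \vDash_U \varphi$ is independent of the witnessing context $U$. To see this, suppose $\varphi \in \Phi_U$ and $\varphi \in \Phi_{U'}$; then both $U$ and $U'$ contain the set $\bar{x}$ of free variables of $\varphi$, and since $\C$ is a simplicial complex and hence $\subseteq$-downward closed, $\bar{x}$ is itself a context in $\C$ with $\varphi \in \Phi_{\bar{x}}$. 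Applying the downward transfer \eqref{item:pre-model.down} along $\bar{x} \subseteq U$ and then the upward transfer \eqref{item:pre-model.up} along $\bar{x} \subseteq U'$ gives $A \vDash_U \varphi \iff A \vDash_{\bar{x}} \varphi \iff A \vDash_{U'} \varphi$. Hence, for no-signalling $A$, establishing $A \vDash_U \varphi$ in a single context $U$ already secures $A \vDash \varphi$ in the sense of \autoref{def:inchworm.model}.

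This context-independence is where I expect the genuine content to lie, and it is the \emph{only} place where the no-signalling hypothesis is used, through \eqref{item:pre-model.down}; the remainder is bookkeeping over the inductive clauses. For the base case \eqref{item:inchworm.base}, I have $U \in \C$ with $\varphi \in \Phi_U$ and $\Gamma_U \vdash \varphi$. Since $\Scott{-}$ models the theory $\vdash$, it models this sequent, so some finite $\Delta \subseteq \Gamma_U$ satisfies $\bigwedge_{\psi \in \Delta} \Scott{\psi}_U \leqslant \Scott{\varphi}_U$. Each $\psi \in \Delta$ lies in $\Gamma_U \subseteq \Phi_U$, and as $A$ is a model of $\Gamma$ we have $A_U \leqslant \Scott{\psi}_U$; intersecting over the finite $\Delta$ yields $A_U \leqslant \bigwedge_{\psi \in \Delta} \Scott{\psi}_U \leqslant \Scott{\varphi}_U$, that is, $A \vDash_U \varphi$, which upgrades to $A \vDash \varphi$ by the lemma.

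For the inductive case \eqref{item:inchworm.inductive}, suppose $\Gamma, \Delta \vdash_\C \psi$ comes from $\Gamma \vdash_\C \varphi$ and $\Delta, \varphi \vdash_\C \psi$, and let $A$ be any no-signalling model of $\Gamma \cup \Delta$. Then $A$ is in particular a model of $\Gamma$, so the induction hypothesis applied to $\Gamma \vdash_\C \varphi$ gives $A \vDash \varphi$; together with $A$ being a model of $\Delta$, this makes $A$ a model of $\Delta \cup \{\varphi\}$, and the induction hypothesis applied to $\Delta, \varphi \vdash_\C \psi$ then delivers $A \vDash \psi$. The decisive point is that $A \vDash \varphi$, obtained with one witnessing context, is immediately available as a hypothesis in the possibly different context witnessing the second premise --- which is legitimate precisely because the lemma has rendered $\vDash$ context-free for no-signalling models. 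This is the formal shadow of the inchworm stepping down to $U \cap V$ and back up, and it is exactly why no-signalling cannot be dropped.
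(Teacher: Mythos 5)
Your proof is correct, but it takes a genuinely different route from the paper's. The paper proves this theorem as a one-line corollary of the filter-model soundness theorem (\autoref{thm:soundness.filter}): every no-signalling model $A$ is a ``principal'' filter model $U \mapsto \upset{A_U}$, and the induction over \eqref{item:inchworm.base} and \eqref{item:inchworm.inductive} is carried out once, at the filter level, where the coherence condition $G_U = \{\, S \mid {F_{U \subseteq V}}^{-1}(S) \in G_V \,\}$ plays the r\^ole that no-signalling plays for you. You instead give a direct induction on the derivation of $\Gamma \vdash_\C \varphi$, and your isolation of the context-independence lemma --- that for no-signalling $A$, $A \vDash_U \varphi$ holds for one context iff it holds for all, proved by descending via \eqref{item:pre-model.down} to the context of free variables of $\varphi$ (which lies in $\C$ by downward closure) and ascending via \eqref{item:pre-model.up} --- is a genuine gain in explicitness: it pins down exactly where no-signalling enters, namely only in upgrading the single-context conclusion of the base case to $A \vDash \varphi$, after which the transitivity clause \eqref{item:inchworm.inductive} is pure bookkeeping, as you note. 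It is worth observing that the paper's filter proof concludes $G \vDash \varphi$ in its base case from membership of $\Scott{\varphi}_U$ in the single witnessing filter $G_U$, tacitly relying on the same context transfer through the filter coherence condition that you spell out. What the paper's detour buys is the strictly stronger \autoref{thm:soundness.filter} for the same inductive effort; what your argument buys is a self-contained proof that never mentions filters and makes the semantic mechanism of the inchworm --- and the precise necessity of the no-signalling hypothesis --- visible inside the proof itself.
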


\subsection{The Inchworm and No-Signalling}\label{sec:theorem.no-signalling}

\autoref{sec:general.interpretation} primarily concerned how given presheaves modelled formulas.
We showed in particular that no-signalling validated inchworm inference.
Let us discuss, on the other hand, how the description by given formulas yields a model.
This shows the other direction of the connection between no-signalling and the inchworm, from the latter to the former.

We say a set $\Gamma \subseteq \Phi_\C$ of formulas of $\L_\C$ is \emph{$\C$-finite} if $\Gamma_U$ is finite for each $U \in \C$.

\begin{definition}
Let $(\Scott{-}, F)$ be an interpretation of $\L_\C$.
Given any $\C$-finite $\Gamma \subseteq \Phi_\C$, define $\MM[F]{\Gamma}$ as a family $(\MM[F]{\Gamma}_U = \bigwedge_{\varphi \in \Gamma_U} \Scott{\varphi}_U \monoto F_U)_{U \in \C}$ of subobjects of $F_U$.
\end{definition}

\begin{fact}\label{thm:description.pre-model}
$\MM[F]{\Gamma}$ is the largest subpresheaf $A$ of $F$ such that $A \vDash_U \Gamma_U$ for each $U \in \C$.
\end{fact}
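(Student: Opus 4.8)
The plan is to establish two things, corresponding to the two words ``largest subpresheaf'' in the statement. First I would show that the family $\MM[F]{\Gamma}$, which \autoref{def:inchworm.pre-model}'s preceding definition only introduces as a family of subobjects indexed by $U \in \C$, genuinely assembles into a subpresheaf of $F$. Second I would show that it dominates every subpresheaf $A \monoto F$ satisfying $A \vDash_U \Gamma_U$ for all $U$. The pre-model property of $\MM[F]{\Gamma}$ itself and the maximality are both essentially immediate once subpresheaf-hood is in hand, so the real content is the first step.

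For the subpresheaf check I would invoke the criterion of \autoref{thm:no-signalling.subpresheaf}: it suffices to verify $\MM[F]{\Gamma}_V \leqslant {F_{U \subseteq V}}^{-1}(\MM[F]{\Gamma}_U)$ whenever $U \subseteq V \in \C$. Two inputs drive this. One is that weakening a formula to a larger context is pullback, i.e.\ $\Scott{\psi}_V = {F_{U \subseteq V}}^{-1}\Scott{\psi}_U$ for $\psi \in \Phi_U$, exactly the identity used in \eqref{item:pre-model.up}. The other is that, since $\Gamma$ is $\C$-finite, each $\MM[F]{\Gamma}_U$ is a \emph{finite} meet and ${F_{U \subseteq V}}^{-1}$, as the right adjoint of $\exists_{F_{U \subseteq V}}$, preserves it. Concretely, $U \subseteq V$ gives $\Phi_U \subseteq \Phi_V$ and hence $\Gamma_U \subseteq \Gamma_V$, so the meet over $\Gamma_V$ is below the meet over the subfamily $\Gamma_U$ taken in context $V$, and then $\bigwedge_{\psi \in \Gamma_U}\Scott{\psi}_V = \bigwedge_{\psi \in \Gamma_U}{F_{U \subseteq V}}^{-1}\Scott{\psi}_U = {F_{U \subseteq V}}^{-1}\big(\bigwedge_{\psi \in \Gamma_U}\Scott{\psi}_U\big) = {F_{U \subseteq V}}^{-1}(\MM[F]{\Gamma}_U)$. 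Chaining the two inequalities yields the criterion.

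With $\MM[F]{\Gamma}$ now a subpresheaf, the pre-model property is read off from the definition, since $\MM[F]{\Gamma}_U = \bigwedge_{\varphi \in \Gamma_U}\Scott{\varphi}_U \leqslant \Scott{\varphi}_U$ for each $\varphi \in \Gamma_U$ gives $\MM[F]{\Gamma} \vDash_U \Gamma_U$. For maximality, suppose $A \monoto F$ is any subpresheaf with $A \vDash_U \Gamma_U$ for all $U$; unwinding \autoref{def:inchworm.pre-model} this says $A_U \leqslant \Scott{\varphi}_U$ for every $\varphi \in \Gamma_U$, so $A_U \leqslant \bigwedge_{\varphi \in \Gamma_U}\Scott{\varphi}_U = \MM[F]{\Gamma}_U$ for each $U$, i.e.\ $A \leqslant \MM[F]{\Gamma}$. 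I expect the sole genuine subtlety to be the subpresheaf verification, and within it the interchange of pullback with the finite meet: this is precisely where $\C$-finiteness is needed and where the adjunction $\exists_{F_{U \subseteq V}} \dashv {F_{U \subseteq V}}^{-1}$ (a right adjoint preserving meets) does the work; the inclusion $\Gamma_U \subseteq \Gamma_V$ and the maximality step are purely order-theoretic bookkeeping.
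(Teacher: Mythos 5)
Your proposal is correct and follows essentially the same route as the paper's own proof: the subpresheaf criterion of \autoref{thm:no-signalling.subpresheaf}, verified via $\Gamma_U \subseteq \Gamma_V$, the pullback identity $\Scott{\psi}_V = {F_{U \subseteq V}}^{-1}\Scott{\psi}_U$, and meet-preservation by the right adjoint ${F_{U \subseteq V}}^{-1}$, with the pre-model property and maximality then read off directly from the definition of $\MM[F]{\Gamma}_U$ as the meet $\bigwedge_{\varphi \in \Gamma_U}\Scott{\varphi}_U$. Your explicit remark that $\C$-finiteness is what guarantees this meet is a \emph{finite} one (and hence exists in $\Sub_\DD(F_U)$ for a general regular category) is a slightly more careful articulation of a point the paper leaves implicit.
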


$\MM[F]{\Gamma}$ generally fails to be no-signalling.
E.g., $A$ in (\hyperref[pic:signal]{e}) of \autoref{fig:no-signalling} is $\MM[F]{\Gamma}$ given by $\Gamma = \{ \varphi \}$ for $\varphi = (a_2 \wedge \lnot b_1) \vee (\lnot a_2 \wedge \lnot b_1)$;
since $\varphi$ cannot be in the context $\{ b_1 \}$, $\Gamma_{\{ b_1 \}} = \varnothing$ and $A_{\{ b_1 \}} = \2$.
Yet the description by $\Gamma$ sometimes manages to give a no-signalling $\MM[F]{\Gamma}$.

\begin{fact}\label{thm:description.model}
Let $(\Scott{-}, F)$ be an interpretation of $\L_\C$ that models a theory $\vdash$ in $\L$.
We say $\Gamma \subseteq \Phi_\C$ is \emph{inchworm-saturated} if $\Gamma_V \vdash \varphi$ implies $\Gamma_U \vdash \exists_{V \setminus U} \ldot \varphi$ whenever $U \subseteq V \in \C$ and $\varphi \in \Phi_V$.
Now, if a $\C$-finite $\Gamma$ is inchworm-saturated, then $\MM[F]{\Gamma}$ is no-signalling.
\end{fact}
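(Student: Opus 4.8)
The plan is to verify the no-signalling criterion of \autoref{thm:no-signalling.subpresheaf} directly. By \autoref{thm:description.pre-model}, $\MM[F]{\Gamma}$ is a subpresheaf of $F$, hence a separated presheaf, and \autoref{thm:no-signalling.subpresheaf} tells us that it is no-signalling precisely when $\exists_{F_{U \subseteq V}}(\MM[F]{\Gamma}_V) = \MM[F]{\Gamma}_U$ for all $U \subseteq V \in \C$. Since $\MM[F]{\Gamma}$ is a subpresheaf, the inequality $\exists_{F_{U \subseteq V}}(\MM[F]{\Gamma}_V) \leqslant \MM[F]{\Gamma}_U$ already holds by that same fact, so the entire content of the statement is the reverse inequality $\MM[F]{\Gamma}_U \leqslant \exists_{F_{U \subseteq V}}(\MM[F]{\Gamma}_V)$, which is where inchworm-saturation must enter.

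To set this up, I would first translate both sides into the internal language. Because $\Gamma$ is $\C$-finite, $\Gamma_V$ is finite, so $\theta_V = \bigwedge_{\varphi \in \Gamma_V} \varphi$ is a legitimate formula of $\Phi_V$ (with $\theta_V = \top$ when $\Gamma_V = \varnothing$), and by the interpretation of $\wedge$ we have $\MM[F]{\Gamma}_V = \bigwedge_{\varphi \in \Gamma_V} \Scott{\varphi}_V = \Scott{\theta_V}_V$. Moreover, since $F$ is the sheaf $\bar{x} \mapsto \Scott{T_{\bar{x}}}$ of types, $F_{U \subseteq V} : F_V \to F_U$ is the projection forgetting the coordinates in $V \setminus U$, so the regular-category operation $\exists_{F_{U \subseteq V}}$ is exactly the interpretation of the syntactic quantifier $\exists_{V \setminus U}$; that is, $\exists_{F_{U \subseteq V}}(\Scott{\theta_V}_V) = \Scott{\exists_{V \setminus U} \ldot \theta_V}_U$. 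Thus the target inequality reads $\MM[F]{\Gamma}_U \leqslant \Scott{\exists_{V \setminus U} \ldot \theta_V}_U$.

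The heart of the argument is then a single application of inchworm-saturation. In the theory $\vdash$ we have $\Gamma_V \vdash \theta_V$ (finite $\wedge$-introduction, valid since $\Gamma_V$ is finite), and $\theta_V \in \Phi_V$, so inchworm-saturation yields $\Gamma_U \vdash \exists_{V \setminus U} \ldot \theta_V$. Because $\Scott{-}$ models $\vdash$, it models this sequent: some finite $\Delta \subseteq \Gamma_U$ satisfies $\bigwedge_{\psi \in \Delta} \Scott{\psi}_U \leqslant \Scott{\exists_{V \setminus U} \ldot \theta_V}_U$. Using $\C$-finiteness once more, $\Gamma_U$ itself is finite, and enlarging $\Delta$ to all of $\Gamma_U$ only shrinks the meet, so $\MM[F]{\Gamma}_U = \bigwedge_{\psi \in \Gamma_U} \Scott{\psi}_U \leqslant \Scott{\exists_{V \setminus U} \ldot \theta_V}_U = \exists_{F_{U \subseteq V}}(\MM[F]{\Gamma}_V)$. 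Combined with the automatic reverse inequality this gives the required equality, and \autoref{thm:no-signalling.subpresheaf} concludes that $\MM[F]{\Gamma}$ is no-signalling.

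The step I expect to demand the most care is the semantic/syntactic bridge of the second paragraph: pinning down that $\exists_{F_{U \subseteq V}}$ is genuinely the interpretation of $\exists_{V \setminus U}$, which rests on $F$ being the sheaf of types (so that restriction is projection) together with the standard fact that regular-logic $\exists$ is interpreted by the image functor. The remaining moves---forming $\theta_V$, the derivation $\Gamma_V \vdash \theta_V$, and replacing a finite $\Delta$ by all of $\Gamma_U$---are routine once $\C$-finiteness is invoked, and the degenerate case $\Gamma_V = \varnothing$ is absorbed uniformly by taking $\theta_V = \top$.
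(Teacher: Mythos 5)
Your proposal is correct and takes essentially the same route as the paper's own proof: apply inchworm-saturation to $\Gamma_V \vdash \bigwedge \Gamma_V$, interpret the resulting sequent $\Gamma_U \vdash \exists_{V \setminus U} \ldot \bigwedge \Gamma_V$ in $\Scott{-}$ to obtain $\MM[F]{\Gamma}_U \leqslant \Scott{\exists_{V \setminus U} \ldot \bigwedge \Gamma_V}_U = \exists_{F_{U \subseteq V}}(\MM[F]{\Gamma}_V)$, and conclude by \autoref{thm:no-signalling.subpresheaf}. The paper's version is merely terser, leaving implicit the points you spell out (the degenerate case $\Gamma_V = \varnothing$ via $\top$, the identification of $\exists_{F_{U \subseteq V}}$ with the interpretation of the syntactic $\exists_{V \setminus U}$, and the enlargement of a finite $\Delta \subseteq \Gamma_U$ to all of $\Gamma_U$).
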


When $\Gamma$ is inchworm-saturated, it may not be deductively closed, but the inchworm cannot bring a new piece of information $\psi$ to a context $U$ from another $V$, since $\psi$ follows from the information $\Gamma_U$ that $U$ already has.
\autoref{thm:description.model} means that, if $\Gamma$ is inchworm-saturated and if each $\Gamma_U$ finite and consistent (and has $\MM[F]{\Gamma}_U$ nonempty or well-supported), then $\Gamma$ is locally consistent, modelled by a no-signalling model $\MM[F]{\Gamma}$.
E.g., \eqref{eq:AvN.PR} is inchworm-saturated, with each context consistent, so it gives the PR box, (\hyperref[pic:PR]{d}) of \autoref{fig:Hardy.PR}, as $\MM[F]{\Gamma}$.

On the other hand, even when a description $\Gamma$ is not inchworm-saturated and $\MM[F]{\Gamma}$ fails to be no-signalling, the inchworm can carve out the ``no-signalling interior'' of $\MM[F]{\Gamma}$, if $\Gamma$ can be saturated in finite (or $\C$-finite) steps.

\begin{theorem}\label{thm:inchworm.saturation}
Let $(\Scott{-}, F)$ be an interpretation of $\L_\C$ that models a theory $\vdash$ in $\L$.
Given $\Gamma \subseteq \Phi_\C$, suppose there is a $\C$-finite and inchworm-saturated $\Delta \subseteq \Phi_\C$ such that $\Gamma \subseteq \Delta$ and $\Gamma \vdash_\C \varphi$ for all $\varphi \in \Delta$.
Then $\MM[F]{\Delta}$ is the largest no-signalling subpresheaf of $\MM[F]{\Gamma}$.
\end{theorem}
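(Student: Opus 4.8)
The plan is to verify the two halves of the phrase \emph{largest no-signalling subpresheaf} separately: first that $\MM[F]{\Delta}$ is a no-signalling subpresheaf of $\MM[F]{\Gamma}$ at all, and then that it contains every such subpresheaf.

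For the first half, since $\Delta$ is $\C$-finite and inchworm-saturated and $(\Scott{-}, F)$ models $\vdash$, \autoref{thm:description.model} gives that $\MM[F]{\Delta}$ is no-signalling, and in particular a separated subpresheaf of $F$. To see that it sits inside $\MM[F]{\Gamma}$, I use the hypothesis $\Gamma \subseteq \Delta$: for each $U \in \C$ we have $\Gamma_U \subseteq \Delta_U$, so $\MM[F]{\Delta}_U = \bigwedge_{\varphi \in \Delta_U} \Scott{\varphi}_U$ is a meet over a larger index set than $\MM[F]{\Gamma}_U = \bigwedge_{\varphi \in \Gamma_U} \Scott{\varphi}_U$, whence $\MM[F]{\Delta}_U \leqslant \MM[F]{\Gamma}_U$ in $\Sub_\DD(F_U)$. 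As both are subpresheaves of $F$ (the latter by \autoref{thm:description.pre-model}) and these levelwise inclusions commute with the restriction arrows inherited from $F$, this exhibits $\MM[F]{\Delta}$ as a no-signalling subpresheaf of $\MM[F]{\Gamma}$.

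For maximality, let $B$ be an arbitrary no-signalling subpresheaf of $\MM[F]{\Gamma}$; the goal is $B \leqslant \MM[F]{\Delta}$, i.e.\ $B_U \leqslant \Scott{\varphi}_U$ for every $U \in \C$ and $\varphi \in \Delta_U$, which is exactly $B \vDash \varphi$ for every $\varphi \in \Delta$. First I note that $B$ is a no-signalling \emph{model} of $\Gamma$: by \autoref{thm:description.pre-model} we have $\MM[F]{\Gamma} \vDash_U \Gamma_U$ for each $U$, so \autoref{thm:subpresheaf.pre-model} applied to $B \leqslant \MM[F]{\Gamma}$ gives $B \vDash_U \Gamma_U$, and collecting over contexts yields $B \vDash \psi$ for all $\psi \in \Gamma$. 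Now I invoke soundness, \autoref{thm:soundness.model}: by hypothesis $\Gamma \vdash_\C \varphi$ for every $\varphi \in \Delta$, so, $B$ being a no-signalling model of $\Gamma$, we obtain $B \vDash \varphi$ for every $\varphi \in \Delta$. Intersecting $B_U \leqslant \Scott{\varphi}_U$ over $\varphi \in \Delta_U$ gives $B_U \leqslant \MM[F]{\Delta}_U$ for each $U$, i.e.\ $B \leqslant \MM[F]{\Delta}$.

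The delicate point, which I expect to be the crux, is the maximality direction and the essential appeal to no-signalling within it: soundness of the inchworm logic $\vdash_\C$ holds only with respect to no-signalling models, so the hypothesis that $B$ is no-signalling must be used before \autoref{thm:soundness.model} can be applied---this is precisely where an arbitrary, possibly signalling, subpresheaf of $\MM[F]{\Gamma}$ would fail to be dominated by $\MM[F]{\Delta}$. Everything else is monotonicity of the meet construction in its set of formulas together with the already-established facts, and requires no fresh computation.
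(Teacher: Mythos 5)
Your proof is correct and follows essentially the same route as the paper's: no-signalling of $\MM[F]{\Delta}$ via \autoref{thm:description.model}, containment in $\MM[F]{\Gamma}$ from $\Gamma \subseteq \Delta$, and maximality by showing any no-signalling subpresheaf $B$ of $\MM[F]{\Gamma}$ models $\Gamma$ (Facts \ref{thm:subpresheaf.pre-model} and \ref{thm:description.pre-model}) and then invoking soundness (\autoref{thm:soundness.model}) with $\Gamma \vdash_\C \varphi$ for all $\varphi \in \Delta$. The only cosmetic difference is that you verify $\MM[F]{\Delta} \leqslant \MM[F]{\Gamma}$ by direct monotonicity of the meets, where the paper routes it through the ``largest subpresheaf'' characterization of \autoref{thm:description.pre-model}; both are immediate.
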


Take again the example from right after \autoref{thm:description.pre-model}:
$\MM[F]{\Gamma}$ of $\Gamma = \{ \varphi \}$ for $\varphi = (a_2 \wedge \lnot b_1) \vee (\lnot a_2 \wedge \lnot b_1)$ is $A$ in (\hyperref[pic:signal]{e}) of \autoref{fig:no-signalling} and fails to be no-signalling.
Yet $\varphi \vdash \lnot b_1$, so $\Gamma \vdash_\C \lnot b_1$, and $\Delta = \Gamma \cup \{ \lnot b_1 \}$ is inchworm-saturated, with $\lnot b_1 \in \Delta_{\{ b_1 \}}$.
Hence, by \autoref{thm:inchworm.saturation}, the inchworm carves out a no-signalling $\MM[F]{\Delta}$ by removing the red sections from (\hyperref[pic:signal]{e}).
Indeed, in many applications (e.g.\ all the examples in Sections \ref{sec:model} and \ref{sec:logic}), the theory $\vdash$ satisfies
\begin{enumerate}
\setcounter{enumi}{\value{equation}}
\item\label{item:finite.theory}
Given any $\Gamma \subseteq \Phi_\C$ (that may not be $\C$-finite), for each $U \in \C$ there is a finite $\Delta_U \subseteq \Gamma_U$ such that $\Delta_U \vdash \varphi$ for all $\varphi \in \Gamma_U$.
\setcounter{equation}{\value{enumi}}
\end{enumerate}
This guarantees the supposition of \autoref{thm:inchworm.saturation}:
Given any $\C$-finite $\Gamma \subseteq \Phi_\C$, take its $\vdash_\C$-deductive closure $\Gamma^\ast = \{\, \varphi \mid \Gamma \vdash_\C \varphi \,\}$ as $\Gamma$ in \eqref{item:finite.theory} and obtain $\Delta_U$;
then $\Delta = \Gamma \cup \bigcup_{U \in \C} \Delta_U$ is such as in \autoref{thm:inchworm.saturation}.
Therefore \autoref{thm:inchworm.saturation} applies and leads to a family of completeness results organized by \autoref{thm:completeness.transfer.model}, which transfers a completeness theorem of a global theory to its inchworm fragment.
It yields, e.g., \autoref{thm:completeness.regular.model}, since any (global) regular theory has a ``conservative model'' in a ``classifying category'' (e.g.\ \cite[Proposition 6.4]{but98}).

\begin{lemma}\label{thm:completeness.transfer.model}
Suppose that a theory $\vdash$ in $\L$ satisfies \eqref{item:finite.theory}, and that $\Scott{-}$ is a conservative model of $\vdash$, meaning that, for any $\Gamma \subseteq \Phi_\C$, $\bigwedge_{\psi \in \Delta} \Scott{\psi}_U \leqslant \Scott{\varphi}_U$ for some $\Delta \subseteq \Gamma$ if but also only if $\Gamma \vdash \varphi$.
Then $\Gamma \vdash_\C \varphi$ iff $A \vDash \varphi$ for every no-signalling model $A$ of $\Gamma$ in $\Scott{-}$.
\end{lemma}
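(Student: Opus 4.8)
The plan is to prove the two implications of the ``iff'' separately. The forward direction, that $\Gamma \vdash_\C \varphi$ forces $A \vDash \varphi$ for every no-signalling model $A$ of $\Gamma$, is precisely the soundness result \autoref{thm:soundness.model}, so nothing new is required there. All the content lies in the converse: assuming that every no-signalling model of $\Gamma$ satisfies $\varphi$, I must actually produce a derivation $\Gamma \vdash_\C \varphi$. The strategy is the standard one for completeness—exhibit a single \emph{canonical} no-signalling model that is as weak as possible over $\Gamma$, note that it satisfies $\varphi$ by hypothesis, and then read a syntactic derivation back off that semantic fact.

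For the canonical model I would invoke the saturation machinery already in place. Form the inchworm closure $\Gamma^\ast = \{\, \psi \mid \Gamma \vdash_\C \psi \,\}$ and apply \eqref{item:finite.theory} to it; as in the discussion preceding the statement, this yields a $\C$-finite, inchworm-saturated $\Delta \subseteq \Gamma^\ast$ each of whose components $\Delta_U = \Delta \cap \Phi_U$ is finite and satisfies $\Delta_U \vdash \psi$ for every $\psi \in (\Gamma^\ast)_U$. By \autoref{thm:description.model}, $\MM[F]{\Delta}$ is no-signalling. I would then verify that it models $\Gamma$: for $\psi \in \Gamma$ lying in a context $U$, reflexivity of $\vdash_\C$ (immediate from \eqref{item:inchworm.base} and reflexivity of $\vdash$) places $\psi$ in $(\Gamma^\ast)_U$, so $\Delta_U \vdash \psi$, and the syntactic-to-semantic (``if'') half of conservativity gives $\MM[F]{\Delta}_U = \bigwedge_{\chi \in \Delta_U} \Scott{\chi}_U \leqslant \Scott{\psi}_U$, i.e.\ $\MM[F]{\Delta} \vDash_U \psi$; no-signalling together with \eqref{item:pre-model.up}--\eqref{item:pre-model.down} then promotes this to $\MM[F]{\Delta} \vDash \psi$. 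Hence $\MM[F]{\Delta}$ is a no-signalling model of $\Gamma$, and the hypothesis forces $\MM[F]{\Delta} \vDash \varphi$.

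It remains to convert $\MM[F]{\Delta} \vDash \varphi$ into the derivation. Fixing a context $U \in \C$ with $\varphi \in \Phi_U$, the relation $\MM[F]{\Delta} \vDash_U \varphi$ unfolds to $\bigwedge_{\chi \in \Delta_U} \Scott{\chi}_U \leqslant \Scott{\varphi}_U$, whereupon the semantic-to-syntactic (``only if'') half of conservativity delivers a single-context global entailment $\Delta_U \vdash \varphi$. Since every formula of $\Delta_U$ lies in $\Phi_U$, clause \eqref{item:inchworm.base} upgrades this to $\Delta_U \vdash_\C \varphi$. Finally, as $\Delta_U \subseteq \Delta \subseteq \Gamma^\ast$, each $\chi \in \Delta_U$ satisfies $\Gamma \vdash_\C \chi$; because $\Delta_U$ is finite, I would eliminate its members one at a time using the cut rule \eqref{item:inchworm.inductive}, arriving at $\Gamma \vdash_\C \varphi$.

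The step I expect to carry the real weight is the passage $\MM[F]{\Delta} \vDash \varphi \Rightarrow \Gamma \vdash_\C \varphi$: it is where conservativity is used in its non-trivial (completeness) direction, and where the purely local global derivation $\Delta_U \vdash \varphi$ inside one context must be glued, through \eqref{item:inchworm.base} and an iterated cut, onto the inchworm derivations $\Gamma \vdash_\C \chi$ of the generators. Everything hinges on $\Delta_U$ being finite, so that the cut terminates, and on $\MM[F]{\Delta}$ being no-signalling, so that $\vDash_U$ and the context-independent $\vDash$ agree; these are secured by \eqref{item:finite.theory} and \autoref{thm:description.model} respectively. A subsidiary point to dispatch is the reduction from an arbitrary $\Gamma$ to the $\C$-finite data extracted by \eqref{item:finite.theory}, which the construction of $\Delta$ in the preceding paragraph already encapsulates.
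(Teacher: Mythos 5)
Your proposal is correct and takes essentially the same route as the paper's own proof: the forward direction is dispatched by soundness (\autoref{thm:soundness.model}), and the converse builds the canonical no-signalling model $\MM[F]{\Delta}$ by applying \eqref{item:finite.theory} to the $\vdash_\C$-closure $\Gamma^\ast$, uses conservativity to extract $\Delta_U \vdash \varphi$ from $\MM[F]{\Delta}_U = \bigwedge_{\chi \in \Delta_U} \Scott{\chi}_U \leqslant \Scott{\varphi}_U$, and discharges the finitely many members of $\Delta_U$ by \eqref{item:inchworm.base} and iterated cut \eqref{item:inchworm.inductive}. The only (cosmetic, in fact slightly tidier for non-$\C$-finite $\Gamma$) difference is that you invoke \autoref{thm:description.model} directly and verify $\MM[F]{\Delta} \vDash \Gamma$ by hand, where the paper routes through \autoref{thm:inchworm.saturation}.
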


\begin{theorem}\label{thm:completeness.regular.model}
Let $\vdash$ be a regular theory satisfying \eqref{item:finite.theory}.
Then, for any $\Gamma \subseteq \Phi_\C$, $\Gamma \vdash_\C \varphi$ iff $A \vDash \varphi$ for every no-signalling model $A$ of $\Gamma$ in every model $\Scott{-}$ of $\vdash$ in any regular category.
\end{theorem}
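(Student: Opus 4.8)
The plan is to deduce this from \autoref{thm:completeness.transfer.model} by exhibiting, for any regular theory $\vdash$, a single conservative model living in a regular category, and to observe that quantifying over all models in all regular categories only strengthens one direction while the other direction is already soundness.

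First I would dispatch the left-to-right (soundness) direction directly. If $\Gamma \vdash_\C \varphi$, then \autoref{thm:soundness.model} already guarantees $A \vDash \varphi$ for every no-signalling model $A$ of $\Gamma$ in every interpretation $\Scott{-}$ that models $\vdash$. The models $\Scott{-}$ in regular categories ranged over in the statement are in particular such interpretations, so this half needs no further work.

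For the right-to-left (completeness) direction I would argue by producing one distinguished interpretation to which the transfer lemma applies. The key input from categorical logic is that every regular theory $\vdash$ has a conservative model $\Scott{-}_0$ valued in a classifying category that is itself regular (this is \cite[Proposition 6.4]{but98}); here conservativity is exactly the condition in \autoref{thm:completeness.transfer.model}, namely that $\bigwedge_{\psi \in \Delta} \Scott{\psi}_U \leqslant \Scott{\varphi}_U$ holds for some $\Delta \subseteq \Gamma$ iff $\Gamma \vdash \varphi$. Since $\vdash$ is assumed to satisfy \eqref{item:finite.theory}, \autoref{thm:completeness.transfer.model} applies to $\Scott{-}_0$ and yields $\Gamma \vdash_\C \varphi$ iff $A \vDash \varphi$ for every no-signalling model $A$ of $\Gamma$ in $\Scott{-}_0$. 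Now suppose the right-hand side of the theorem holds, i.e.\ $A \vDash \varphi$ for every no-signalling model $A$ of $\Gamma$ in every model of $\vdash$ in any regular category. Instantiating at the conservative model $\Scott{-}_0$, which is one such model, gives precisely the hypothesis of the completeness clause of the transfer lemma, whence $\Gamma \vdash_\C \varphi$.

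The only substantive ingredient, and the step I expect to carry the weight, is the conservativity of $\Scott{-}_0$ in the regular classifying category; everything else is a matter of instantiating \autoref{thm:soundness.model} and \autoref{thm:completeness.transfer.model} at the right interpretation. One point to verify carefully is that the cited classifying-category result delivers conservativity in the full schematic form used by \autoref{thm:completeness.transfer.model}---with $\Gamma$ possibly non-$\C$-finite and with a finite $\Delta \subseteq \Gamma$ on the left---rather than only a single-premise version, so that the quantifier structure of the two statements matches exactly.
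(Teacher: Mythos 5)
Your proposal is correct and matches the paper's own proof exactly: the ``only if'' direction is soundness (\autoref{thm:soundness.model}), and the ``if'' direction instantiates \autoref{thm:completeness.transfer.model} at a conservative model in a regular classifying category furnished by \cite[Proposition 6.4]{but98}. Your closing caveat about checking that conservativity holds in the schematic form required by the transfer lemma is a sensible verification, not a deviation.
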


\subsection{Completion for Completeness}

Generally, the property \eqref{item:finite.theory} may fail and the inchworm saturation may not be attained in finite steps.

\begin{example}\label{eg:spiral}
In \autoref{fig:Hardy.PR}, replace each $A_x = \2$ with $\ZZ$, and let $\Gamma$ be the set of formulas
\begin{gather*}
a_1 = b_2 , \mspace{18mu}
b_1 = a_1 , \mspace{18mu}
a_2 = b_1 , \mspace{18mu}
b_2 = a_2 + 1 , \mspace{18mu}
b_2 > 0
\end{gather*}
in the obvious $\L$ and $\vdash$.
Then $\Gamma \vdash_\C a_1 > 0, b_1 > 0, a_2 > 0, b_2 > 1, \ldots, x > n$ for every $x \in X$ and $n \in \NN$, whereas $\Gamma \nvdash_\C \bot$ (although the empty presheaf is the only no-signalling model of $\Gamma$).
So there cannot be any such $\Delta$ as in \autoref{thm:inchworm.saturation}.
(Note that the topology of $\C$ is essential:
E.g., if we take $\C = \pw X$ instead, then $\Gamma \vdash_\C \bot$ by $\Gamma \vdash \bot$.)
\end{example}

Thus, even if $\Gamma$ is finite, the set $\{\, \Scott{\varphi}_U \mid \varphi \in {\Gamma^\ast}_U \,\}$ may have no minimum (though it is lowerbounded by $\exists_{F_{U \subseteq X}}(\bigwedge_{\psi \in \Gamma} \Scott{\psi}_X)$ if $X$ is finite);
then, in a regular category in general, $\bigwedge_{\varphi \in {\Gamma^\ast}_U} \Scott{\varphi}_U$ may not exist.
So, instead of the semilattice $\Sub_\DD(F_U)$ of subobjects, let us use a completion of it, viz.\ the semilattice $\Filt(\Sub_\DD(F_U))$ of filters in $\Sub_\DD(F_U)$,
and assign a filter of subobjects, instead of a subobject, to each $U \in \C$.

\begin{definition}\label{def:inchworm.filter.model}
Suppose $(\Scott{-}, F)$ is an interpretation of a contextual language $\L_\C$.
Then, by a \emph{filter model} in $(\Scott{-}, F)$, we mean a presheaf $G : \C^\op \to \Sets$ such that
\begin{itemize}
\item
$G_U \in \Filt(\Sub_\DD(F_U))$ for every $U \in \C$.
\item
For $U \subseteq V \in \C$,
\begin{gather*}
G_U = \{\, S \monoto F_U \mid {F_{U \subseteq V}}^{-1}(S) \in G_V \,\} = \{\, \exists_{F_{U \subseteq V}}(S) \monoto F_U \mid S \in G_V \,\} ,
\end{gather*}
so $G_{U \subseteq V} : G_V \to G_U :: S \mapsto \exists_{F_{U \subseteq V}}(S)$ is a surjection.
\end{itemize}
We say $G$ models $\varphi$, and write $G \vDash \varphi$, to mean that $\Scott{\varphi}_U \in G_U$ whenever $\varphi \in \Phi_U$.
\end{definition}

Then we have the filter versions of \autoref{thm:soundness.model}, \autoref{thm:description.model}, and completeness results organized by \autoref{thm:completeness.transfer.model}.
Observe that every (no-signalling) model $A$ is a ``principal'' filter model, $U \mapsto \upset{A_U} = \{\, S \monoto F \mid A_U \leqslant S \,\}$;
so \autoref{thm:soundness.filter} is stronger than \autoref{thm:soundness.model}.

\begin{theorem}\label{thm:soundness.filter}
Let $\Scott{-}$ be a model of a theory $\vdash$ in $\L$.
Then the inchworm logic $\vdash_\C$ of $\vdash$ is sound with respect to the filter models in $\Scott{-}$:
If $\Gamma \vdash_\C \varphi$, then $G \vDash \varphi$ for every filter model $G$ of $\Gamma$ in $\Scott{-}$.
\end{theorem}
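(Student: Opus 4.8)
The plan is to prove soundness by induction on the inductive generation of $\vdash_\C$ in \autoref{def:inchworm.logic}, after first isolating a lemma that makes the relation $G \vDash \varphi$ independent of the witnessing context. For a formula $\varphi \in \Phi_\C$ let $U_0$ be its set of free variables. Since any $U \in \C$ with $\varphi \in \Phi_U$ forces $U_0 \subseteq U$, and $\C$ is downward closed, we get $U_0 \in \C$; moreover $U_0$ is the least context of $\varphi$. The key observation is that weakening is interpreted by pullback, a standard fact of regular categorical logic, so $\Scott{\varphi}_V = {F_{U_0 \subseteq V}}^{-1}(\Scott{\varphi}_{U_0})$ for every $V \in \C$ with $\varphi \in \Phi_V$. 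Combining this with the defining identity $G_{U_0} = \{\, S \monoto F_{U_0} \mid {F_{U_0 \subseteq V}}^{-1}(S) \in G_V \,\}$ of a filter model, I would show that $\Scott{\varphi}_{U_0} \in G_{U_0}$ iff $\Scott{\varphi}_V \in G_V$. Hence $G \vDash \varphi$ holds as soon as $\Scott{\varphi}_U \in G_U$ for a single suitable $U$, and it then holds in every suitable context automatically.

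For the base clause \eqref{item:inchworm.base}, suppose $\Gamma \vdash_\C \varphi$ because $\Gamma_U \vdash \varphi$ for some $U \in \C$ with $\varphi \in \Phi_U$. As $\Scott{-}$ models $\vdash$, this sequent is modelled: there is a finite $\Delta_0 \subseteq \Gamma_U$ with $\bigwedge_{\psi \in \Delta_0} \Scott{\psi}_U \leqslant \Scott{\varphi}_U$. Each $\psi \in \Delta_0$ lies in $\Gamma$ and in $\Phi_U$, so the hypothesis that $G$ models $\Gamma$, together with the lemma, gives $\Scott{\psi}_U \in G_U$. Since $G_U$ is a filter it is closed under finite meets (containing the top $F_U$ when $\Delta_0$ is empty) and upward closed, whence $\Scott{\varphi}_U \in G_U$; the lemma then delivers $G \vDash \varphi$.

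For the inductive clause \eqref{item:inchworm.inductive}, suppose $\Gamma, \Delta \vdash_\C \psi$ is obtained from $\Gamma \vdash_\C \varphi$ and $\Delta, \varphi \vdash_\C \psi$, and let $G$ be a filter model of $\Gamma \cup \Delta$. Then $G$ models $\Gamma$, so the induction hypothesis applied to the first premise yields $G \vDash \varphi$; since $G$ also models $\Delta$, it is a filter model of $\Delta \cup \{\varphi\}$, and the induction hypothesis applied to the second premise yields $G \vDash \psi$. This closes the induction.

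I expect the only real work to be the context-independence lemma, where the two descriptions of $G_U$ in \autoref{def:inchworm.filter.model} must be matched against the pullback interpretation of weakening; once that is in place, the base case is a one-line filter computation and the inductive case is pure transitivity. A minor point to verify carefully is the degenerate case $\Delta_0 = \varnothing$ in the base clause, which is exactly where membership of the top element $F_U$ in every filter is used.
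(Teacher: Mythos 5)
Your proof is correct and follows essentially the same route as the paper's: induction on the two generating clauses of $\vdash_\C$, with the base case settled by closure of the filter $G_U$ under finite meets (including the empty meet) and upward closure, and the inductive case by applying the two induction hypotheses to a filter model of $\Gamma \cup \Delta$, exactly as the paper does. Your context-independence lemma makes explicit a step the paper leaves implicit---its base case concludes $G \vDash \varphi$ directly from $\Scott{\varphi}_U \in G_U$ at the single witnessing context, which is justified precisely by the equality $G_U = \{\, S \monoto F_U \mid {F_{U \subseteq V}}^{-1}(S) \in G_V \,\}$ together with the pullback interpretation of weakening that you exploit---and your use of a finite $\Delta_0 \subseteq \Gamma_U$ is slightly more careful than the paper's meet over all of $\Gamma_U$, matching the stated definition of modelling a sequent.
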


\begin{fact}\label{thm:description.filter}
Let $(\Scott{-}, F)$ be a model of a theory $\vdash$ in $\L$.
Given any $\Gamma \subseteq \Phi_\C$, the family $\FiltMM[F]{\Gamma} = (\{\, S \monoto F_U \mid \Scott{\varphi}_U \leqslant S$ for some $\varphi \in {\Gamma^\ast}_U \,\})_{U \in \C}$ is a filter model of $\Gamma$ in $(\Scott{-}, F)$.
Moreover, for any filter model $G$ of $\Gamma$ in $(\Scott{-}, F)$, $\FiltMM[F]{\Gamma}_U \subseteq G_U$ for each $U \in \C$.
\end{fact}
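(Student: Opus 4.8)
The plan is to verify the two clauses of \autoref{def:inchworm.filter.model} for $\FiltMM[F]{\Gamma}$ and then to deduce minimality from soundness. Write $M_U = \FiltMM[F]{\Gamma}_U$ and, for $U \subseteq V \in \C$, abbreviate $f = F_{U \subseteq V}$, recalling the adjunction $\exists_f \dashv f^{-1}$ on subobject lattices together with $\exists_f \cmp f^{-1} = 1$ (since $f$ is a regular epi). I will use two standard features of the interpretation: substitution, $\Scott{\varphi}_V = f^{-1}(\Scott{\varphi}_U)$ when $\varphi \in \Phi_U$ is read in $\Phi_V$; and the clause interpreting $\exists$, $\Scott{\exists_{V \setminus U} \ldot \psi}_U = \exists_f(\Scott{\psi}_V)$ for $\psi \in \Phi_V$.

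First I would show each $M_U$ is a filter in $\Sub_\DD(F_U)$. Upward closure is immediate, and $F_U \in M_U$ since $\top \in \Phi_U$ and $\Gamma \vdash_\C \top$ by \eqref{item:inchworm.base}, so $\top \in {\Gamma^\ast}_U$ with $\Scott{\top}_U = F_U$. For binary meets, suppose $\Scott{\varphi_1}_U \leqslant S_1$ and $\Scott{\varphi_2}_U \leqslant S_2$ with $\varphi_1, \varphi_2 \in {\Gamma^\ast}_U$; since $\varphi_1, \varphi_2 \vdash \varphi_1 \wedge \varphi_2$ is a base instance \eqref{item:inchworm.base} at $U$, two applications of the cut rule \eqref{item:inchworm.inductive} to $\Gamma \vdash_\C \varphi_1$ and $\Gamma \vdash_\C \varphi_2$ yield $\Gamma \vdash_\C \varphi_1 \wedge \varphi_2$, and $\Scott{\varphi_1 \wedge \varphi_2}_U = \Scott{\varphi_1}_U \wedge \Scott{\varphi_2}_U \leqslant S_1 \wedge S_2$, so $S_1 \wedge S_2 \in M_U$. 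A base instance of \eqref{item:inchworm.base} also shows $\Gamma \subseteq \Gamma^\ast$, whence $\Scott{\varphi}_U \in M_U$ for every $\varphi \in \Gamma$ with $\varphi \in \Phi_U$; thus $M$ will be a filter model \emph{of} $\Gamma$ once the presheaf clause is in hand.

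The heart of the proof is the presheaf clause for $U \subseteq V$, resting on the lemma that $\Gamma^\ast$ is closed under the left adjoint along inclusions: if $\psi \in \Phi_V$ and $\Gamma \vdash_\C \psi$, then $\exists_{V \setminus U} \ldot \psi \in \Phi_U$ and $\Gamma \vdash_\C \exists_{V \setminus U} \ldot \psi$. This holds because $\psi \vdash \exists_{V \setminus U} \ldot \psi$ is valid in the regular theory $\vdash$ inside the context $V$, giving $\psi \vdash_\C \exists_{V \setminus U} \ldot \psi$ by \eqref{item:inchworm.base}, after which a cut \eqref{item:inchworm.inductive} against $\Gamma \vdash_\C \psi$ yields the claim. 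Granting the lemma, both equalities of \autoref{def:inchworm.filter.model} follow from the adjunction unit and counit and the two interpretation facts. For $\{\, S \mid f^{-1}(S) \in M_V \,\} = M_U$: if $\Scott{\varphi}_U \leqslant S$ with $\varphi \in {\Gamma^\ast}_U \subseteq {\Gamma^\ast}_V$, then $\Scott{\varphi}_V = f^{-1}(\Scott{\varphi}_U) \leqslant f^{-1}(S)$, so $f^{-1}(S) \in M_V$; conversely, if $\Scott{\psi}_V \leqslant f^{-1}(S)$ with $\psi \in {\Gamma^\ast}_V$, then applying $\exists_f$ and using the counit $\exists_f f^{-1}(S) \leqslant S$ gives $\Scott{\exists_{V \setminus U} \ldot \psi}_U \leqslant S$ with $\exists_{V \setminus U} \ldot \psi \in {\Gamma^\ast}_U$. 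For $M_U = \{\, \exists_f(S') \mid S' \in M_V \,\}$, the inclusion $\supseteq$ uses the lemma and monotonicity of $\exists_f$, while $\subseteq$ takes $S' = f^{-1}(S) \in M_V$ and invokes $\exists_f \cmp f^{-1} = 1$.

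The step I expect to be the main obstacle is precisely this presheaf clause: one must check that the formula witnessing membership in $M_V$ is carried to a witness in $M_U$ by the \emph{same} operation $\exists_{V \setminus U}$ that the semantics reads as $\exists_f$, so that the syntactic closure lemma and the categorical adjunction genuinely mesh; this is where regularity of both $\vdash$ and $\DD$ enters. A secondary subtlety is that, whereas the first equality needs only the counit, the second equality genuinely requires $\exists_f \cmp f^{-1} = 1$, i.e.\ that the projection $f = F_{U \subseteq V}$ be a regular epi---the semantic counterpart of $\exists_{V \setminus U} \cmp i \cong 1$, available since $F$ is the sheaf of (well-supported) types. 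Finally, minimality is routine: for any filter model $G$ of $\Gamma$ and any $S \in M_U$ with $\Scott{\varphi}_U \leqslant S$ and $\varphi \in {\Gamma^\ast}_U$, \autoref{thm:soundness.filter} gives $G \vDash \varphi$, hence $\Scott{\varphi}_U \in G_U$, and upward closure of the filter $G_U$ yields $S \in G_U$, i.e.\ $M_U \subseteq G_U$.
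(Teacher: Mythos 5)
Your proof is correct and takes essentially the same route as the paper's: filterhood of each $\FiltMM[F]{\Gamma}_U$ via $\top$, $\wedge$, and upward closure; the restriction condition via the adjunction $\exists_{F_{U \subseteq V}} \dashv {F_{U \subseteq V}}^{-1}$ combined with the key closure fact that $\Gamma \vdash_\C \psi$ for $\psi \in \Phi_V$ yields $\Gamma \vdash_\C \exists_{V \setminus U} \ldot \psi \in \Phi_U$ (by \eqref{item:inchworm.base} applied to the regular sequent $\psi \vdash \exists_{V \setminus U} \ldot \psi$ followed by a cut \eqref{item:inchworm.inductive}); and minimality from soundness (\autoref{thm:soundness.filter}) plus upward closure of $G_U$. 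You are in fact somewhat more thorough than the paper, whose proof explicitly checks only the filter property and the first equality of \autoref{def:inchworm.filter.model}, leaving implicit the second equality (where, as you correctly flag, $\exists_{F_{U \subseteq V}} \cmp {F_{U \subseteq V}}^{-1} = 1$ and hence the regular-epi property of $F_{U \subseteq V}$ is genuinely needed), the verification that $\FiltMM[F]{\Gamma}$ models $\Gamma$, and the ``moreover'' clause.
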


\begin{lemma}\label{thm:completeness.transfer.filter}
Suppose $\Scott{-}$ is a conservative model of a theory $\vdash$ in $\L$.
Then $\Gamma \vdash_\C \varphi$ iff $G \vDash \varphi$ for every filter model $G$ of $\Gamma$ in $\Scott{-}$.
\end{lemma}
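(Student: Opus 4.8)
The plan is to handle the two directions separately, observing that the forward (``soundness'') direction is already available. A conservative model is in particular a model of $\vdash$, so \autoref{thm:soundness.filter} gives at once that $\Gamma \vdash_\C \varphi$ implies $G \vDash \varphi$ for every filter model $G$ of $\Gamma$. The entire content therefore lies in the converse, which I would prove by exhibiting a single distinguished filter model that detects exactly the $\vdash_\C$-consequences of $\Gamma$, namely the canonical $\FiltMM[F]{\Gamma}$. By \autoref{thm:description.filter} this is a filter model of $\Gamma$ and, crucially, the smallest one, with $\FiltMM[F]{\Gamma}_U \subseteq G_U$ for every filter model $G$ and every $U \in \C$. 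I would first use this minimality to collapse the universal quantifier over filter models: ``$G \vDash \varphi$ for every filter model $G$ of $\Gamma$'' is equivalent to the single condition $\FiltMM[F]{\Gamma} \vDash \varphi$, since membership $\Scott{\varphi}_U \in \FiltMM[F]{\Gamma}_U$ propagates upward to every $G_U$ through the inclusion, while $\FiltMM[F]{\Gamma}$ is itself one of the $G$. The lemma thus reduces to the equivalence $\Gamma \vdash_\C \varphi$ iff $\FiltMM[F]{\Gamma} \vDash \varphi$.

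The nontrivial half of that equivalence is $\FiltMM[F]{\Gamma} \vDash \varphi \Rightarrow \Gamma \vdash_\C \varphi$. I would fix a context $U \in \C$ with $\varphi \in \Phi_U$, which exists because $\varphi \in \Phi_\C$. Unfolding the definition of $\FiltMM[F]{\Gamma}_U$, the hypothesis $\Scott{\varphi}_U \in \FiltMM[F]{\Gamma}_U$ supplies a \emph{single} formula $\psi \in {\Gamma^\ast}_U$ with $\Scott{\psi}_U \leqslant \Scott{\varphi}_U$. Conservativity of $\Scott{-}$, applied in the one context $U$, converts this subobject inequality into the global entailment $\psi \vdash \varphi$; the base clause \eqref{item:inchworm.base} then upgrades it to $\psi \vdash_\C \varphi$, both formulas lying in $\Phi_U$ with $U \in \C$; and since $\psi \in {\Gamma^\ast}_U$ records $\Gamma \vdash_\C \psi$, the cut clause \eqref{item:inchworm.inductive} yields $\Gamma \vdash_\C \varphi$. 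The reverse half is immediate: if $\Gamma \vdash_\C \varphi$ and $\varphi \in \Phi_U$, then $\varphi \in {\Gamma^\ast}_U$ witnesses $\Scott{\varphi}_U \in \FiltMM[F]{\Gamma}_U$, so $\FiltMM[F]{\Gamma} \vDash \varphi$.

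The step I expect to be the most delicate---and the one where the filter formulation earns its keep---is the extraction of a single witness $\psi$ from filter membership, followed by the application of conservativity to one formula within one context. In \autoref{thm:completeness.transfer.model} one had to replace a possibly nonexistent infinitary meet $\bigwedge_{\psi \in {\Gamma^\ast}_U} \Scott{\psi}_U$ by a finite generator, which forced the finiteness hypothesis \eqref{item:finite.theory}; here the filter $\FiltMM[F]{\Gamma}_U$ retains all of ${\Gamma^\ast}_U$, and because ${\Gamma^\ast}_U$ is closed under conjunction (as secured in \autoref{thm:description.filter}) membership in the filter is automatically witnessed by one of its elements. This is precisely why the filter version needs no finiteness condition on $\vdash$, and I would take care that the whole argument goes through using only the base and cut clauses together with single-context conservativity.
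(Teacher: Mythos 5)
Your proposal is correct and follows essentially the same route as the paper's proof: the forward direction by \autoref{thm:soundness.filter}, and the converse by instantiating the hypothesis at the canonical filter model $\FiltMM[F]{\Gamma}$ of \autoref{thm:description.filter}, extracting a single witness $\psi \in {\Gamma^\ast}_U$ with $\Scott{\psi}_U \leqslant \Scott{\varphi}_U$, applying conservativity to get $\psi \vdash \varphi$, and closing with the base and cut clauses to obtain $\Gamma \vdash_\C \varphi$. Your extra observation that minimality of $\FiltMM[F]{\Gamma}$ collapses the quantifier over all filter models is correct but unnecessary---the paper simply notes that $\FiltMM[F]{\Gamma}$ is itself one of the $G$---and your closing remarks on why no finiteness condition like \eqref{item:finite.theory} is needed accurately reflect the contrast with \autoref{thm:completeness.transfer.model}.
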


\section{Conclusion}

We have formulated contextual models as presheaves valued in regular categories, as well as providing ``inchworm logic'' for local inference in those contextual models.
We have also proven ``completeness-transfer lemmas'' (Lemmas \ref{thm:completeness.transfer.model} and \ref{thm:completeness.transfer.filter}), so that completeness theorems in categorical logic transfer straightforwardly to inchworm logic.

Let us conclude the paper by discussing connections and applications between the framework of this paper and other approaches or other fields as future work.
First of all, categorical logic has a long tradition (since \cite{law70}) of viewing local truth as a modal operator.
Indeed, the logic of local information in this paper is closely related to the dynamic-logical characterization of contextuality in \cite{kis14}.
There is also a connection to model theory.
For instance, the similarity between inchworm inference and Craig interpolation should be obvious;
indeed, by defining $\Phi_U$ more generally as a ``language in the vocabulary $U$'', we can prove a stronger version of Robinson's joint consistency theorem (see \cite[Subsection 4.1.1]{GM05}) that is sensitive to the topology of $\C$.

As explained in \autoref{sec:model}, presheaf models, whether no-signalling or not, can model Boolean valuations.
This enables us to transfer and apply techniques from satisfiability problems to quantum contextuality as computational resource.
Another connection is to the structure of valuation algebra, which is used for local computation \cite{KPS12}.
In fact, our presheaf models can also be formulated in terms of valuation algebras, as a $\C$-indexed family of valuations satisfying certain conditions.
We can expect these connections to help extend local computation to situations in classical computing where contextual phenomena arise.

The generality of taking presheaves in regular categories is also expected to facilitate applications.
In cohomology, it is typical to use presheaves valued in regular categories, such as presheaves of abelian groups, $R$-modules, etc.
Therefore the framework of this paper applies to the logic of local inference within such presheaves.
One can also take regular categories of structures that are used for other purposes such as modelling processes in quantum physics.
In addition, the connection to logical paradoxes \cite{ABKLM15} is also relevant.
As shown in \cite{law69,AZ15}, regular categories provide background for self-referential and other fixpoint paradoxes;
so our formalism will unify the two perspectives on paradoxes.

\appendix

\section{Proofs}

\setcounter{theorem}{0}

\begin{fact}
A presheaf $P$ on a simplicial complex $\C$ is a sheaf iff $P_U = \prod_{x \in U} P_x$ for all $U \in \C$.
$P$ is separated iff it is a subpresheaf of a sheaf, i.e.\ iff $P_U \subseteq \prod_{x \in U} P_x$ for all $U \in \C$.
\end{fact}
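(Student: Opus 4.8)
The plan is to reduce both claims to a single distinguished cover. Since $\C$ is a simplicial complex, it is downward closed, hence closed under binary intersection, and $\bigcup_{U \in \C} U = X$ forces every singleton $\{x\}$ with $x \in U \in \C$ to lie in $\C$ as well. So each $U \in \C$ is covered by its own singletons, $U = \bigcup_{x \in U} \{x\}$, and I claim this cover is decisive. Writing $P_x = P_{\{x\}}$, the first thing I would record is that distinct singletons meet in $\varnothing$, so the only matching condition on a family $(t_x)_{x \in U} \in \prod_{x \in U} P_x$ is agreement after restriction into $P_\varnothing$. As every sheaf has $P_\varnothing = 1$, this condition is vacuous and $\Match{(\{x\})_{x \in U}}{P} = \prod_{x \in U} P_x$ on the nose.

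For the sheaf characterization, the ``only if'' direction is then immediate: instantiating the sheaf condition at the singleton cover gives the bijection $P_U \cong \prod_{x \in U} P_x$ (the case $U = \varnothing$ being $P_\varnothing = 1 = \prod_{x \in \varnothing} P_x$). For ``if'', I would assume $P_U = \prod_{x \in U} P_x$ for all $U$ and verify gluing at an arbitrary cover $U = \bigcup_i U_i$. Here I would unpack sections componentwise, writing $t_i = (t_i^x)_{x \in U_i}$; the matching condition $\rest{t_j}{U_j \cap U_k} = \rest{t_k}{U_j \cap U_k}$ then says exactly that $t_j^x = t_k^x$ whenever $x \in U_j \cap U_k$. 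Since $\bigcup_i U_i = U$, the assignment $s^x := t_i^x$ (for any $i$ with $x \in U_i$) is well defined, and $s = (s^x)_{x \in U} \in \prod_{x \in U} P_x = P_U$ is the unique section restricting to each $t_i$, giving the required bijection onto $\Match{(U_i)_i}{P}$.

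For the separated characterization, the engine is that the singleton cover is the finest one. The ``only if'' direction is just the singleton instance of the definition, yielding that $\langle P_{\{x\} \subseteq U} \rangle_{x \in U} : P_U \to \prod_{x \in U} P_x$ is injective, i.e.\ $P_U \subseteq \prod_{x \in U} P_x$. For ``if'', I would show that injectivity of the singleton maps forces injectivity of $\langle P_{U_i \subseteq U} \rangle_i$ for every cover: if $\rest{s}{U_i} = \rest{s'}{U_i}$ for all $i$, then by functoriality $\rest{s}{\{x\}} = \rest{s'}{\{x\}}$ for every $x$ (choosing $i$ with $x \in U_i$), so the injective singleton map already identifies $s$ with $s'$. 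To match the phrase ``subpresheaf of a sheaf'', I would invoke the first part: $F_U := \prod_{x \in U} P_x$ is a sheaf into which a separated $P$ embeds, and conversely any subpresheaf $P$ of a sheaf $G$ is separated because $\langle P_{U_i \subseteq U} \rangle_i$ is the restriction to $P_U \subseteq G_U$ of the bijective $\langle G_{U_i \subseteq U} \rangle_i$.

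I expect the only genuine work — the main obstacle — to be the ``if'' direction of the sheaf statement, namely checking that an arbitrary cover glues uniquely; everything else is either a direct instantiation at the singleton cover or a one-line functoriality argument. The subtlety to handle with care is the base case $U = \varnothing$ together with the fact $P_\varnothing = 1$, which is precisely what renders the singleton matching condition vacuous and lets the whole reduction go through.
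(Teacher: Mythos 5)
Your proof is correct, and it differs from the paper's in execution rather than in skeleton: the paper disposes of this Fact in one line, as the $\DD = \Sets$ instance of \autoref{thm:separated.presheaf.general}, whose proof is arrow-theoretic --- the empty cover exhibits $P_\varnothing$ as an equalizer of two identities on $1$; the singleton cover gives an equalizer whose parallel pair consists of isomorphisms, forcing $P_U \cong \prod_{x \in U} P_x$; the converse shows the product cone equalizes by factoring any $f$ with $g \cmp f = h \cmp f$ as $\langle\langle p_x \rangle_{x \in U_i}\rangle_i \cmp \langle f_x \rangle_{x \in U}$; and the separated part runs the cycle (separated) $\Rightarrow$ (singleton maps monic) $\Rightarrow$ (subpresheaf of a sheaf) $\Rightarrow$ (separated), the last step via a commuting square over the monic $\langle F_{U_i \subseteq U} \rangle_i$. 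Your element-wise argument is the concrete shadow of this: your well-defined gluing $s^x := t_i^x$ is exactly the paper's factorization done pointwise, and your direct functoriality chase for the separated ``if'' direction short-circuits the paper's detour through the subpresheaf characterization (which you then establish separately anyway). What the paper's route buys is the general $\DD$-valued statement, which it genuinely needs downstream (e.g.\ for \autoref{thm:no-signalling.subpresheaf} and \autoref{def:inchworm.interpretation}); what yours buys is a self-contained proof readable without categorical machinery, at the cost of not generalizing verbatim, since element picks such as ``choose $i$ with $x \in U_i$'' must become equalizer arguments in $\DD$. Two assertions you should anchor explicitly, though neither is a genuine gap: first, $P_\varnothing = 1$ for a sheaf must be derived from the empty-cover instance of the sheaf condition (the empty family covers $\varnothing$ and admits exactly one matching family), not quoted as known --- the $U = \varnothing$ case of the very statement you are proving \emph{is} that equality, so citing it would be circular; second, the hypothesis $P_U = \prod_{x \in U} P_x$ should be read as saying the canonical map $\langle P_{\{ x \} \subseteq U} \rangle_{x \in U}$ is a bijection, whereupon the functoriality identity $P_{\{ x \} \subseteq V} \cmp P_{V \subseteq U} = P_{\{ x \} \subseteq U}$ is what licenses your componentwise rewriting $t_i = (t_i^x)_{x \in U_i}$ of the restriction maps.
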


\begin{proof}
This is just an instance of \autoref{thm:separated.presheaf.general} with $\DD = \Sets$.
\end{proof}

\begin{fact}
$\sPsh(\C) \simeq \ndSimp / C$ for any simplicial complex $\C$.
\end{fact}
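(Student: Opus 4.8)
The plan is to turn the correspondence sketched just before the statement into an explicit pair of functors and check that they are mutually quasi-inverse, everything sitting over the standard equivalence $\Sets^X \simeq \Sets/X$ at the level of vertices. The guiding observation is that, by \autoref{thm:separated.presheaf}, the two categories describe literally the same data: a family $(A_x)_{x \in X}$ together with, for each $U \in \C$, a subset $A_U \subseteq \prod_{x \in U} A_x$ of admissible sections that is closed under restriction; and non-degeneracy is precisely what lets a simplex be read as such a section.

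First I would define $\Phi : \sPsh(\C) \to \ndSimp/\C$ by sending a separated presheaf $A$ to the bundle whose vertices are $\sum_{x \in X} A_x$, whose simplices are $\A = \bigcup_{U \in \C} A_U$ (each $s \in A_U$ viewed as the set $\{\, (x, s(x)) \mid x \in U \,\}$), and whose projection $\pi$ sends a vertex $(x,a)$ to $x$ and hence a simplex $s \in A_U$ to $\pi[s] = U$. Since each $s$ is a function, $\rest{\pi}{s}$ is injective, so $\pi$ is non-degenerate; and $\A$ is downward closed exactly because the restriction maps $A_{V \subseteq U}$ are defined, i.e.\ $\rest{s}{V} \in A_V$ for $V \subseteq U$. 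In the other direction, $\Psi : \ndSimp/\C \to \sPsh(\C)$ sends $\pi : \A \to \C$ to the family $A_x = \{\, v \in \A_0 \mid \pi(v) = x \,\}$ and $A_U = \{\, s \in \A \mid \pi[s] = U \,\}$, with $A_{V \subseteq U} : s \mapsto \rest{s}{V}$. Here non-degeneracy is the crucial input: because $\rest{\pi}{s}$ is injective and $s$ lies over $U$, each simplex $s \in A_U$ is a section of $\pi$ over $U$, hence an element of $\prod_{x \in U} A_x$; and because a simplex of an abstract complex is determined by its vertex set, distinct simplices yield distinct sections, so $A_U \subseteq \prod_{x \in U} A_x$ and $\Psi(\pi)$ is separated by \autoref{thm:separated.presheaf}.

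Next I would treat morphisms. A map in $\sPsh(\C)$ is a natural transformation, which by naturality against the faces $\{x\} \subseteq U$ is forced to take the form $s \mapsto (f_x(s(x)))_{x}$ for a family of vertex maps $(f_x : A_x \to B_x)_x$ on the singletons, subject to $(f_x(s(x)))_x \in B_U$ whenever $s \in A_U$. A map in $\ndSimp/\C$ is a simplicial map over $\C$, which is likewise determined by its action on vertices, $\sum_x A_x \to \sum_x B_x$ over $X$, subject to carrying each $s \in A_U$ to a simplex of $\mathcal{B}$ over $U$; this is the very same landing condition. Hence $\Phi$ and $\Psi$ act on arrows as the identity on the underlying vertex families, and the hom-sets are manifestly in bijection.

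Finally I would exhibit the unit and counit. On objects $\Psi\Phi(A)$ returns the same family $(A_x)$ and the same sets $A_U$ of sections, giving $\Psi\Phi \cong 1$ by an identity-on-data isomorphism, while $\Phi\Psi(\pi)$ rebuilds the complex $\bigcup_{U} A_U$ from its simplices-as-sections, recovering $(\A, \pi)$ up to the canonical identification of an abstract simplex with the set of its vertices; naturality of both is immediate since every arrow is an identity on vertex families. The only genuine obstacle is the bookkeeping in this last identification: one must confirm that recording a simplex by the tuple of its vertices, and then reconstituting it, is an isomorphism of simplicial complexes over $\C$—which is exactly the content of non-degeneracy together with the fact that an abstract simplex is its own vertex set. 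Everything else is the formal extension of $\Sets^X \simeq \Sets/X$ to the fibred setting of admissible sections.
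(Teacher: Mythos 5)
Your proposal is correct and follows essentially the same route as the paper: the same pair of functors (simplices-as-sections over $\sum_{x \in X} A_x$ in one direction, $A_U = \{\, s \in \A \mid \pi[s] = U \,\}$ in the other), with non-degeneracy doing exactly the work of making simplices into sections, and identity-on-data natural isomorphisms. Your treatment of morphisms---observing that naturality at the faces $\{x\} \subseteq U$ forces determination by vertex maps, so both hom-sets biject with vertex families satisfying one landing condition---is a mildly streamlined packaging of the paper's explicit check that $\vartheta \mapsto F\vartheta$ and $f \mapsto Gf$ are well defined and natural, but it is the same argument in substance.
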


\begin{proof}
Let $\C$ be a simplicial complex on a set $X$.
Define functors
\begin{gather*}
\begin{tikzpicture}[x=25pt,y=25pt,baseline=(C.base)]
\node (C) [anchor=east,inner sep=0.25em] at (0,0) {$\sPsh(\C)$};
\node (D) [anchor=west,inner sep=0.25em] at (2.5,0) {$\ndSimp / C$};
\draw [transform canvas={yshift=2.5pt},->] (C) -- (D) node [pos=0.5,inner sep=2pt,above] {$F$};
\draw [transform canvas={yshift=-2.5pt},->] (D) -- (C) node [pos=0.5,inner sep=2pt,below] {$G$};
\end{tikzpicture}
\end{gather*}
as follows.
First we define $F$.
Let $P$ be a separated presheaf on $\C$.
By \autoref{thm:separated.presheaf}, $P_U \subseteq \prod_{x \in U} P_x$ for all $U \in \C$.
Then let $F P$ be a map $\pi : A \to X$ from a simplicial complex $\A$ on $A$ such that
\begin{itemize}
\item
$A = \sum_{x \in X} P_x$ and $\pi$ is the projection $\pi : A \to X :: (x, a) \mapsto x$.
\item
$\A = \bigcup_{U \in \C} P_U$.
That is, for each $S \subseteq A$, we have $S \in \A$ iff there is some $U \in \C$ such that $S \in P_U$, i.e.\ such that $S = \{\, (x, s(x)) \mid x \in U \,\}$ for some $s \in P_U \subseteq \prod_{x \in U} P_x$.
\end{itemize}
Then $F P = \pi : \A \to \C$ is clearly a non-degenerate bundle over $\C$.

Let $\vartheta : P \to Q$ be a natural transformation between separated presheaves $P$ and $Q$ on $\C$;
write $F P = \pi_1 : \A_1 \to \C$ and $F Q = \pi_2 : \A_2 \to \C$, with $\A_1$ and $\A_2$ on the sets $A_1 = \sum_{x \in X} P_x$ and $A_2 = \sum_{x \in X} Q_x$.
Then let $F \vartheta : A_1 \to A_2$ be the map such that
\begin{gather*}
F \vartheta : \sum_{x \in X} P_x \to \sum_{x \in X} Q_x :: (x, a) \mapsto (x, \vartheta_x(a)) .
\end{gather*}
$F \vartheta$ is a non-degenerate map of bundles from $F P$ to $F Q$ over $\C$:
\begin{itemize}
\item
If $s \in \A_1$, then $s \in P_U \subseteq \prod_{x \in U} P_x$ for some $U \in \C$, and so
\begin{gather*}
F \vartheta[s] = \{\, (x, \vartheta_x(s(x)) \mid x \in U \,\} = \{\, (x, (\vartheta_U(s))(x) \mid x \in U \,\} = \vartheta_U(s) \in \A_2
\end{gather*}
by the naturality of $\vartheta$.
\begin{align*}
\begin{tikzpicture}[x=25pt,y=25pt,baseline=(O.base)]
\coordinate (O) at (0,0);
\coordinate (r) at (2.5,0);
\coordinate (d) at (0,-2);
\node (Dom1) [inner sep=0.25em] at (O) {$P_U$};
\node (Dom2) [inner sep=0.25em] at ($ (Dom1) + (d) $) {$P_x$};
\node (Cod1) [inner sep=0.25em] at ($ (Dom1) + (r) $) {$Q_U$};
\node (Cod2) [inner sep=0.25em] at ($ (Dom2) + (r) $) {$Q_x$};
\draw [->] (Dom1) -- (Dom2);
\draw [->] (Cod1) -- (Cod2);
\draw [->] (Dom1) -- (Cod1) node [pos=0.5,inner sep=2pt,above] {$\vartheta_U$};
\draw [->] (Dom2) -- (Cod2) node [pos=0.5,inner sep=2pt,below] {$\vartheta_x$};
\coordinate (Dom2') at (Dom2);
\coordinate (Cod1') at (Cod1);
\path (Dom2') -- (Cod1') node [pos=0.5,sloped] {$=$};
\end{tikzpicture}
&&
\begin{tikzpicture}[x=25pt,y=25pt,baseline=(O.base)]
\coordinate (O) at (0,0);
\coordinate (r) at (4.75,0);
\coordinate (d) at (0,-2);
\node (Dom1) [inner sep=0.25em] at (O) {$s$};
\node (Dom2) [inner sep=0.25em] at ($ (Dom1) + (d) $) {$s(x)$};
\node (Cod1) [inner sep=0.25em] at ($ (Dom1) + (r) $) {$\vartheta_U(s)$};
\node (Cod2) [inner sep=0.25em] at ($ (Dom2) + (r) $) {$(\vartheta_U(s))(x)$};
\node (Cod2a) [anchor=east,inner sep=0.25em] at ($ (Cod2.west) + (0.5em,0) $) {$\vartheta_x(s(x)) = {}$};
\draw [|->] (Dom1) -- (Dom2);
\draw [|->] (Cod1) -- (Cod2);
\draw [|->] (Dom1) -- (Cod1);
\draw [|->] (Dom2) -- (Cod2a);
\end{tikzpicture}
\end{align*}
\item
$\pi_2 \cmp F \vartheta = \pi_1$ because $\pi_2 \cmp F \vartheta(x, a) = \pi_2(x, \vartheta_x(a)) = x = \pi_1(x, a)$.
\item
Because $\pi_1$ is non-degenerate and $\pi_2 \cmp F \vartheta = \pi_1$, $F \vartheta$ is non-degenerate as well.
\end{itemize}

Next we define $G$.
Given a non-degenerate bundle $\pi : \A \to \C$, for each $U \in \C$ let
\begin{gather*}
(G \pi)_U = P_U = \{\, s \in \A \mid \pi[s] = U \,\} .
\end{gather*}
Then $G \pi = P : U \mapsto P_U$ is a presheaf, with $P_{V \subseteq U} : P_U \to P_V :: s \mapsto s \cap \pi^{-1}[V]$.
Moreover $\{ P_{\{ x \} \subseteq U} \}_{x \in U}$ is jointly monic;
so $G \pi = P$ is a separated presheaf by \autoref{thm:separated.presheaf}.

Given a non-degenerate map of bundles $f : \pi_1 \to \pi_2$ between $\pi_i : \A_i \to \C$ ($i = 1, 2$), let $G f : G \pi_1 \to G \pi_2$ be the family $\{ (G f)_U \}_{U \in \C}$ of functions
\begin{gather*}
(G f)_U : (G \pi_1)_U \to (G \pi_2)_U :: s \mapsto f[s] ,
\end{gather*}
which lands in $(G \pi_2)_U$ because $f$ is a map over $\C$, i.e.\ $\pi_2 \cmp f[s] = \pi_1[s]$.
To show $G f$ to be a natural transformation, fix any $s \in (G \pi_1)_U$.
Then the non-degeneracy of $\pi_1$, $\pi_2$, and $f$, together with $\pi_2 \cmp f = \pi_1$, implies that the triangle on the left below commutes and that all the three arrows are bijections.
Therefore the diagram on the right, with $h = {\pi_2}^{-1}[-] \cap f[s]$, commutes as well:
\begin{align*}
\begin{tikzpicture}[x=25pt,y=25pt,baseline=(C.base)]
\coordinate (O) at (0,0);
\node (A1) [inner sep=0.25em] at (O) {$s$};
\node (A2) [inner sep=0.25em] at ($ (A1) + (4,0) $) {$f[s]$};
\node (C) [inner sep=0.25em] at ($ (A1) + (2,-3.464) $) {$U$};
\draw [>->>] (A1) -- (A2) node [pos=0.5,inner sep=2pt,above] {$\rest{f}{s}$};
\draw [>->>] (A1) -- (C) node [pos=0.5,inner sep=3pt,left] {$\rest{\pi_1}{s}$};
\draw [>->>] (A2) -- (C) node [pos=0.5,inner sep=3pt,right] {$\rest{\pi_2}{f[s]}$};
\coordinate (A2') at (A2);
\coordinate (A1-C) at ($ (A1)!0.5!(C) $);
\path (A2') -- (A1-C) node [pos=0.67,sloped] {$=$};
\end{tikzpicture}
&&
\begin{tikzpicture}[x=25pt,y=25pt,baseline=(C.base)]
\coordinate (O) at (0,0);
\node (A1) [inner sep=0.25em] at (O) {$\pw(s)$};
\node (A2) [inner sep=0.25em] at ($ (A1) + (4,0) $) {$\pw(f[s])$};
\node (C) [inner sep=0.25em] at ($ (A1) + (2,-3.464) $) {$\pw(U)$};
\draw [transform canvas={yshift=5pt},->] (A1) -- (A2) node [pos=0.5,inner sep=2pt,above] {$f[-]$};
\draw [->] (A2) -- (A1) node [pos=0.5,inner sep=2pt,below] {$f^{-1}[-] \cap s$};
\draw [->] (A1) -- (C) node [pos=0.5,inner sep=3pt,right] {$\pi_1[-]$};
\draw [transform canvas={xshift=-5.774pt},->] (C) -- (A1) node [pos=0.5,inner sep=3pt,left] {${\pi_1}^{-1}[-] \cap s$};
\draw [transform canvas={xshift=5.774pt},->] (A2) -- (C) node [pos=0.5,inner sep=3pt,right] {$\pi_2[-]$};
\draw [->] (C) -- (A2) node [pos=0.5,inner sep=3pt,left] {$h$};
\end{tikzpicture}
\end{align*}
In particular, $h = f[{\pi_1}^{-1}[-] \cap s]$, which means that the square below commutes.
\begin{align*}
\begin{tikzpicture}[x=25pt,y=25pt,baseline=(O.base)]
\coordinate (O) at (0,0);
\coordinate (r) at (3.5,0);
\coordinate (d) at (0,-2);
\node (Dom1) [inner sep=0.25em] at (O) {$(G \pi_1)_U$};
\node (Dom2) [inner sep=0.25em] at ($ (Dom1) + (d) $) {$(G \pi_1)_V$};
\node (Cod1) [inner sep=0.25em] at ($ (Dom1) + (r) $) {$(G \pi_2)_U$};
\node (Cod2) [inner sep=0.25em] at ($ (Dom2) + (r) $) {$(G \pi_2)_V$};
\draw [->] (Dom1) -- (Dom2);
\draw [->] (Cod1) -- (Cod2);
\draw [->] (Dom1) -- (Cod1) node [pos=0.5,inner sep=2pt,above] {$(G f)_U$};
\draw [->] (Dom2) -- (Cod2) node [pos=0.5,inner sep=2pt,below] {$(G f)_V$};
\coordinate (Dom2') at (Dom2);
\coordinate (Cod1') at (Cod1);
\path (Dom2') -- (Cod1') node [pos=0.5,sloped] {$=$};
\end{tikzpicture}
&&
\begin{tikzpicture}[x=25pt,y=25pt,baseline=(O.base)]
\coordinate (O) at (0,0);
\coordinate (r) at (6.75,0);
\coordinate (d) at (0,-2);
\node (Dom1) [inner sep=0.25em] at (O) {$s$};
\node (Dom2) [inner sep=0.25em] at ($ (Dom1) + (d) $) {$s \cap {\pi_1}^{-1}[V]$};
\node (Cod1) [inner sep=0.25em] at ($ (Dom1) + (r) $) {$f[s]$};
\node (Cod2) [inner sep=0.25em] at ($ (Dom2) + (r) $) {$f[s] \cap {\pi_2}^{-1}[V]$};
\node (Cod2a) [anchor=east,inner sep=0.25em] at ($ (Cod2.west) + (0.5em,0) $) {$f[s \cap {\pi_1}^{-1}[V]] = {}$};
\draw [|->] (Dom1) -- (Dom2);
\draw [|->] (Cod1) -- (Cod2);
\draw [|->] (Dom1) -- (Cod1);
\draw [|->] (Dom2) -- (Cod2a);
\end{tikzpicture}
\end{align*}
Thus $G f$ is a natural transformation.

It is obvious that $G \cmp F \cong 1_{\sPsh(\C)}$ and $F \cmp G \cong 1_{\ndSimp / C}$ naturally.
\end{proof}

\addtocounter{theorem}{5}

\begin{fact}
An $\DD$-valued presheaf $P$ on a simplicial complex $\C$ is a sheaf iff $P_U = \prod_{x \in U} P_x$ for all $U \in \C$.
$P$ is separated iff it is a subpresheaf of a sheaf, i.e.\ iff each $\langle P_{\{ x \} \subseteq U} \rangle_{x \in U} : P_U \to \prod_{x \in U} P_x$ is monic, i.e.\ iff each $P_U$ is a relation in $\DD$ on $(P_x)_{x \in U}$.
\end{fact}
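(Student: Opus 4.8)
The plan is to reduce everything to the special \emph{singleton cover} $U = \bigcup_{x \in U} \{ x \}$, which is available because $\C$ is downward closed with $\bigcup_{U \in \C} U = X$, so each $\{ x \} \in \C$; its decisive feature is that $\{ x \} \cap \{ y \} = \varnothing$ for $x \neq y$. Since every $U \in \C$ is finite, $\pw U$ is finite, and in checking the conditions of \autoref{def:separated.presheaf.general} for a converse I may restrict to covers indexed by the finitely many distinct subsets of $U$; thus all products occurring below are finite and exist in $\DD$. Throughout I abbreviate $P_x = P_{\{ x \}}$.

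For the sheaf equivalence, first suppose $P$ is a sheaf. The empty family covers $\varnothing$, and its equalizer diagram is $P_\varnothing \to 1 \rightrightarrows 1$ with both arrows the identity, so $P_\varnothing \cong 1$. Feeding in the singleton cover of $U$, the two parallel arrows $\prod_{x \in U} P_x \rightrightarrows \prod_{x, y} P_{\{ x \} \cap \{ y \}}$ agree on each diagonal component trivially and on each off-diagonal component because there the target is $P_\varnothing \cong 1$, to which any two arrows coincide. Hence the parallel pair is a single arrow, its equalizer is the identity of $\prod_{x \in U} P_x$, and $\langle P_{\{ x \} \subseteq U} \rangle_{x \in U} : P_U \to \prod_{x \in U} P_x$ is an isomorphism; functoriality $P_{\{ x \} \subseteq V} \cmp P_{V \subseteq U} = P_{\{ x \} \subseteq U}$ then identifies every restriction with a product projection. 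Conversely, I would verify directly that the product presheaf $F_U = \prod_{x \in U} P_x$, with restrictions the projections, is a sheaf: for any cover $U = \bigcup_i U_i$, an arrow $h$ from a test object into $\prod_i P_{U_i}$ equalizing the two parallel maps has, for each $x \in U$, a single well-defined component $h_{i,x}$ over all $i$ with $x \in U_i$ (this is exactly the equalizing condition at the pairs $(j, k)$), and because $\bigcup_i U_i = U$ these assemble into the unique arrow into $\prod_{x \in U} P_x$ through which $h$ factors.

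For separatedness, the forward direction is just \autoref{def:separated.presheaf.general} applied to the singleton cover, giving each $\langle P_{\{ x \} \subseteq U} \rangle_{x \in U}$ monic. For the converse, given an arbitrary cover $U = \bigcup_i U_i$, I factor the singleton arrow through the cover arrow: choosing for each $x \in U$ an index $i$ with $x \in U_i$ and setting $m_x = P_{\{ x \} \subseteq U_i} \cmp p_i$ defines $m : \prod_i P_{U_i} \to \prod_{x \in U} P_x$ with $m \cmp \langle P_{U_i \subseteq U} \rangle_i = \langle P_{\{ x \} \subseteq U} \rangle_{x \in U}$ by functoriality; since the right-hand side is monic, its first factor $\langle P_{U_i \subseteq U} \rangle_i$ is monic too, so $P$ is separated.

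The residual equivalences are formal. ``$P_U$ is a relation on $(P_x)_{x \in U}$'' is by definition the monicity of $\langle P_{\{ x \} \subseteq U} \rangle_{x \in U}$, so that clause adds only terminology. If $P$ is separated, the sheaf $F_U = \prod_{x \in U} P_x$ receives $P$ as a subpresheaf via these monic arrows, naturality being a further functoriality check; conversely, if $P \monoto F$ with $F$ a sheaf, then $F_U \cong \prod_{x \in U} F_x$, and in the square relating $\langle P_{\{ x \} \subseteq U} \rangle$ to the isomorphism $\langle F_{\{ x \} \subseteq U} \rangle$ (commuting by naturality of $P \monoto F$) the composite $P_U \monoto F_U \xrightarrow{\cong} \prod_{x \in U} F_x$ is monic, forcing its factor $\langle P_{\{ x \} \subseteq U} \rangle$ to be monic. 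The one genuinely delicate step, I expect, is the converse half of the sheaf equivalence: replacing the set-theoretic ``glue the matching family pointwise'' argument by the equalizer universal property valid in an arbitrary finitely complete $\DD$, the key being that $P_\varnothing \cong 1$ is exactly what collapses the two parallel arrows of the singleton cover.
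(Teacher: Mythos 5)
Your proposal is correct and takes essentially the same route as the paper: the empty cover forces $P_\varnothing \cong 1$, the singleton cover then has its parallel pair collapse to a single arrow (off-diagonal components land in the terminal object), making $\langle P_{\{ x \} \subseteq U} \rangle_{x \in U}$ an iso, and the converse is the same direct universal-property verification that the product presheaf is a sheaf. The only divergence is bookkeeping in the separatedness cycle: you prove that monicity of the singleton comparisons implies separatedness directly, via the factorization $\langle P_{\{ x \} \subseteq U} \rangle_{x \in U} = m \cmp \langle P_{U_i \subseteq U} \rangle_i$ with a chosen covering index per point, whereas the paper routes through ``subpresheaf of a sheaf'' and cancels monos in the naturality square $\langle F_{U_i \subseteq U} \rangle_i \cmp m_U = \langle m_{U_i} \rangle_i \cmp \langle P_{U_i \subseteq U} \rangle_i$ --- the same cancellation principle, so this is a cosmetic rather than substantive difference (and your componentwise argument that the parallel pair is literally \emph{equal} is in fact a more careful justification than the paper's remark that both arrows are isos).
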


\begin{proof}
For the sheaf part, suppose $P$ is a sheaf.
First, $\bigcup_{i \in \varnothing} U_i = \varnothing$ gives an equalizer
\begin{gather*}
\begin{tikzpicture}[x=25pt,y=25pt,baseline=(A.base)]
\node (A) [anchor=east,inner sep=0.25em] at (0,0) {$P_\varnothing$};
\node (B) [anchor=west,inner sep=0.25em] at ($ (A.east) + (2,0) $) {$1$};
\node (C) [anchor=west,inner sep=0.25em] at ($ (B.east) + (2,0) $) {$1$};
\draw [>->] (A) -- (B);
\draw [transform canvas={yshift=3.2pt},->] (B) -- (C) node [pos=0.5,inner sep=2pt,above] {$1$};
\draw [transform canvas={yshift=-3.2pt},->] (B) -- (C) node [pos=0.5,inner sep=2pt,below] {$1$};
\end{tikzpicture}
\end{gather*}
that is an iso;
so $P_\varnothing = 1$.
Then for every $U \in \C$, $\bigcup_{x \in U} \{ x \} = U$ gives an equalizer
\begin{gather*}
\begin{tikzpicture}[x=25pt,y=25pt,baseline=(A.base)]
\node (A) [anchor=east,inner sep=0.25em] at (0,0) {$P_U$};
\node (B) [anchor=west,inner sep=0.25em] at ($ (A.east) + (3,0) $) {$\displaystyle{\prod_{x \in U} P_x}$};
\node (C) [anchor=west,inner sep=0.25em] at ($ (B.east) + (5,0) $) {$\displaystyle{\prod_{y, z \in U} P_{\{ y \} \cap \{ z \}}}$};
\draw [>->] (A) -- (B) node [pos=0.5,inner sep=2pt,above] {$\langle P_{\{ x \} \subseteq U} \rangle_{x \in U}$};
\draw [transform canvas={yshift=3.2pt},->] (B) -- (C) node [pos=0.5,inner sep=2pt,above] {$\langle P_{\{ y \} \cap \{ z \} \subseteq \{ y \}} \cmp p_y \rangle_{y, z \in U}$};
\draw [transform canvas={yshift=-3.2pt},->] (B) -- (C) node [pos=0.5,inner sep=2pt,below] {$\langle P_{\{ y \} \cap \{ z \} \subseteq \{ z \}} \cmp p_z \rangle_{y, z \in U}$};
\end{tikzpicture}
\end{gather*}
but then the parallel arrows are both isos, and therefore the equalizer is an iso, i.e.\ $P_U = \prod_x P_x$.
For the other direction, suppose $P_U = \prod_{x \in U} P_x$ for all $U \in \C$ and fix $\bigcup_i U_i = U$.
Then the equalizer we want is
\begin{gather*}
\begin{tikzpicture}[x=25pt,y=25pt,baseline=(A.base)]
\node (A) [anchor=east,inner sep=0.25em] at (0,0) {$\displaystyle{\prod_{x \in U} P_x}$};
\node (B) [anchor=west,inner sep=0.25em] at ($ (A.east) + (2.75,0) $) {$\displaystyle{\prod_i \prod_{x \in U_i} P_x}$};
\node (C) [anchor=west,inner sep=0.25em] at ($ (B.east) + (5.25,0) $) {$\displaystyle{\prod_{j, k} \prod_{x \in U_j \cap U_k} P_x}$};
\draw [>->] (A) -- (B) node [pos=0.5,inner sep=2pt,above] {$\langle \langle p_x \rangle_{x \in U_i} \rangle_i$};
\draw [transform canvas={yshift=3.2pt},->] (B) -- (C) node [pos=0.5,inner sep=2pt,above] {$g = \langle \langle p_x \rangle_{x \in U_j \cap U_k} \cmp p_j \rangle_{j, k}$};
\draw [transform canvas={yshift=-3.2pt},->] (B) -- (C) node [pos=0.5,inner sep=2pt,below] {$h = \langle \langle p_x \rangle_{x \in U_j \cap U_k} \cmp p_k \rangle_{j, k}$};
\end{tikzpicture}
\end{gather*}
But it is easy to check that any $f$ such that $g \cmp f = h \cmp f$ has the form $\langle \langle f_x \rangle_{x \in U_i} \rangle_i$, and hence factors uniquely as $\langle \langle p_x \rangle_{x \in U_i} \rangle_i \cmp \langle f_x \rangle_{x \in U}$.
Therefore $\langle \langle p_x \rangle_{x \in U_i} \rangle_i$ is an equalizer as desired.

For the presheaf part, fix $P$ and number the conditions as follows:
\begin{enumerate}
\def\theenumi{\roman{enumi}}
\item\label{item:separated.presheaf.general.1}
$P$ is separated, i.e.\ $\langle P_{U_i \subseteq U} \rangle_i : P_U \to \prod_i P_{U_i}$ is monic whenever $\bigcup_i U_i = U$.
\item\label{item:separated.presheaf.general.2}
$P$ is a subpresheaf of a sheaf.
\item\label{item:separated.presheaf.general.3}
Each $\langle P_{\{ x \} \subseteq U} \rangle_{x \in U} : P_U \to \prod_{x \in U} P_x$ is monic.
\item\label{item:separated.presheaf.general.4}
Each $P_U$ is a relation in $\DD$ on $(P_x)_{x \in U}$.
\end{enumerate}
\eqref{item:separated.presheaf.general.4} is defined by \eqref{item:separated.presheaf.general.3}.
\eqref{item:separated.presheaf.general.1} implies \eqref{item:separated.presheaf.general.3} as an instance.
\eqref{item:separated.presheaf.general.3} implies \eqref{item:separated.presheaf.general.2}, since $U \mapsto \prod_{x \in U} P_x$ is a sheaf by the sheaf part above.
So suppose \eqref{item:separated.presheaf.general.2}, that $m : P \monoto F$ for some sheaf $F$.
This means that the following square commutes, in which $\langle F_{U_i \subseteq U} \rangle_i$ is an equalizer and so monic.
\begin{gather*}
\begin{tikzpicture}[x=25pt,y=25pt,baseline=(O.base)]
\coordinate (O) at (0,0);
\coordinate (r) at (4,0);
\coordinate (d) at (0,-2.25);
\node (Dom1) [inner sep=0.25em] at (O) {$P_U$};
\node (Dom2) [inner sep=0.25em] at ($ (Dom1) + (d) $) {$F_U$};
\node (Cod1) [inner sep=0.25em] at ($ (Dom1) + (r) $) {$\displaystyle{\prod_i P_{U_i}}$};
\node (Cod2) [inner sep=0.25em] at ($ (Dom2) + (r) $) {$\displaystyle{\prod_i F_{U_i}}$};
\draw [>->] (Dom1) -- (Dom2) node [pos=0.5,inner sep=2pt,left] {$m_U$};
\draw [>->] (Cod1) -- (Cod2) node [pos=0.5,inner sep=2pt,right] {$\langle m_{U_i} \rangle_i$};
\draw [->] (Dom1) -- (Cod1) node [pos=0.5,inner sep=2pt,above] {$\langle P_{U_i \subseteq U} \rangle_i$};
\draw [>->] (Dom2) -- (Cod2) node [pos=0.5,inner sep=2pt,below] {$\langle F_{U_i \subseteq U} \rangle_i$};
\coordinate (Dom2') at (Dom2);
\coordinate (Cod1') at (Cod1);
\path (Dom2') -- (Cod1') node [pos=0.5,sloped] {$=$};
\end{tikzpicture}
\end{gather*}
Therefore $\langle P_{U_i \subseteq U} \rangle_i$ is monic, too.
Thus \eqref{item:separated.presheaf.general.1}.
\end{proof}

\addtocounter{theorem}{1}

\begin{fact}
Let $F$ be a sheaf on a simplicial complex $\C$.
Then a family $(i_U : A_U \monoto F_U)_{U \in \C}$ of subobjects forms a subpresheaf of $F$, and hence a separated presheaf, iff $A_V \leqslant {F_{U \subseteq V}}^{-1}(A_U)$, or equivalently $\exists_{F_{U \subseteq V}}(A_V) \leqslant A_U$, whenever $U \subseteq V \in \C$.
Moreover, a separated presheaf $i : A \monoto F$ is no-signalling iff $\exists_{F_{U \subseteq V}}(A_V) = A_U$ whenever $U \subseteq V \in \C$.
\end{fact}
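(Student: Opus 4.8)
The plan is to handle the two claims separately, letting the adjunction $\exists_{F_{U \subseteq V}} \dashv {F_{U \subseteq V}}^{-1}$ carry most of the weight.

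First I would treat the subpresheaf characterization. A family $(i_U : A_U \monoto F_U)_{U \in \C}$ forms a subpresheaf of $F$ exactly when, for every $U \subseteq V \in \C$, the restriction $F_{U \subseteq V} : F_V \to F_U$ carries $A_V$ into $A_U$, i.e.\ when $F_{U \subseteq V} \cmp i_V$ factors through $i_U$. By the universal property of the pullback ${F_{U \subseteq V}}^{-1}(A_U) \monoto F_V$, such a factorization exists iff $A_V \leqslant {F_{U \subseteq V}}^{-1}(A_U)$, and by $\exists_{F_{U \subseteq V}} \dashv {F_{U \subseteq V}}^{-1}$ this is in turn equivalent to $\exists_{F_{U \subseteq V}}(A_V) \leqslant A_U$. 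I would then check that the induced restrictions $A_{U \subseteq V} : A_V \to A_U$ are automatically functorial, which is immediate since each $i_U$ is monic and the $F_{U \subseteq V}$ already satisfy the presheaf laws; and since $A$ is thereby a subpresheaf of the sheaf $F$, it is separated by \autoref{thm:separated.presheaf.general}.

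For the no-signalling characterization I would use the commuting square $i_U \cmp A_{U \subseteq V} = F_{U \subseteq V} \cmp i_V$ supplied by the subpresheaf structure. The key point is that $\exists_{F_{U \subseteq V}}(A_V)$, being the image of $F_{U \subseteq V} \cmp i_V$, is the image of $i_U \cmp A_{U \subseteq V}$. Taking a regular epi--mono factorization $A_{U \subseteq V} = m \cmp e$ in $\DD$ and postcomposing with the monic $i_U$ yields $i_U \cmp A_{U \subseteq V} = (i_U \cmp m) \cmp e$, which is again a regular epi--mono factorization because $i_U \cmp m$ is monic; hence $\exists_{F_{U \subseteq V}}(A_V)$ is the subobject $i_U \cmp m$ of $F_U$. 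Cancelling the mono $i_U$, this subobject equals $A_U$ iff $m$ is an isomorphism, i.e.\ iff $A_{U \subseteq V}$ is a regular epi. Quantifying over all $U \subseteq V \in \C$ then gives that $A$ is no-signalling (\autoref{def:no-signalling.general}) iff $\exists_{F_{U \subseteq V}}(A_V) = A_U$ throughout.

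The main obstacle is the image computation in the second part: it rests on the fact that in a regular category the (regular epi, mono) factorization is preserved under postcomposition with a monomorphism, so that the image of $i_U \cmp A_{U \subseteq V}$ may be read off as $i_U$ applied to the image of $A_{U \subseteq V}$. I would invoke this as a standard property of regular categories (as in \cite{but98,joh02}), after which the cancellation of $i_U$ and the identification of ``$m$ an isomorphism'' with ``$A_{U \subseteq V}$ a regular epi'' are routine.
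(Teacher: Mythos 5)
Your proof is correct and follows essentially the same route as the paper's: the first equivalence via the universal property of the pullback ${F_{U \subseteq V}}^{-1}(A_U)$ (with the adjunction $\exists_{F_{U \subseteq V}} \dashv {F_{U \subseteq V}}^{-1}$ supplying the reformulation), and the second by factoring $A_{U \subseteq V} = m \cmp e$ and noting that $(i_U \cmp m) \cmp e$ is a regular epi--mono factorization of $F_{U \subseteq V} \cmp i_V$, so that by uniqueness of such factorizations $\exists_{F_{U \subseteq V}}(A_V) = A_U$ iff $m$ is an iso iff $A_{U \subseteq V}$ is a regular epi. The only differences are cosmetic: you additionally spell out the (immediate) functoriality of the induced restrictions, which the paper leaves implicit.
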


\begin{proof}
Fix $U \subseteq V \in \C$ and write $f = F_{U \subseteq V}$.
Then, in the following diagram, the longer dotted arrow making the outer square commute, i.e.\ $A_{U \subseteq V}$, exists iff the shorter one exists, i.e.\ iff $A_V \leqslant f^{-1}(A_U)$.
Hence the first ``iff'' in the fact.
\begin{gather*}
\begin{tikzpicture}[x=25pt,y=25pt,baseline=(O.base)]
\coordinate (O) at (0,0);
\coordinate (r) at (3,0);
\coordinate (d) at (0,-2);
\node (Inv) [inner sep=0.25em] at (O) {$f^{-1}(A_U)$};
\coordinate (Inv') at (Inv);
\node (AV) [inner sep=0.25em] at ($ (Inv) + (-1.5,1) $) {$A_V$};
\node (AU) [inner sep=0.25em] at ($ (Inv) + (r) $) {$A_U$};
\node (FV) [inner sep=0.25em] at ($ (Inv) + (d) $) {$F_V$};
\node (FU) [inner sep=0.25em] at ($ (FV) + (r) $) {$F_U$};
\draw [->] (Inv) -- (AU);
\draw [->] (FV) -- (FU) node [pos=0.5,inner sep=2pt,below] {$f$};
\draw [>->] (Inv) -- (FV);
\draw [>->] (AU) -- (FU) node [pos=0.5,inner sep=2pt,right] {$i_U$};
\coordinate (Inv-pb) at ($ (Inv) + (0.7,-0.7) $);
\draw ($ (Inv-pb) + (-0.3,0) $) -- (Inv-pb) -- ($ (Inv-pb) + (0,0.3) $);
\draw [->,dotted] (AV) to [bend left=12.5] coordinate [pos=0.5] (AV-AU) node [pos=0.5,inner sep=2pt,above] {$A_{U \subseteq V}$} (AU);
\draw [>->] (AV) to [bend right=25] coordinate [pos=0.5] (AV-FV) node [pos=0.5,inner sep=2pt,left] {$i_V$} (FV);
\draw [>->,dotted] (AV) -- (Inv);
\path (Inv') -- (AV-AU) node [pos=0.6,sloped] {$=$};
\path (Inv') -- (AV-FV) node [pos=0.6,yshift=-7.5pt,sloped] {$=$};
\end{tikzpicture}
\end{gather*}
For the second ``iff'', factorize $A_{U \subseteq V}$ into a regular epi $e$ followed by a mono $m$ as follows;
it gives $\exists_f(A_V)$ because $(i_U \cmp m) \cmp e$ is a regular epi-mono factorization of $f \cmp i_V$.
\begin{gather*}
\begin{tikzpicture}[x=25pt,y=25pt,baseline=(O.base)]
\coordinate (O) at (0,0);
\coordinate (r) at (3.5,0);
\coordinate (d) at (0,-2);
\node (AV) [inner sep=0.25em] at (O) {$A_V$};
\node (Dir) [inner sep=0.25em] at ($ (AV) + (r) + (-1,0) $) {$\exists_f(A_V)$};
\coordinate (Dir') at (Dir);
\node (AU) [inner sep=0.25em] at ($ (AV) + (r) + (1,0) $) {$A_U$};
\coordinate (AU') at (AU);
\node (FV) [inner sep=0.25em] at ($ (AV) + (d) $) {$F_V$};
\coordinate (FV') at (FV);
\node (FU) [inner sep=0.25em] at ($ (FV) + (r) $) {$F_U$};
\draw [-latex] (AV) -- (Dir) node [pos=0.5,inner sep=2pt,above] {$e$};
\draw [>->] (Dir) -- (AU) node [pos=0.5,inner sep=2pt,above] {$m$};
\draw [->] (FV) -- (FU) node [pos=0.5,inner sep=2pt,below] {$f$};
\draw [>->] (AV) -- (FV) node [pos=0.5,inner sep=2pt,left] {$i_V$};
\draw [>->] (AU) -- (FU) node [pos=0.5,inner sep=3pt,right] {$i_U$};
\draw [>->,dotted] (Dir) -- (FU) coordinate [pos=0.5] (Dir-FU);
\path (Dir') -- (FV') node [pos=0.5,sloped] {$=$};
\path (AU') -- (Dir-FU) node [pos=0.67,sloped] {$=$};
\end{tikzpicture}
\end{gather*}
Then, because a regular epi-mono factorization is unique, $A_{U \subseteq V} = m \cmp e$ is a regular epi iff $m$ is an iso.
\end{proof}

\begin{fact}
(Assume that $X$ is finite or that $\DD$ is complete.)
Given any $\DD$-valued separated presheaf $A$, let $F$ be a sheaf such that $i : A \monoto F$ and, using $A_U$ as predicates in the internal language of $\DD$, define
\begin{gather*}
\natjoin A = \Scott{\, \bar{x} : F_X \mid \bigwedge_{U \in \C} A_U(F_{U \subseteq X} \bar{x}) \,} = \bigwedge_{U \in \C} {F_{U \subseteq X}}^{-1}(A_U) \monoto F_X .
\end{gather*}
Then $\natjoin A$ is the limit of $A$ as a diagram in $\DD$.
\end{fact}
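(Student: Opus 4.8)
The plan is to verify directly that $\natjoin A$, equipped with the evident restriction maps, satisfies the universal property of the limit of the diagram $A : \C^\op \to \DD$. Since $X$ is finite or $\DD$ is complete, the product $F_X = \prod_{x \in X} F_x$ and the meet $\bigwedge_{U \in \C} {F_{U \subseteq X}}^{-1}(A_U)$ both exist, and $F$ extends to a sheaf on $\pw X$ in which each $F_{U \subseteq X}$ is the product projection onto the $U$-coordinates. Write $j : \natjoin A \monoto F_X$ for the defining mono. For each $U \in \C$, since $\natjoin A \leqslant {F_{U \subseteq X}}^{-1}(A_U)$, the composite $F_{U \subseteq X} \cmp j$ factors through $i_U : A_U \monoto F_U$; call the factorization $\rho_U : \natjoin A \to A_U$. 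That $(\rho_U)_{U \in \C}$ is a cone over $A$, i.e.\ $A_{V \subseteq U} \cmp \rho_U = \rho_V$ for $V \subseteq U$, follows by postcomposing with the mono $i_V$: functoriality of $F$ gives $F_{V \subseteq U} \cmp F_{U \subseteq X} \cmp j = F_{V \subseteq X} \cmp j$, and naturality of $i$ turns both sides into $i_V \cmp A_{V \subseteq U} \cmp \rho_U$ and $i_V \cmp \rho_V$ respectively, so cancelling $i_V$ yields the claim.

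For universality I would exploit that $\C$ contains every singleton $\{x\}$, since $\bigcup_{U \in \C} U = X$ and $\C$ is downward closed. Given any cone $(c_U : L \to A_U)_{U \in \C}$, form $\ell = \langle i_{\{x\}} \cmp c_{\{x\}} \rangle_{x \in X} : L \to \prod_{x \in X} F_x = F_X$. The key computation is that $\ell$ lands inside each ${F_{U \subseteq X}}^{-1}(A_U)$: for $x \in U$ the projection $F_U \to F_x$ is $F_{\{x\} \subseteq U}$, so naturality of $i$ and the cone condition $A_{\{x\} \subseteq U} \cmp c_U = c_{\{x\}}$ give coordinatewise $F_{U \subseteq X} \cmp \ell = \langle i_{\{x\}} \cmp c_{\{x\}} \rangle_{x \in U} = i_U \cmp c_U$, which manifestly factors through $A_U$. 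As this holds for every $U \in \C$, the map $\ell$ factors through the meet, yielding $u : L \to \natjoin A$ with $j \cmp u = \ell$; then $i_U \cmp \rho_U \cmp u = F_{U \subseteq X} \cmp j \cmp u = F_{U \subseteq X} \cmp \ell = i_U \cmp c_U$, and cancelling $i_U$ gives $\rho_U \cmp u = c_U$.

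For uniqueness, any $u'$ with $\rho_U \cmp u' = c_U$ for all $U$ has $j \cmp u'$ whose $x$-coordinate is $F_{\{x\} \subseteq X} \cmp j \cmp u' = i_{\{x\}} \cmp \rho_{\{x\}} \cmp u' = i_{\{x\}} \cmp c_{\{x\}}$, using that $F_{\{x\} \subseteq X} \cmp j = i_{\{x\}} \cmp \rho_{\{x\}}$ by definition of $\rho_{\{x\}}$; hence $j \cmp u' = \ell = j \cmp u$, and since $j$ is monic, $u' = u$. The only step requiring genuine care is the central identity $F_{U \subseteq X} \cmp \ell = i_U \cmp c_U$, where one must keep straight the two r\^oles of the singletons---as objects of $\C$ supplying the cone legs $c_{\{x\}}$, and as coordinates of the product $F_X = \prod_{x \in X} F_x$---and invoke both the separatedness of $A$ (so that a map into $F_U$ is pinned down by its composites with the projections $F_{\{x\} \subseteq U}$) and the naturality of $i : A \monoto F$.
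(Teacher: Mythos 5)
Your proof is correct and follows essentially the same route as the paper's: the cone legs $\rho_U$ obtained by factoring $F_{U \subseteq X} \cmp j$ through $i_U$, the mediating map built as $\langle i_{\{x\}} \cmp c_{\{x\}} \rangle_{x \in X} : L \to F_X$ and shown to factor through each pullback ${F_{U \subseteq X}}^{-1}(A_U)$ and hence through the meet, with your singleton-coordinate uniqueness argument merely making explicit what the paper's appeal to the universal properties leaves implicit. One small attribution slip: the fact that a map into $F_U$ is pinned down by its composites with the projections $F_{\{x\} \subseteq U}$ is the sheaf property of $F$ (namely $F_U = \prod_{x \in U} F_x$, \autoref{thm:separated.presheaf.general}), not the separatedness of $A$, which only guarantees that the embedding $i : A \monoto F$ exists in the first place.
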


\begin{proof}
Recall that the sheaf $F$ extends to $\pw X$.
So, for every $U \in \C$, write $p_U = F_{U \subseteq X} : F_X \to F_U$, and define an arrow $\rho_U : \natjoin A \to A_U$ as the composition of the two top arrows in the following.
\begin{gather*}
\begin{tikzpicture}[x=25pt,y=25pt,baseline=(O.base)]
\coordinate (O) at (0,0);
\coordinate (r) at (3,0);
\coordinate (d) at (0,-2);
\node (Inv) [inner sep=0.25em] at (O) {${p_U}^{-1}(A_U)$};
\coordinate (Inv') at (Inv);
\node (Nat) [inner sep=0.25em] at ($ (Inv) + (-2.5,0) $) {$\natjoin A$};
\node (AU) [inner sep=0.25em] at ($ (Inv) + (r) $) {$A_U$};
\node (FX) [inner sep=0.25em] at ($ (Inv) + (d) $) {$F_X$};
\node (FU) [inner sep=0.25em] at ($ (FX) + (r) $) {$F_U$};
\draw [->] (FX) -- (FU) node [pos=0.5,inner sep=2pt,below] {$p_U$};
\draw [>->] (AU) -- (FU);
\draw [>->] (Inv) -- (FX);
\draw [->] (Inv) -- (AU);
\coordinate (Inv-pb) at ($ (Inv) + (0.7,-0.7) $);
\draw ($ (Inv-pb) + (-0.3,0) $) -- (Inv-pb) -- ($ (Inv-pb) + (0,0.3) $);
\draw [>->] (Nat) -- (Inv);
\draw [>->] (Nat) to coordinate [pos=0.5] (Nat-FX) (FX);
\path (Inv') -- (Nat-FX) node [pos=0.67,sloped] {$=$};
\end{tikzpicture}
\end{gather*}
Then $\natjoin A$ is a cone of the diagram $A$ because, for every $U \subseteq V \in \C$, we have $A_{U \subseteq V} \cmp \rho_V = \rho_U$ by chasing the following commutative diagram, where $f = F_{U \subseteq V}$.
\begin{gather*}
\begin{tikzpicture}[x=25pt,y=25pt,baseline=(O.base)]
\coordinate (O) at (0,0);
\coordinate (r) at (3,0);
\coordinate (d) at (0,-1);
\coordinate (dd) at (0,-2);
\coordinate (ddd) at (0,-2.5);
\node (Nat) [inner sep=0.25em] at (O) {$\natjoin A$};
\node (pAV) [inner sep=0.25em] at ($ (Nat) + (r) $) {${p_V}^{-1}(A_V)$};
\node (pAU) [inner sep=0.25em] at ($ (pAV) + (dd) $) {${p_U}^{-1}(A_U)$};
\node (FX) [inner sep=0.25em] at ($ (pAU) + (ddd) $) {$F_X$};
\node (AV) [inner sep=0.25em] at ($ (pAV) + (r) + (d) $) {$A_V$};
\node (fAU) [inner sep=0.25em] at ($ (AV) + (dd) $) {$f^{-1}(A_V)$};
\node (FV) [inner sep=0.25em] at ($ (fAU) + (ddd) $) {$F_V$};
\node (AU) [inner sep=0.25em] at ($ (AV) + (r) + (d) $) {$A_U$};
\node (FU) [inner sep=0.25em] at ($ (AU) + (ddd) $) {$F_U$};
\draw [>->] (Nat) -- (pAV);
\draw [>->] (Nat) -- (pAU);
\draw [>->] (Nat) -- (FX);
\draw [>->] (pAV) -- (pAU);
\draw [>->] (pAU) -- (FX);
\draw [->] (pAV) -- (AV);
\draw [->] (pAU) -- (fAU);
\draw [->] (FX) -- (FV);
\draw [>->] (AV) -- (fAU);
\draw [>->] (fAU) -- (FV);
\draw [->] (pAU) -- (AU);
\draw [->] (FX) -- (FU);
\draw [->] (AV) -- (AU);
\draw [->] (fAU) -- (AU);
\draw [->] (FV) -- (FU);
\draw [>->] (AU) -- (FU);
\end{tikzpicture}
\end{gather*}
Now we claim that $\natjoin A$ is the terminal among the cones of $A$.
Fix any cone $C$ of $A$ with $g_U : C \to A_U$ for all $U \in \C$.
Then, by \autoref{thm:separated.presheaf.general}, for every $U \subseteq X$ we have $\langle i_x \cmp g_x \rangle_{x \in U} : C \to F_U$.
Therefore, for every $U \in \C$, the outer square below commutes and hence the arrow making the triangles commute, $u_U$, exists uniquely:
\begin{gather*}
\begin{tikzpicture}[x=25pt,y=25pt,baseline=(O.base)]
\coordinate (O) at (0,0);
\coordinate (r) at (3,0);
\coordinate (d) at (0,-2);
\node (Inv) [inner sep=0.25em] at (O) {${p_U}^{-1}(A_U)$};
\coordinate (Inv') at (Inv);
\node (AV) [inner sep=0.25em] at ($ (Inv) + (-2.25,1.5) $) {$C$};
\node (AU) [inner sep=0.25em] at ($ (Inv) + (r) $) {$A_U$};
\node (FV) [inner sep=0.25em] at ($ (Inv) + (d) $) {$F_X$};
\node (FU) [inner sep=0.25em] at ($ (FV) + (r) $) {$F_U$};
\draw [->] (Inv) -- (AU);
\draw [->] (FV) -- (FU) node [pos=0.5,inner sep=2pt,below] {$p_U$};
\draw [>->] (Inv) -- (FV);
\draw [>->] (AU) -- (FU) node [pos=0.5,inner sep=2pt,right] {$i_U$};
\coordinate (Inv-pb) at ($ (Inv) + (0.7,-0.7) $);
\draw ($ (Inv-pb) + (-0.3,0) $) -- (Inv-pb) -- ($ (Inv-pb) + (0,0.3) $);
\draw [->] (AV) to [bend left=12.5] coordinate [pos=0.5] (AV-AU) node [pos=0.5,inner sep=2pt,above] {$g_U$} (AU);
\draw [->] (AV) to [bend right=25] coordinate [pos=0.5] (AV-FV) node [pos=0.5,inner sep=2pt,left] {$\langle i_x \cmp g_x \rangle_{x \in X}$} (FV);
\draw [->,dotted] (AV) -- (Inv) node [pos=0.66,inner sep=3pt,left] {$u_U$};
\path (Inv') -- (AV-AU) node [pos=0.6,sloped] {$=$};
\path (Inv') -- (AV-FV) node [pos=0.6,yshift=-7.5pt,sloped] {$=$};
\end{tikzpicture}
\end{gather*}
Then the definition $\natjoin A = \bigwedge_{U \in \C} {p_U}^{-1}(A_U)$ means that there is a unique $v : C \to \natjoin A$ that makes the following diagram commute for each $U \subseteq V \in \C$.
\begin{gather*}
\begin{tikzpicture}[x=25pt,y=25pt,baseline=(FV.base)]
\coordinate (O) at (0,0);
\coordinate (r) at (3,0);
\coordinate (d) at (0,-1);
\coordinate (dd) at (0,-2);
\coordinate (ddd) at (0,-2.5);
\node (Nat) [inner sep=0.25em] at (O) {$\natjoin A$};
\node (pAV) [inner sep=0.25em] at ($ (Nat) + (r) $) {${p_V}^{-1}(A_V)$};
\node (pAU) [inner sep=0.25em] at ($ (pAV) + (dd) $) {${p_U}^{-1}(A_U)$};
\node (FX) [inner sep=0.25em] at ($ (pAU) + (ddd) $) {$F_X$};
\node (AV) [inner sep=0.25em] at ($ (pAV) + (r) + (d) $) {$A_V$};
\node (fAU) [inner sep=0.25em] at ($ (AV) + (dd) $) {$f^{-1}(A_V)$};
\node (FV) [inner sep=0.25em] at ($ (fAU) + (ddd) $) {$F_V$};
\node (AU) [inner sep=0.25em] at ($ (AV) + (r) + (d) $) {$A_U$};
\node (FU) [inner sep=0.25em] at ($ (AU) + (ddd) $) {$F_U$};
\draw [>->] (Nat) -- (pAV);
\draw [>->] (Nat) -- (pAU);
\draw [>->] (Nat) -- (FX);
\draw [>->] (pAV) -- (pAU);
\draw [>->] (pAU) -- (FX);
\draw [->] (pAV) -- (AV);
\draw [->] (pAU) -- (fAU);
\draw [->] (FX) -- (FV);
\draw [>->] (AV) -- (fAU);
\draw [>->] (fAU) -- (FV);
\draw [->] (pAU) -- (AU);
\draw [->] (FX) -- (FU);
\draw [->] (AV) -- (AU);
\draw [->] (fAU) -- (AU);
\draw [->] (FV) -- (FU);
\draw [>->] (AU) -- (FU);
\node (C) [inner sep=0.25em] at ($ (Nat) + (0,2.5) $) {$C$};
\draw [->,dotted] (C) -- (Nat) node [pos=0.5,inner sep=2pt,left] {$v$};
\draw [->] (C) -- (pAV) node [pos=0.75,inner sep=5pt,above] {$u_V$};
\draw [->] (C) -- (pAU) node [pos=0.5,inner sep=8pt,above] {$u_U$};
\draw [->] (C) -- (FX);
\draw [->] (C) to [bend left=15] node [pos=0.75,inner sep=5pt,above] {$g_V$} (AV);
\draw [->] (C) to [bend left=20] node [pos=0.6,inner sep=4pt,above] {$g_U$} (AU);
\end{tikzpicture}
\qedhere
\end{gather*}
\end{proof}

\addtocounter{theorem}{5}

\begin{fact}
If $A \leqslant B$ for subpresheaves $A$ and $B$ of $F$, then $B \vDash_U \varphi$ implies $A \vDash_U \varphi$.
\end{fact}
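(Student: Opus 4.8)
The plan is to unfold the two definitions involved and reduce the statement to nothing more than transitivity of the subobject order $\leqslant$ in the poset $\Sub_\DD(F_U)$. By \autoref{def:inchworm.pre-model}, the hypothesis $B \vDash_U \varphi$ says precisely that $B_U \leqslant \Scott{\varphi}_U$ as subobjects of $F_U$, and the desired conclusion $A \vDash_U \varphi$ says precisely that $A_U \leqslant \Scott{\varphi}_U$. Thus the entire argument takes place inside the single subobject poset $\Sub_\DD(F_U)$ at the context $U$ witnessing $\varphi \in \Phi_U$, and no cross-context data is needed.

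First I would record what the assumption $A \leqslant B$ contributes at this context. Since $A$ and $B$ are both subpresheaves of $F$, their ordering is taken componentwise, so in particular $A_U \leqslant B_U$ holds in $\Sub_\DD(F_U)$; concretely, the mono $A_U \monoto F_U$ factors through the mono $B_U \monoto F_U$. Then I would simply chain the two inequalities, $A_U \leqslant B_U \leqslant \Scott{\varphi}_U$, and invoke transitivity of $\leqslant$ to obtain $A_U \leqslant \Scott{\varphi}_U$, which is exactly $A \vDash_U \varphi$.

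There is no real obstacle here: the only point meriting a word is that the componentwise ordering of subpresheaves of $F$ restricts, at each fixed $U$, to the ambient subobject ordering of $\Sub_\DD(F_U)$, which is immediate from the definition of the subpresheaf order. Once that identification is made explicit, the proof is pure poset transitivity and occupies at most a line.
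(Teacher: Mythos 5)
Your proof is correct and is exactly what the paper's one-line proof (``immediate by the definition of $\vDash_U$'') leaves implicit: unfold $B \vDash_U \varphi$ to $B_U \leqslant \Scott{\varphi}_U$, note the componentwise order gives $A_U \leqslant B_U$, and conclude by transitivity in $\Sub_\DD(F_U)$. Same approach, just spelled out.
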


\begin{proof}
Immediate by \autoref{def:inchworm.pre-model} of $\vDash_U$.
\end{proof}

\addtocounter{theorem}{1}

\begin{theorem}
Let $\Scott{-}$ be an interpretation of $\L_\C$ that models a theory $\vdash$ in $\L$.
Then the inchworm logic $\vdash_\C$ of $\vdash$ is sound with respect to the no-signalling models in $\Scott{-}$:
If $\Gamma \vdash_\C \varphi$, then $A \vDash \varphi$ for every no-signalling model $A$ of $\Gamma$ in $\Scott{-}$.
\end{theorem}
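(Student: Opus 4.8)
The plan is to induct on how the derivation $\Gamma \vdash_\C \varphi$ is built from the two clauses \eqref{item:inchworm.base} and \eqref{item:inchworm.inductive} of \autoref{def:inchworm.logic}. Throughout, a no-signalling model of $\Gamma$ is a no-signalling subpresheaf $A \monoto F$ with $A \vDash \psi$ for every $\psi \in \Gamma$, and the goal in each case is to show $A \vDash \varphi$, that is, $A_W \leqslant \Scott{\varphi}_W$ for every $W \in \C$ with $\varphi \in \Phi_W$.

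For the base case \eqref{item:inchworm.base}, suppose $\varphi \in \Phi_U$ and $\Gamma_U \vdash \varphi$ for some $U \in \C$. Since $\Scott{-}$ models $\vdash$, there is a finite $\Delta \subseteq \Gamma_U$ with $\bigwedge_{\psi \in \Delta} \Scott{\psi}_U \leqslant \Scott{\varphi}_U$. Each such $\psi$ lies in $\Phi_U$, so $A \vDash \psi$ gives $A_U \leqslant \Scott{\psi}_U$; meeting over $\Delta$ yields $A_U \leqslant \bigwedge_{\psi \in \Delta} \Scott{\psi}_U \leqslant \Scott{\varphi}_U$, i.e.\ $A \vDash_U \varphi$. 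To upgrade this to $A \vDash \varphi$, I note that $\varphi \in \Phi_W$ holds exactly when $W$ contains the free variables of $\varphi$; writing $U_0$ for that set of free variables (which lies in $\C$ by downward closure and satisfies $U_0 \subseteq U$), the no-signalling property lets me descend via \eqref{item:pre-model.down} from $A \vDash_U \varphi$ to $A \vDash_{U_0} \varphi$, and then \eqref{item:pre-model.up} lets me ascend to $A \vDash_W \varphi$ for every such $W$. This is precisely $A \vDash \varphi$.

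For the inductive case \eqref{item:inchworm.inductive}, the derivation of $\Gamma, \Delta \vdash_\C \psi$ comes from $\Gamma \vdash_\C \varphi$ and $\Delta, \varphi \vdash_\C \psi$. Given a no-signalling model $A$ of $\Gamma \cup \Delta$, it is a fortiori a model of $\Gamma$, so the induction hypothesis for the first premise gives $A \vDash \varphi$. Then $A$ is a model of $\Delta \cup \{ \varphi \}$, so the induction hypothesis for the second premise gives $A \vDash \psi$, as required. The two premises may be witnessed by different contexts, but this is harmless because $A \vDash \varphi$ is itself a context-independent assertion.

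The step I expect to be the crux is the cross-context transport in the base case. Clause \eqref{item:inchworm.base} only supplies satisfaction $A \vDash_U \varphi$ in the single witnessing context $U$, whereas $A \vDash \varphi$ demands it in every admissible context. This is exactly where no-signalling is used---through the descent \eqref{item:pre-model.down}---and it explains why the theorem would fail for pre-models that are not no-signalling, in accordance with the informal ``inchworm'' discussion of \autoref{sec:logic.inchworm}.
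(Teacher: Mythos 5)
Your proof is correct, but it takes a different route from the paper's. The paper proves this theorem in one line, as a corollary of the filter-model soundness theorem (\autoref{thm:soundness.filter}): every no-signalling model $A$ is a principal filter model $U \mapsto \upset{A_U}$, so soundness for filter models specializes to soundness for no-signalling models. The induction you run---base case from clause \eqref{item:inchworm.base} using a finite $\Delta \subseteq \Gamma_U$, cut case from clause \eqref{item:inchworm.inductive}---is essentially the induction the paper performs, but at the level of filter models, where the cross-context transport is absorbed into the compatibility condition $G_U = \{\, S \mid {F_{U \subseteq V}}^{-1}(S) \in G_V \,\}$ built into \autoref{def:inchworm.filter.model}. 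Your direct argument buys self-containedness and transparency: in particular, your ``descend to the free-variable context $U_0$ via \eqref{item:pre-model.down}, then ascend to an arbitrary admissible $W$ via \eqref{item:pre-model.up}'' step spells out exactly the transport that the paper's proof of \autoref{thm:soundness.filter} leaves implicit when it concludes $G \vDash \varphi$ from membership $\Scott{\varphi}_U \in G_U$ at the single witnessing context, and your use of a finite $\Delta \subseteq \Gamma_U$ is slightly more careful than the paper's appeal to $\bigwedge_{\psi \in \Gamma_U} \Scott{\psi}_U \in G_U$, since filters are closed only under finite meets. What the paper's route buys in exchange is generality: a single induction establishes the strictly stronger filter-model theorem, from which the present statement, and later the completeness-transfer results (\autoref{thm:completeness.transfer.model}, \autoref{thm:completeness.transfer.filter}), all follow; your argument would have to be redone separately for filter models. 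Your closing diagnosis of where no-signalling enters---only in the descent step \eqref{item:pre-model.down} of the base case---is accurate and matches the paper's discussion in \autoref{sec:general.interpretation}.
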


\begin{proof}
Immediate from \autoref{thm:soundness.filter}, since every model is a filter model.
\end{proof}

\addtocounter{theorem}{1}

\begin{fact}
$\MM[F]{\Gamma}$ is the largest subpresheaf $A$ of $F$ such that $A \vDash_U \Gamma_U$ for each $U \in \C$.
\end{fact}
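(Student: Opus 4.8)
The plan is to unpack ``largest subpresheaf $A$ with $A \vDash_U \Gamma_U$ for each $U$'' into three claims and verify each: that $\MM[F]{\Gamma}$ is genuinely a subpresheaf of $F$; that it satisfies $\MM[F]{\Gamma} \vDash_U \Gamma_U$ for every $U \in \C$; and that it contains every competing subpresheaf. Two of these are immediate from the defining formula $\MM[F]{\Gamma}_U = \bigwedge_{\varphi \in \Gamma_U} \Scott{\varphi}_U$. Since a meet sits below each of its factors, $\MM[F]{\Gamma}_U \leqslant \Scott{\psi}_U$ for every $\psi \in \Gamma_U$, which is exactly $\MM[F]{\Gamma} \vDash_U \Gamma_U$. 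And if $A$ is any subpresheaf of $F$ with $A \vDash_U \Gamma_U$, then $A_U \leqslant \Scott{\psi}_U$ for all $\psi \in \Gamma_U$, hence $A_U \leqslant \bigwedge_{\varphi \in \Gamma_U} \Scott{\varphi}_U = \MM[F]{\Gamma}_U$; as subpresheaf containment is tested pointwise, $A \leqslant \MM[F]{\Gamma}$.

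The substantive point is that $\MM[F]{\Gamma}$ is a subpresheaf at all, and here I would appeal to the criterion of \autoref{thm:no-signalling.subpresheaf}: the family $(\MM[F]{\Gamma}_U)_{U \in \C}$ forms a subpresheaf of $F$ iff $\MM[F]{\Gamma}_V \leqslant {F_{U \subseteq V}}^{-1}(\MM[F]{\Gamma}_U)$ whenever $U \subseteq V \in \C$. To evaluate the right-hand side I would use two facts. First, ${F_{U \subseteq V}}^{-1}$ preserves finite meets, being the right adjoint in $\exists_{F_{U \subseteq V}} \dashv {F_{U \subseteq V}}^{-1}$. Second, the weakening identity $\Scott{\varphi}_V = {F_{U \subseteq V}}^{-1}(\Scott{\varphi}_U)$, valid for every $\varphi \in \Phi_U$ (already invoked in \eqref{item:pre-model.up}) and available because $U \subseteq V$ forces $\Phi_U \subseteq \Phi_V$, so that a context-$U$ formula is also a context-$V$ formula. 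Together these give
\[
{F_{U \subseteq V}}^{-1}\bigl(\MM[F]{\Gamma}_U\bigr) = \bigwedge_{\varphi \in \Gamma_U} {F_{U \subseteq V}}^{-1}(\Scott{\varphi}_U) = \bigwedge_{\varphi \in \Gamma_U} \Scott{\varphi}_V .
\]

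It then remains to compare this against $\MM[F]{\Gamma}_V = \bigwedge_{\psi \in \Gamma_V} \Scott{\psi}_V$. The decisive observation is that $\Phi_U \subseteq \Phi_V$ yields $\Gamma_U = \Gamma \cap \Phi_U \subseteq \Gamma \cap \Phi_V = \Gamma_V$, so the meet defining $\MM[F]{\Gamma}_V$ ranges over a \emph{larger} index set; hence $\MM[F]{\Gamma}_V \leqslant \bigwedge_{\varphi \in \Gamma_U} \Scott{\varphi}_V = {F_{U \subseteq V}}^{-1}(\MM[F]{\Gamma}_U)$, which is what the subpresheaf criterion demands. ($\C$-finiteness of $\Gamma$ makes each $\Gamma_U$ finite, so these meets exist in $\DD$, with the empty meet read as the top subobject $F_U$.) I expect the only real friction to be marshalling the weakening identity together with the monotonicity of meets under enlarging the index set in the right order; once those are in hand, the subpresheaf condition and the extremality both fall out formally.
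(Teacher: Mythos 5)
Your proposal is correct and follows essentially the same route as the paper's own proof: both reduce the subpresheaf condition to the criterion of \autoref{thm:no-signalling.subpresheaf} and verify it by combining $\Gamma_U \subseteq \Gamma_V$, the weakening identity $\Scott{\varphi}_V = {F_{U \subseteq V}}^{-1}\Scott{\varphi}_U$, and meet-preservation of the right adjoint ${F_{U \subseteq V}}^{-1}$, after which modelling and maximality fall out pointwise from the defining meet exactly as you argue. Your parenthetical remark on $\C$-finiteness guaranteeing the existence of the meets (with empty meet read as $F_U$) is a small point the paper leaves implicit, but otherwise the two arguments coincide.
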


\begin{proof}
Let $U \subseteq V \in \C$.
Then,
because $\Gamma_U \subseteq \Gamma_V$,
because $\Scott{\varphi}_V = {F_{U \subseteq V}}^{-1} \Scott{\varphi}_U$ for each $\varphi \in \Phi_U \subseteq \Phi_V$,
and
because the right adjoint ${F_{U \subseteq V}}^{-1}$ preserves meets,
we have
\begin{align*}
\MM[F]{\Gamma}_V
= \bigwedge_{\varphi \in \Gamma_V} \Scott{\varphi}_V
\leqslant \bigwedge_{\varphi \in \Gamma_U} \Scott{\varphi}_V
& = \bigwedge_{\varphi \in \Gamma_U} {F_{U \subseteq V}}^{-1} \Scott{\varphi}_U \\
& = {F_{U \subseteq V}}^{-1}(\bigwedge_{\varphi \in \Gamma_U} \Scott{\varphi}_U)
= {F_{U \subseteq V}}^{-1}(\MM[F]{\Gamma}_U) .
\end{align*}
Thus $\MM[F]{\Gamma}$ is a subpresheaf of $F$ by \autoref{thm:no-signalling.subpresheaf}, and $\varphi \in \Gamma_U$ implies $\MM[F]{\Gamma}_U = \bigwedge_{\psi \in \Gamma_U} \Scott{\psi}_U \leqslant \Scott{\varphi}_U$, i.e.\ $\MM[F]{\Gamma} \vDash_U \varphi$.
Now, let $A$ be a subpresheaf of $F$ such that $A \vDash_U \Gamma_U$ for every $U \in \C$.
Then, for every $U \in \C$ it has $A_U \leqslant \bigwedge_{\varphi \in \Gamma_U} \Scott{\varphi}_U = \MM[F]{\Gamma}_U$.
Thus $A \leqslant \MM[F]{\Gamma}$.
\end{proof}

\begin{fact}
Let $(\Scott{-}, F)$ be an interpretation of $\L_\C$ that models a theory $\vdash$ in $\L$.
We say $\Gamma \subseteq \Phi_\C$ is \emph{inchworm-saturated} if $\Gamma_V \vdash \varphi$ implies $\Gamma_U \vdash \exists_{V \setminus U} \ldot \varphi$ whenever $U \subseteq V \in \C$ and $\varphi \in \Phi_V$.
Now, if a $\C$-finite $\Gamma$ is inchworm-saturated, then $\MM[F]{\Gamma}$ is no-signalling.
\end{fact}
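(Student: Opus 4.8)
The plan is to verify the characterization of no-signalling in \autoref{thm:no-signalling.subpresheaf}. Since \autoref{thm:description.pre-model} already tells us that $\MM[F]{\Gamma}$ is a subpresheaf of $F$, the same \autoref{thm:no-signalling.subpresheaf} gives us one inequality for free, namely $\exists_{F_{U \subseteq V}}(\MM[F]{\Gamma}_V) \leqslant \MM[F]{\Gamma}_U$ for every $U \subseteq V \in \C$. So it remains only to establish the reverse inequality $\MM[F]{\Gamma}_U \leqslant \exists_{F_{U \subseteq V}}(\MM[F]{\Gamma}_V)$; the two together give the equality $\exists_{F_{U \subseteq V}}(\MM[F]{\Gamma}_V) = \MM[F]{\Gamma}_U$ that characterizes no-signalling.

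First I would fix $U \subseteq V \in \C$ and abbreviate $f = F_{U \subseteq V} : F_V \to F_U$. The crucial use of $\C$-finiteness is that $\Gamma_V$ is finite, so I may bundle it into a single regular formula $\gamma_V = \bigwedge_{\varphi \in \Gamma_V} \varphi \in \Phi_V$, for which $\Scott{\gamma_V}_V = \bigwedge_{\varphi \in \Gamma_V} \Scott{\varphi}_V = \MM[F]{\Gamma}_V$ and $\Gamma_V \vdash \gamma_V$ by regular logic (in the degenerate case $\Gamma_V = \varnothing$ this reads $\gamma_V = \top$, $\MM[F]{\Gamma}_V = F_V$, and $\Gamma_V \vdash \top$). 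Applying inchworm-saturation with this $\varphi = \gamma_V$ then yields $\Gamma_U \vdash \exists_{V \setminus U} \ldot \gamma_V$.

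Next I would translate this entailment into a containment of subobjects. Since $\exists_{V \setminus U}$ is interpreted by $\exists_f$, we have $\Scott{\exists_{V \setminus U} \ldot \gamma_V}_U = \exists_f(\Scott{\gamma_V}_V) = \exists_f(\MM[F]{\Gamma}_V)$. Because $\Scott{-}$ models $\vdash$, the sequent $\Gamma_U \vdash \exists_{V \setminus U} \ldot \gamma_V$ provides a finite $\Delta \subseteq \Gamma_U$ with $\bigwedge_{\psi \in \Delta} \Scott{\psi}_U \leqslant \exists_f(\MM[F]{\Gamma}_V)$. Since $\Delta \subseteq \Gamma_U$, monotonicity of the finite meet gives $\MM[F]{\Gamma}_U = \bigwedge_{\psi \in \Gamma_U} \Scott{\psi}_U \leqslant \bigwedge_{\psi \in \Delta} \Scott{\psi}_U \leqslant \exists_f(\MM[F]{\Gamma}_V)$, which is exactly the desired inequality, completing the argument via \autoref{thm:no-signalling.subpresheaf}.

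The one point requiring care---and the reason a naive termwise approach fails---is that $\exists_f$ does not preserve meets, so one cannot compute $\exists_f(\MM[F]{\Gamma}_V) = \exists_f(\bigwedge_{\varphi} \Scott{\varphi}_V)$ out of the individual images $\exists_f(\Scott{\varphi}_V)$. Collapsing $\Gamma_V$ into the single conjunction $\gamma_V$ is precisely what sidesteps this: it lets us apply $\exists_f$ to one subobject and invoke inchworm-saturation as a single entailment. This is exactly the step where $\C$-finiteness is indispensable, since regular logic supplies only finite conjunctions.
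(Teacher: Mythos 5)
Your proof is correct and takes essentially the same route as the paper's: both collapse the finite $\Gamma_V$ into the single conjunction $\bigwedge \Gamma_V$, apply inchworm-saturation to get $\Gamma_U \vdash \exists_{V \setminus U} \ldot \bigwedge \Gamma_V$, and interpret this entailment to obtain $\MM[F]{\Gamma}_U \leqslant \exists_{F_{U \subseteq V}}(\MM[F]{\Gamma}_V)$, concluding by \autoref{thm:no-signalling.subpresheaf}. Your extra details---extracting the finite $\Delta \subseteq \Gamma_U$ from the definition of modelling a sequent, and observing that the reverse inequality comes for free from $\MM[F]{\Gamma}$ being a subpresheaf---merely make explicit what the paper leaves implicit.
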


\begin{proof}
Fix any $U \subseteq V \in \C$.
Then the inchworm saturation of $\C$-finite $\Gamma$ implies that, since $\Gamma_V \vdash \bigwedge \Gamma_V$, we have $\Gamma_U \vdash \exists_{V \setminus U} \ldot \bigwedge \Gamma_V$.
Therefore a model $\Scott{-}$ of $\vdash$ has
\begin{gather*}
\MM[F]{\Gamma}_U = \bigwedge_{\varphi \in \Gamma_U} \Scott{\varphi}_U \leqslant \Scott{\exists_{V \setminus U} \ldot \bigwedge \Gamma_V}_U = \exists_{F_{U \subseteq V}}(\MM[F]{\Gamma}_V) .
\end{gather*}
Thus $\MM[F]{\Gamma}$ is no-signalling by \autoref{thm:no-signalling.subpresheaf}.
\end{proof}

\begin{theorem}
Let $(\Scott{-}, F)$ be an interpretation of $\L_\C$ that models a theory $\vdash$ in $\L$.
Given $\Gamma \subseteq \Phi_\C$, suppose there is a $\C$-finite and inchworm-saturated $\Delta \subseteq \Phi_\C$ such that $\Gamma \subseteq \Delta$ and $\Gamma \vdash_\C \varphi$ for all $\varphi \in \Delta$.
Then $\MM[F]{\Delta}$ is the largest no-signalling subpresheaf of $\MM[F]{\Gamma}$.
\end{theorem}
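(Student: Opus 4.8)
The plan is to verify three things in turn: that $\MM[F]{\Delta}$ is no-signalling, that it is contained in $\MM[F]{\Gamma}$, and that it dominates every no-signalling subpresheaf of $\MM[F]{\Gamma}$. First I would note that $\Gamma \subseteq \Delta$ with $\Delta$ being $\C$-finite forces each $\Gamma_U \subseteq \Delta_U$ to be finite, so $\Gamma$ is itself $\C$-finite and $\MM[F]{\Gamma}$ is well-defined. Since $\Delta$ is $\C$-finite and inchworm-saturated, \autoref{thm:description.model} gives at once that $\MM[F]{\Delta}$ is no-signalling. The containment $\MM[F]{\Delta} \leqslant \MM[F]{\Gamma}$ then follows because $\Gamma_U \subseteq \Delta_U$ makes the meet $\MM[F]{\Delta}_U = \bigwedge_{\varphi \in \Delta_U} \Scott{\varphi}_U$ range over a larger index set than $\MM[F]{\Gamma}_U = \bigwedge_{\varphi \in \Gamma_U} \Scott{\varphi}_U$, so that $\MM[F]{\Delta}_U \leqslant \MM[F]{\Gamma}_U$ for every $U \in \C$; as both are subpresheaves of $F$, this componentwise containment is a genuine subpresheaf inclusion.

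The heart of the argument is maximality. Let $B$ be an arbitrary no-signalling subpresheaf of $\MM[F]{\Gamma}$; the goal is $B \leqslant \MM[F]{\Delta}$, i.e.\ $B_U \leqslant \MM[F]{\Delta}_U$ for each $U \in \C$. The first step is to certify $B$ as a no-signalling model of $\Gamma$ in the sense of \autoref{def:inchworm.model}. By \autoref{thm:description.pre-model} we have $\MM[F]{\Gamma} \vDash_U \Gamma_U$ for each $U$, so $B \leqslant \MM[F]{\Gamma}$ together with monotonicity of pre-models (\autoref{thm:subpresheaf.pre-model}) yields $B \vDash_U \Gamma_U$ for each $U$. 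Since $\Gamma_U = \Gamma \cap \Phi_U$, a formula $\varphi \in \Gamma$ lies in $\Gamma_U$ for exactly those $U$ with $\varphi \in \Phi_U$, so that $B \vDash_U \Gamma_U$ for all $U$ amounts precisely to $B \vDash \varphi$ for every $\varphi \in \Gamma$. As $B$ is no-signalling by hypothesis, it is a genuine no-signalling model of $\Gamma$.

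Now I would invoke soundness. By hypothesis $\Gamma \vdash_\C \varphi$ for every $\varphi \in \Delta$, so \autoref{thm:soundness.model} applied to the no-signalling model $B$ delivers $B \vDash \varphi$ for every $\varphi \in \Delta$. Unfolding $B \vDash \varphi$ for $\varphi \in \Delta_U$ (so $\varphi \in \Phi_U$) gives $B_U \leqslant \Scott{\varphi}_U$, and taking the meet over all $\varphi \in \Delta_U$ yields $B_U \leqslant \bigwedge_{\varphi \in \Delta_U} \Scott{\varphi}_U = \MM[F]{\Delta}_U$. Since $U$ was arbitrary, $B \leqslant \MM[F]{\Delta}$, which is the required maximality.

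The step I expect to require the most care is the certification of $B$ as a model to which soundness applies, and in particular the correct accounting of the no-signalling hypothesis. Containment in the pre-model $\MM[F]{\Gamma}$ together with monotonicity already makes $B$ a pre-model of every formula of $\Gamma$, independently of no-signalling; what the no-signalling hypothesis on $B$ actually buys is membership in the class of models for which \autoref{thm:soundness.model} is valid, so that inchworm-derivability $\Gamma \vdash_\C \varphi$ can be transported into the semantic relation $B \vDash \varphi$ for the formulas of $\Delta$. Without no-signalling this transport can fail, exactly as exhibited in \eqref{item:pre-model.down}, so the hypothesis is doing genuine work rather than serving as a technical convenience.
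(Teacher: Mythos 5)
Your proposal is correct and takes essentially the same route as the paper's proof: \autoref{thm:description.model} gives that $\MM[F]{\Delta}$ is no-signalling, and maximality is obtained by certifying an arbitrary no-signalling subpresheaf $B \leqslant \MM[F]{\Gamma}$ as a model of $\Gamma$ via Facts \ref{thm:subpresheaf.pre-model} and \ref{thm:description.pre-model}, transporting $\Gamma \vdash_\C \varphi$ to $B \vDash \varphi$ for all $\varphi \in \Delta$ by soundness (\autoref{thm:soundness.model}), and concluding $B \leqslant \MM[F]{\Delta}$. The only cosmetic difference is that you establish $\MM[F]{\Delta} \leqslant \MM[F]{\Gamma}$ by directly comparing meets over $\Gamma_U \subseteq \Delta_U$, whereas the paper invokes the largest-subpresheaf property of \autoref{thm:description.pre-model}; these are the same calculation.
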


\begin{proof}
$\MM[F]{\Delta}$ is no-signalling by \autoref{thm:description.model}.
Hence \autoref{thm:description.pre-model} implies $\MM[F]{\Delta} \vDash \Gamma \subseteq \Delta$ and hence that $\MM[F]{\Delta}$ is a subpresheaf of $\MM[F]{\Gamma}$.
Now fix any no-signalling subpresheaf $A$ of $\MM[F]{\Gamma}$.
Then $A \vDash \Gamma$ by Facts \ref{thm:subpresheaf.pre-model} and \ref{thm:description.pre-model}.
This implies $A \vDash \Delta$ by soundness (\autoref{thm:soundness.model}), since $\Gamma \vdash_\C \varphi$ for all $\varphi \in \Delta$.
Therefore $A \leqslant \MM[F]{\Delta}$ by \autoref{thm:description.pre-model}.
\end{proof}

\begin{lemma}
Suppose that a theory $\vdash$ in $\L$ satisfies \eqref{item:finite.theory}, and that $\Scott{-}$ is a conservative model of $\vdash$, meaning that, for any $\Gamma \subseteq \Phi_\C$, $\bigwedge_{\psi \in \Delta} \Scott{\psi}_U \leqslant \Scott{\varphi}_U$ for some $\Delta \subseteq \Gamma$ if but also only if $\Gamma \vdash \varphi$.
Then $\Gamma \vdash_\C \varphi$ iff $A \vDash \varphi$ for every no-signalling model $A$ of $\Gamma$ in $\Scott{-}$.
\end{lemma}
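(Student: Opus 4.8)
The statement is an equivalence, and its ``only if'' half---that $\Gamma \vdash_\C \varphi$ implies $A \vDash \varphi$ for every no-signalling model $A$ of $\Gamma$---is exactly soundness, so I would simply invoke \autoref{thm:soundness.model}. All the content lies in completeness, which I would prove by contraposition: assuming $\Gamma \nvdash_\C \varphi$ with $\varphi \in \Phi_{U_0}$, I would exhibit a single no-signalling model $A$ of $\Gamma$ in $\Scott{-}$ with $A \nvDash \varphi$.

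The plan is to build $A$ as the canonical model cut out by the $\vdash_\C$-deductive closure $\Gamma^\ast = \{\, \psi \mid \Gamma \vdash_\C \psi \,\}$, setting $A_U = \bigwedge_{\psi \in (\Gamma^\ast)_U} \Scott{\psi}_U \monoto F_U$. Since $\Gamma$ need not be $\C$-finite, the first thing to secure is that these meets exist: here I would apply \eqref{item:finite.theory} to $\Gamma^\ast$ to obtain, in each context $U$, a finite $\Delta_U \subseteq (\Gamma^\ast)_U$ that $\vdash$-generates $(\Gamma^\ast)_U$, so that by conservativity $A_U = \bigwedge_{\psi \in \Delta_U} \Scott{\psi}_U$ is a genuine finite meet. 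Monotonicity of the generators (namely $(\Gamma^\ast)_U \subseteq (\Gamma^\ast)_V$ for $U \subseteq V$) together with $\Scott{\psi}_V = {F_{U \subseteq V}}^{-1}\Scott{\psi}_U$ shows $A$ is a subpresheaf of $F$, exactly as in the proof of \autoref{thm:description.pre-model}, and $A \vDash \Gamma$ is immediate since $\Gamma_U \subseteq (\Gamma^\ast)_U$.

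The crux is that $A$ is no-signalling, for which I would verify that $\Gamma^\ast$ is inchworm-saturated and then reuse the computation in \autoref{thm:description.model}. Saturation follows from the inchworm rules themselves: if $(\Gamma^\ast)_V \vdash \chi$ with $\chi \in \Phi_V$ and $U \subseteq V$, then since $\chi \vdash \exists_{V \setminus U} \ldot \chi$ in regular logic we get $(\Gamma^\ast)_V \vdash \exists_{V \setminus U} \ldot \chi$ by cut, and because $\exists_{V \setminus U} \ldot \chi \in \Phi_U \subseteq \Phi_V$, rule \eqref{item:inchworm.base} gives $\Gamma^\ast \vdash_\C \exists_{V \setminus U} \ldot \chi$, whence $\exists_{V \setminus U} \ldot \chi \in (\Gamma^\ast)_U$ and a fortiori $(\Gamma^\ast)_U \vdash \exists_{V \setminus U} \ldot \chi$. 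Applying this to $\chi = \bigwedge \Delta_V$ and using soundness of the interpretation of $\exists$ yields $A_U \leqslant \Scott{\exists_{V \setminus U} \ldot \bigwedge \Delta_V}_U = \exists_{F_{U \subseteq V}}(A_V)$; combined with the subpresheaf inequality $\exists_{F_{U \subseteq V}}(A_V) \leqslant A_U$ this gives $A_U = \exists_{F_{U \subseteq V}}(A_V)$, i.e.\ no-signalling by \autoref{thm:no-signalling.subpresheaf}.

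It remains to decode. From $A \vDash \varphi$ we obtain $A_{U_0} = \bigwedge_{\psi \in \Delta_{U_0}} \Scott{\psi}_{U_0} \leqslant \Scott{\varphi}_{U_0}$, so the ``only if'' half of conservativity, applied to the finite $\Delta_{U_0} \subseteq (\Gamma^\ast)_{U_0}$, gives $\Delta_{U_0} \vdash \varphi$, and rule \eqref{item:inchworm.base} then yields $\Delta_{U_0} \vdash_\C \varphi$. Since $\Delta_{U_0} \subseteq \Gamma^\ast$, each of its finitely many members $\psi$ satisfies $\Gamma \vdash_\C \psi$, so finitely many applications of the cut rule \eqref{item:inchworm.inductive} give $\Gamma \vdash_\C \varphi$, contradicting the hypothesis; hence $A \nvDash \varphi$. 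The main obstacle I anticipate is precisely the non-$\C$-finiteness of $\Gamma$: without \eqref{item:finite.theory} the defining meets for $A$ need not exist in a general regular category and \autoref{thm:inchworm.saturation} does not apply off the shelf, so the care in replacing each $(\Gamma^\ast)_U$ by its finite generators $\Delta_U$---and checking that this replacement is simultaneously compatible with no-signalling and with the conservativity decode---is where the real work sits.
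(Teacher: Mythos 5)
Your proposal is correct and takes essentially the same route as the paper: the paper's proof likewise applies \eqref{item:finite.theory} to the $\vdash_\C$-deductive closure $\Gamma^\ast$ to obtain a $\C$-finite, inchworm-saturated description whose canonical model $\MM[F]{\Delta}$ (your $A$, since the finite generators $\Delta_U$ cut out the same subobjects $\bigwedge_{\psi \in \Delta_U} \Scott{\psi}_U$) serves as the no-signalling witness, and then decodes $\MM[F]{\Delta} \vDash \varphi$ through conservativity, rule \eqref{item:inchworm.base}, and finitely many applications of the cut rule \eqref{item:inchworm.inductive} to conclude $\Gamma \vdash_\C \varphi$. The only differences are presentational: you phrase completeness contrapositively and inline the saturation and no-signalling verifications, whereas the paper delegates them to \autoref{thm:description.model} and \autoref{thm:inchworm.saturation}.
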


\begin{proof}
The ``only if'' part is soundness (\autoref{thm:soundness.model}).
So, for the ``if'' part, suppose $A \vDash \varphi$ for every no-signalling model $A$ of $\Gamma$ in $\Scott{-}$.
By \eqref{item:finite.theory}, \autoref{thm:inchworm.saturation} applies and yields $\MM[F]{\Delta}$, which models $\Gamma \subseteq \Delta$.
Hence $\FiltMM[F]{\Delta} \vDash \varphi$.
This means that, for $U \in \C$ such that $\varphi \in \Phi_U$, $\bigwedge_{\psi \in \Delta_U} \Scott{\psi}_U = \FiltMM[F]{\Delta}_U \leqslant \Scott{\varphi}_U$.
Therefore $\Delta_U \vdash \varphi$ by the conservativity of $\Scott{-}$ and so $\Delta_U \vdash_\C \varphi$.
Then, since $\Gamma \vdash_\C \psi$ for all $\psi \in \Delta_U$, we have $\Gamma \vdash_\C \varphi$.
\end{proof}

\begin{theorem}
Let $\vdash$ be a regular theory satisfying \eqref{item:finite.theory}.
Then, for any $\Gamma \subseteq \Phi_\C$, $\Gamma \vdash_\C \varphi$ iff $A \vDash \varphi$ for every no-signalling model $A$ of $\Gamma$ in every model $\Scott{-}$ of $\vdash$ in any regular category.
\end{theorem}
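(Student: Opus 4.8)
The plan is to obtain the theorem as an immediate consequence of the completeness-transfer lemma \autoref{thm:completeness.transfer.model}, combined with the classical fact that every regular theory admits a conservative model. The biconditional splits cleanly into two directions: soundness settles ``only if'' directly, while the transfer lemma settles ``if'' once a single suitable model is exhibited.

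For the ``only if'' direction, assume $\Gamma \vdash_\C \varphi$. Fix any model $\Scott{-}$ of $\vdash$ in any regular category $\DD$, and any no-signalling model $A$ of $\Gamma$ in $\Scott{-}$. Then $A \vDash \varphi$ by soundness (\autoref{thm:soundness.model}). Since $\Scott{-}$, $\DD$, and $A$ were arbitrary, this is exactly the right-hand side of the biconditional.

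For the ``if'' direction, the key observation is that it suffices to instantiate the universally quantified hypothesis at one well-chosen model. Since $\vdash$ is a regular theory, it has a classifying (syntactic) category, which is itself regular and carries a conservative model $\Scott{-}$ of $\vdash$ (e.g.\ \cite[Proposition 6.4]{but98}); conservativity here means precisely that $\bigwedge_{\psi \in \Delta} \Scott{\psi}_U \leqslant \Scott{\varphi}_U$ for some $\Delta \subseteq \Gamma$ if and only if $\Gamma \vdash \varphi$. Now suppose $A \vDash \varphi$ for every no-signalling model $A$ of $\Gamma$ in every model of $\vdash$ in any regular category. Specializing this hypothesis to the conservative model $\Scott{-}$ just exhibited, we obtain that $A \vDash \varphi$ for every no-signalling model $A$ of $\Gamma$ in $\Scott{-}$. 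Because $\vdash$ satisfies \eqref{item:finite.theory} and $\Scott{-}$ is conservative, both hypotheses of \autoref{thm:completeness.transfer.model} are met, so the lemma applies and yields $\Gamma \vdash_\C \varphi$.

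The only nontrivial ingredient is the existence of a conservative model for a regular theory, which is a classical result of categorical logic and is cited rather than reproved. There is thus no genuine obstacle internal to this proof: the substantive work already resides in the transfer lemma (which itself rests on \autoref{thm:inchworm.saturation}) and in the literature on classifying categories. The single point worth stating carefully is that the quantification over \emph{all} regular-category models in the hypothesis is exactly what licenses picking out the specific conservative model needed to invoke \autoref{thm:completeness.transfer.model}.
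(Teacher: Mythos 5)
Your proposal is correct and matches the paper's proof exactly: the ``only if'' direction is soundness (\autoref{thm:soundness.model}), and the ``if'' direction instantiates the universally quantified hypothesis at a conservative model of the regular theory $\vdash$ in its classifying category (\cite[Proposition 6.4]{but98}) and then invokes \autoref{thm:completeness.transfer.model}, whose hypotheses are supplied by \eqref{item:finite.theory} and conservativity. Your writeup merely spells out the quantifier bookkeeping that the paper leaves implicit; no further comment is needed.
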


\begin{proof}
The ``only if'' part is soundness (\autoref{thm:soundness.model}).
The ``if'' part is by \autoref{thm:completeness.transfer.model}, since $\vdash$ has a conservative model in some regular category (e.g.\ \cite[Proposition 6.4]{but98}).
\end{proof}

\addtocounter{theorem}{2}

\begin{theorem}
Let $\Scott{-}$ be a model of a theory $\vdash$ in $\L$.
Then the inchworm logic $\vdash_\C$ of $\vdash$ is sound with respect to the filter models in $\Scott{-}$:
If $\Gamma \vdash_\C \varphi$, then $G \vDash \varphi$ for every filter model $G$ of $\Gamma$ in $\Scott{-}$.
\end{theorem}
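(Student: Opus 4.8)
The plan is to induct on the inductive generation of $\vdash_\C$ given in \autoref{def:inchworm.logic}, taking the base clause \eqref{item:inchworm.base} and the cut clause \eqref{item:inchworm.inductive} as the two cases. Throughout, fix a filter model $G$ of $\Gamma$, meaning $G \vDash \gamma$ for every $\gamma \in \Gamma$. The cut clause is essentially bookkeeping: if $\Gamma, \Delta \vdash_\C \psi$ is derived from $\Gamma \vdash_\C \varphi$ and $\Delta, \varphi \vdash_\C \psi$, then since $\Gamma \subseteq \Gamma \cup \Delta$ the model $G$ is in particular a filter model of $\Gamma$, so the inductive hypothesis gives $G \vDash \varphi$; hence $G$ is a filter model of $\Delta \cup \{ \varphi \}$, and a second appeal to the hypothesis on the premise $\Delta, \varphi \vdash_\C \psi$ yields $G \vDash \psi$. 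So all the real work lies in the base clause.

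For the base clause, suppose $\varphi \in \Phi_U$ and $\Gamma_U \vdash \varphi$ for some $U \in \C$. Since $\Scott{-}$ models $\vdash$, some finite $\Delta \subseteq \Gamma_U$ satisfies $\bigwedge_{\psi \in \Delta} \Scott{\psi}_U \leqslant \Scott{\varphi}_U$. Each such $\psi$ lies in $\Gamma$ with $\psi \in \Phi_U$, so $\Scott{\psi}_U \in G_U$; as $G_U$ is a filter in $\Sub_\DD(F_U)$, it is closed under finite meets (the empty meet being the top $F_U$, which every filter contains) and upward closed, so $\Scott{\varphi}_U \in G_U$. This settles the single context $U$. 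But $G \vDash \varphi$ in the sense of \autoref{def:inchworm.filter.model} demands $\Scott{\varphi}_{U'} \in G_{U'}$ for \emph{every} $U' \in \C$ with $\varphi \in \Phi_{U'}$, so a propagation step is still needed.

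The crux is therefore to carry membership from $U$ to an arbitrary such $U'$, and this is precisely where the two defining equalities of a filter model — the inchworm's locomotion — come in. I would put $W = U \cap U'$; then $W \in \C$ by downward closure, and $\varphi \in \Phi_W$ since the free variables of $\varphi$ lie in both $U$ and $U'$. Moving down from $U$ to $W$, the equality $G_W = \{\, \exists_{F_{W \subseteq U}}(S) \mid S \in G_U \,\}$ gives $\exists_{F_{W \subseteq U}}(\Scott{\varphi}_U) \in G_W$; since $\Scott{\varphi}_U = {F_{W \subseteq U}}^{-1} \Scott{\varphi}_W$ and the counit of $\exists_{F_{W \subseteq U}} \dashv {F_{W \subseteq U}}^{-1}$ gives $\exists_{F_{W \subseteq U}}({F_{W \subseteq U}}^{-1} \Scott{\varphi}_W) \leqslant \Scott{\varphi}_W$, upward closure of $G_W$ yields $\Scott{\varphi}_W \in G_W$. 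Moving up from $W$ to $U'$, the other equality $G_W = \{\, S \mid {F_{W \subseteq U'}}^{-1}(S) \in G_{U'} \,\}$ gives ${F_{W \subseteq U'}}^{-1}(\Scott{\varphi}_W) = \Scott{\varphi}_{U'} \in G_{U'}$, as required. I expect this down-then-up step to be the main obstacle: unlike the no-signalling case one cannot assume $\exists_f {f}^{-1} = 1$, so the argument must route through the counit inequality together with the filter's upward closure, and it is exactly the filter-model axioms — not mere subpresheaf coherence — that license passing information through the intersection $U \cap U'$ and back out.
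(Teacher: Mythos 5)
Your proof is correct and follows essentially the same route as the paper's: induction over the two clauses of \autoref{def:inchworm.logic}, with the filter's closure under finite meets and upward closure handling the base case and pure bookkeeping handling cut, your only refinements being the explicitly finite $\Delta \subseteq \Gamma_U$ (which the paper glosses as $\bigwedge_{\psi \in \Gamma_U} \Scott{\psi}_U \in G_U$) and the propagation from the witnessing context $U$ to every $U'$ with $\varphi \in \Phi_{U'}$, which the paper compresses into ``Thus $G \vDash \varphi$.'' That propagation step is sound as you give it, though it can be shortened: the equality $G_W = \{\, S \monoto F_W \mid {F_{W \subseteq U}}^{-1}(S) \in G_U \,\}$ applied with $S = \Scott{\varphi}_W$ yields $\Scott{\varphi}_W \in G_W$ directly, since ${F_{W \subseteq U}}^{-1}\Scott{\varphi}_W = \Scott{\varphi}_U \in G_U$, so the detour through $\exists_{F_{W \subseteq U}}$, the counit inequality, and upward closure is unnecessary.
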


\begin{proof}
Assume the antecedent of \eqref{item:inchworm.base}, that there is $U \in \C$ such that $\varphi \in \Phi_U$ and $\Gamma_U \vdash \varphi$.
Fix any filter model $G$ of $\Gamma$ in $(\Scott{-}, F)$.
Then $\Scott{\psi}_U \in G_U$ for all $\psi \in \Gamma_U$, which implies $\bigwedge_{\psi \in \Gamma_U} \Scott{\psi}_U \in G_U$ because $G_U$ is a filter.
Yet $\Gamma_U \vdash \varphi$ means $\bigwedge_{\psi \in \Gamma_U} \Scott{\psi}_U \leqslant \Scott{\varphi}_U$.
Therefore $\Scott{\varphi}_U \in G_U$ since $G_U$ is closed upward.
Thus $G \vDash \varphi$.

Suppose that $G \vDash \varphi$ for every filter model $G$ of $\Gamma$ in $(\Scott{-}, F)$, and that $G \vDash \psi$ for every filter model $G$ of $\Delta$ in $(\Scott{-}, F)$.
Then every filter model $G$ of $\Gamma \cup \Delta$ in $(\Scott{-}, F)$ has $G \vDash \varphi$ and so $G \vDash \psi$.
Therefore, the induction with \eqref{item:inchworm.base} and \eqref{item:inchworm.inductive} shows the theorem.
\end{proof}

\begin{fact}
Let $(\Scott{-}, F)$ be a model of a theory $\vdash$ in $\L$.
Given any $\Gamma \subseteq \Phi_\C$, the family $\FiltMM[F]{\Gamma} = (\{\, S \monoto F_U \mid \Scott{\varphi}_U \leqslant S$ for some $\varphi \in {\Gamma^\ast}_U \,\})_{U \in \C}$ is a filter model of $\Gamma$ in $(\Scott{-}, F)$.
Moreover, for any filter model $G$ of $\Gamma$ in $(\Scott{-}, F)$, $\FiltMM[F]{\Gamma}_U \subseteq G_U$ for each $U \in \C$.
\end{fact}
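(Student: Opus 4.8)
The plan is to prove the two assertions separately. I would dispatch the minimality clause first, as it is essentially soundness: since each $G_U$ is a filter and hence upward closed, to show $\FiltMM[F]{\Gamma}_U \subseteq G_U$ it is enough to check that $\Scott{\varphi}_U \in G_U$ for every $\varphi \in \Gamma^\ast_U$. But $\varphi \in \Gamma^\ast_U$ means $\Gamma \vdash_\C \varphi$, and $G$ is assumed to be a filter model of $\Gamma$, so \autoref{thm:soundness.filter} gives $G \vDash \varphi$, i.e.\ $\Scott{\varphi}_U \in G_U$; upward closure then absorbs any $S \geqslant \Scott{\varphi}_U$. The same observation, applied to $G = \FiltMM[F]{\Gamma}$ and using $\Gamma \subseteq \Gamma^\ast$, shows $\FiltMM[F]{\Gamma} \vDash \Gamma$, so it only remains to verify that $\FiltMM[F]{\Gamma}$ is a filter model in the sense of \autoref{def:inchworm.filter.model}.

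For the filter axioms at each $U \in \C$, upward closure is immediate from the definition. The top $F_U$ belongs to $\FiltMM[F]{\Gamma}_U$ because $\top \in \Gamma^\ast_U$ (one has $\Gamma_U \vdash \top$ with $\top \in \Phi_U$, so $\Gamma \vdash_\C \top$ by \eqref{item:inchworm.base}) and $\Scott{\top}_U = F_U$. Closure under binary meet is the one spot where I would invoke the cut rule \eqref{item:inchworm.inductive}: given witnesses $\varphi, \varphi' \in \Gamma^\ast_U$, the base clause gives $\{\varphi, \varphi'\} \vdash_\C \varphi \wedge \varphi'$ (as $\varphi \wedge \varphi' \in \Phi_U$), and two cuts against $\Gamma \vdash_\C \varphi$ and $\Gamma \vdash_\C \varphi'$ yield $\Gamma \vdash_\C \varphi \wedge \varphi'$, whence $\Scott{\varphi \wedge \varphi'}_U = \Scott{\varphi}_U \wedge \Scott{\varphi'}_U$ witnesses the meet of the two given subobjects.

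Next I would verify the coherence clause for $U \subseteq V \in \C$, writing $f = F_{U \subseteq V}$ and leaning on the adjunction $\exists_f \dashv f^{-1}$ together with the semantic identities $\Scott{\varphi}_V = f^{-1}(\Scott{\varphi}_U)$ for $\varphi \in \Phi_U$ and $\Scott{\exists_{V \setminus U} \ldot \psi}_U = \exists_f(\Scott{\psi}_V)$. Two logical facts feed this: that $\Gamma^\ast_U \subseteq \Gamma^\ast_V$, since $\Phi_U \subseteq \Phi_V$; and that $\psi \in \Gamma^\ast_V$ implies $\exists_{V \setminus U} \ldot \psi \in \Gamma^\ast_U$, which follows because the unit of $\exists_{V \setminus U} \dashv i$ gives $\psi \vdash_V \exists_{V \setminus U} \ldot \psi$, hence $\{\psi\} \vdash_\C \exists_{V \setminus U} \ldot \psi$ by \eqref{item:inchworm.base}, and cut against $\Gamma \vdash_\C \psi$ then gives $\Gamma \vdash_\C \exists_{V \setminus U} \ldot \psi$. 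Transporting witnesses across the adjunction, these yield the first equality $\FiltMM[F]{\Gamma}_U = \{\, S \mid f^{-1}(S) \in \FiltMM[F]{\Gamma}_V \,\}$: indeed $f^{-1}(S) \in \FiltMM[F]{\Gamma}_V$ iff $\Scott{\psi}_V \leqslant f^{-1}(S)$ for some $\psi \in \Gamma^\ast_V$, iff $\Scott{\exists_{V \setminus U} \ldot \psi}_U \leqslant S$ with $\exists_{V \setminus U} \ldot \psi \in \Gamma^\ast_U$, while conversely a witness $\varphi \in \Gamma^\ast_U$ for $S$ also lies in $\Gamma^\ast_V$ and witnesses $f^{-1}(S)$ via $\Scott{\varphi}_V = f^{-1}(\Scott{\varphi}_U)$.

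The step I expect to be the main obstacle is the \emph{second} equality of the coherence clause, i.e.\ the surjectivity of $S \mapsto \exists_f(S)$, namely $\FiltMM[F]{\Gamma}_U = \{\, \exists_f(S) \mid S \in \FiltMM[F]{\Gamma}_V \,\}$. The inclusion $\supseteq$ is formal, from $\exists_f \dashv f^{-1}$ and upward closure; for $\subseteq$ one must, given $T \in \FiltMM[F]{\Gamma}_U$, produce a preimage, and the natural candidate $S = f^{-1}(T)$ (already known to lie in $\FiltMM[F]{\Gamma}_V$) satisfies, by Frobenius reciprocity, $\exists_f(f^{-1}(T)) = T \wedge \exists_f(F_V)$. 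This recovers $T$ precisely when $\exists_f(F_V) = F_U$, that is, when the sheaf projection $f = F_{U \subseteq V}$ is a regular epi. I would therefore isolate and use the property that the restriction maps of the type-sheaf $F : \bar{x} \mapsto \prod_{x \in \bar{x}} \Scott{T_x}$ are regular epis --- equivalently, that $F$ is no-signalling --- which in a regular category holds once each $\Scott{T_x}$ is well-supported, since product projections are pullbacks of ${!} : \prod_{x \in V \setminus U} \Scott{T_x} \to 1$ and finite products of well-supported objects are well-supported. Granting this, $\exists_f \cmp f^{-1} = 1$ on all of $\Sub_\DD(F_U)$ and the surjectivity follows; everything else reduces to bookkeeping with the adjunction and the inchworm rules \eqref{item:inchworm.base}--\eqref{item:inchworm.inductive}.
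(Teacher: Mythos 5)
Your proof is correct, and on the parts the paper writes out it coincides with the paper's argument. The paper verifies exactly what you do for the filter axioms (upward closure by definition, $\top \in \FiltMM[F]{\Gamma}_U$ via $\Gamma \vdash_\C \top$, and binary meets via $\varphi \wedge \psi \in {\Gamma^\ast}_U$, which is justified precisely by your two cuts of \eqref{item:inchworm.inductive} against the base instance \eqref{item:inchworm.base} of $\varphi, \psi \vdash \varphi \wedge \psi$), and it proves the first coherence equality by the same two-way argument as yours: ${\Gamma^\ast}_U \subseteq {\Gamma^\ast}_V$ together with $\Scott{\varphi}_V = {F_{U \subseteq V}}^{-1}\Scott{\varphi}_U$ in one direction, and the adjunction $\exists_{F_{U \subseteq V}} \dashv {F_{U \subseteq V}}^{-1}$ together with $\exists_{V \setminus U} \ldot \psi \in {\Gamma^\ast}_U$ in the other. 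Where you go beyond the paper is in completeness: the paper's proof stops there, checking neither that $\FiltMM[F]{\Gamma}$ models $\Gamma$, nor the minimality clause (your route via \autoref{thm:soundness.filter} is surely the intended one, and is not circular, since that soundness theorem is proved independently of this fact), nor---most importantly---the second equality in \autoref{def:inchworm.filter.model}, i.e.\ the surjectivity of $S \mapsto \exists_{F_{U \subseteq V}}(S)$.

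Your diagnosis of that last point is accurate and exposes an assumption the paper leaves implicit. Writing $f = F_{U \subseteq V}$: given the first equality and the filter axioms, the inclusion $\supseteq$ is formal, but for $\subseteq$ Frobenius reciprocity gives $\exists_f(f^{-1}(T)) = T \wedge \exists_f(F_V)$, and since $F_U = \Scott{\top}_U$ always belongs to $\FiltMM[F]{\Gamma}_U$, the second equality actually \emph{forces} $\exists_f(F_V) = F_U$, i.e.\ that $f$ be a cover. This can genuinely fail: in $\DD = \Sets$ with $U = \{x\} \subseteq V = \{x, y\} \in \C$, take $\Scott{T_x} = \2$ and $\Scott{T_y} = \varnothing$; then $F_V = \varnothing$, so $\FiltMM[F]{\Gamma}_V = \{\varnothing\}$ and $\{\, \exists_f(S) \mid S \in \FiltMM[F]{\Gamma}_V \,\} = \{\varnothing\}$, which misses $F_U \in \FiltMM[F]{\Gamma}_U$. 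Your repair is the right one: if each $\Scott{T_x}$ is well-supported, then each projection $f$, being a pullback of ${!} : \prod_{x \in V \setminus U} \Scott{T_x} \to 1$ and regular epis being pullback-stable and closed under composition in a regular category, is a cover, whence $\exists_f \cmp f^{-1} = 1$ and surjectivity follows; this hypothesis is automatic in all of the paper's examples (where $\Scott{T_x} = \2$ or $\ZZ$). So your proof is complete modulo this genuinely needed hypothesis, and it patches a step that the paper's own proof silently omits.
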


\begin{proof}
Fix $U \in \C$.
$\FiltMM[F]{\Gamma}_U$ is upward closed by definition.
$\top \in \FiltMM[F]{\Gamma}_U$ since $\Gamma \vdash_\C \top$.
Now suppose $S, S' \in \FiltMM[F]{\Gamma}_U$, i.e.\ $\Scott{\varphi}_U \leqslant S$ and $\Scott{\psi}_U \leqslant S'$ for some $\varphi, \psi \in {\Gamma^\ast}_U$.
Then $\Scott{\varphi \wedge \psi}_U \leqslant S \wedge S'$, where $\varphi \wedge \psi \in {\Gamma^\ast}_U$;
so $S \wedge S' \in \FiltMM[F]{\Gamma}_U$.
Thus $\FiltMM[F]{\Gamma}_U$ is a filter.

Fixing $U \subseteq V \in \C$, we claim $\FiltMM[F]{\Gamma}_U = \{\, S \monoto F_U \mid {F_{U \subseteq V}}^{-1}(S) \in \FiltMM[F]{\Gamma}_V \,\}$.
Suppose $S \in \FiltMM[F]{\Gamma}_U$, i.e.\ $\Scott{\varphi}_U \leqslant S$ for some $\varphi \in {\Gamma^\ast}_U$.
Then $\Scott{\varphi}_V = {F_{U \subseteq V}}^{-1} \Scott{\varphi}_U \leqslant {F_{U \subseteq V}}^{-1}(S)$ for $\varphi \in {\Gamma^\ast}_U \subseteq {\Gamma^\ast}_V$.
Thus ${F_{U \subseteq V}}^{-1}(S) \in \FiltMM[F]{\Gamma}_V$.
On the other hand, suppose ${F_{U \subseteq V}}^{-1}(S) \in \FiltMM[F]{\Gamma}_V$, i.e.\ $\Scott{\psi}_V \leqslant {F_{U \subseteq V}}^{-1}(S)$ for some $\psi \in {\Gamma^\ast}_V$.
This implies by $\exists_{F_{U \subseteq V}} \dashv {F_{U \subseteq V}}^{-1}$ that $\Scott{\exists_{V \setminus U} \ldot \psi}_U \leqslant S$ for $\exists_{V \setminus U} \ldot \psi \in {\Gamma^\ast}_U$, i.e.\ $S \in \FiltMM[F]{\Gamma}_U$.
\end{proof}

\begin{lemma}
Suppose $\Scott{-}$ is a conservative model of a theory $\vdash$ in $\L$.
Then $\Gamma \vdash_\C \varphi$ iff $G \vDash \varphi$ for every filter model $G$ of $\Gamma$ in $\Scott{-}$.
\end{lemma}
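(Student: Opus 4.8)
The plan is to mirror the structure of the proof of \autoref{thm:completeness.transfer.model}, but to exploit the filter machinery so that the assumption \eqref{item:finite.theory} and the saturation result \autoref{thm:inchworm.saturation} are no longer needed. The ``only if'' direction is immediate: it is exactly the soundness statement \autoref{thm:soundness.filter}, so nothing further is required there.

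For the ``if'' direction, I would start from the hypothesis that $G \vDash \varphi$ for every filter model $G$ of $\Gamma$ in $(\Scott{-}, F)$, and instantiate it at the canonical filter model $\FiltMM[F]{\Gamma}$, which is a filter model of $\Gamma$ by \autoref{thm:description.filter}. Hence $\FiltMM[F]{\Gamma} \vDash \varphi$. Fixing any $U \in \C$ with $\varphi \in \Phi_U$, this means $\Scott{\varphi}_U \in \FiltMM[F]{\Gamma}_U$, and unwinding the definition of $\FiltMM[F]{\Gamma}_U$ yields some $\psi \in {\Gamma^\ast}_U$ with $\Scott{\psi}_U \leqslant \Scott{\varphi}_U$.

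Then I would invoke conservativity with the singleton $\{ \psi \}$ playing the r\^ole of the premise set: taking $\Delta = \{ \psi \}$, the inequality $\Scott{\psi}_U \leqslant \Scott{\varphi}_U$ gives $\psi \vdash \varphi$. Since $\psi, \varphi \in \Phi_U$ and $U \in \C$, rule \eqref{item:inchworm.base} upgrades this single-context entailment to $\psi \vdash_\C \varphi$. On the other hand, $\psi \in {\Gamma^\ast}_U$ means precisely $\Gamma \vdash_\C \psi$. Applying the cut rule \eqref{item:inchworm.inductive} (with the empty set in the r\^ole of $\Delta$) to $\Gamma \vdash_\C \psi$ and $\psi \vdash_\C \varphi$ then delivers $\Gamma \vdash_\C \varphi$, completing the argument.

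I do not expect a serious obstacle here; the work has been front-loaded into \autoref{thm:description.filter}, whose filter $\FiltMM[F]{\Gamma}$ absorbs the possibly infinitary meet $\bigwedge_{\psi \in {\Gamma^\ast}_U} \Scott{\psi}_U$ that need not exist in a general regular category --- which is exactly why the filter completion lets us drop \eqref{item:finite.theory}. The only point demanding a little care is the bookkeeping around the $\vdash_\C$-deductive closure ${\Gamma^\ast}$ and the contexts: one must check that membership $\psi \in {\Gamma^\ast}_U$ simultaneously records $\Gamma \vdash_\C \psi$ and places $\psi$ in the same context $U$ as $\varphi$, so that the single-context entailment $\psi \vdash \varphi$ can legitimately be promoted by \eqref{item:inchworm.base} and then composed with $\Gamma \vdash_\C \psi$ via \eqref{item:inchworm.inductive}.
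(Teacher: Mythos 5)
Your proposal is correct and follows essentially the same route as the paper's own proof: both directions match exactly, with the ``only if'' part delegated to \autoref{thm:soundness.filter} and the ``if'' part obtained by instantiating at the canonical filter model $\FiltMM[F]{\Gamma}$ from \autoref{thm:description.filter}, extracting $\psi \in {\Gamma^\ast}_U$ with $\Scott{\psi}_U \leqslant \Scott{\varphi}_U$, and combining conservativity with the rules \eqref{item:inchworm.base} and \eqref{item:inchworm.inductive}. You merely make explicit the final cut step that the paper compresses into ``Thus $\Gamma \vdash_\C \varphi$'', which is a fair and accurate unwinding.
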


\begin{proof}
Since the ``only if'' part is soundness (\autoref{thm:soundness.filter}), suppose $G \vDash \varphi$ for every filter model $G$ of $\Gamma$ in $\Scott{-}$.
Then $\FiltMM[F]{\Gamma} \vDash \varphi$ in particular.
This means that, for $U \in \C$ such that $\varphi \in \Phi_U$, $\Scott{\varphi}_U \in \FiltMM[F]{\Gamma}_U$, i.e.\ $\Scott{\psi}_U \leqslant \Scott{\varphi}_U$ for some $\psi \in {\Gamma^\ast}_U$.
Therefore $\Gamma \vdash_\C \psi$ and $\psi \vdash \varphi$ by the conservativity of $\Scott{-}$.
Thus $\Gamma \vdash_\C \varphi$.
\end{proof}

\end{document}